\newtheorem{theorem}{Theorem}[section]
\newtheorem{definition}[theorem]{Definition}
\newtheorem{claim}[theorem]{Claim}
\newtheorem{lemma}[theorem]{Lemma}
\newtheorem{corollary}[theorem]{Corollary}
\newcommand{\qedsymb}{\hfill{\rule{2mm}{2mm}}}
\newenvironment{proof}[1][]{\begin{trivlist}
\item[\hspace{\labelsep}{\bf\noindent Proof#1:\/}] }{\qedsymb\end{trivlist}}
\def\calA{{\cal A}}
\def\calP{{\cal P}}
\def\calV{{\cal V}}
\def\Z{{\mathbb{Z}}}
\def\C{\mathbb{C}}
\def\R{\mathbb{R}}
\def\T{\mathbb{T}}
\def\mod{{\rm{mod}}}
\def\diam{\mbox{diam}}
\def\poly{{\rm{poly}}}
\def\dist{{\rm{dist}}}
\def\Re{\mathop{\rm Re}}
\newcommand\expo[1]{{{\mathrm{exp}}\left(#1\right)}}
\newcommand{\Exp}{\mathop{\mathrm{Exp}}}          
\def\YES{{\rm YES}}
\def\NO{{\rm NO}}
\newcommand\set[2]{\left\{ #1 \left|\; #2 \right. \right\}}
\newcommand\floor[1]{{\lfloor #1 \rfloor}}
\newcommand\round[1]{{\lfloor #1 \rceil}}
\newcommand\abs[1]{{\left| {#1} \right|}}
\newcommand\ip[1]{{\langle {#1} \rangle}}
\newcommand{\CVP}{{\mbox{\sc CVP}}}     
\newcommand{\GapCVP}{\mbox{\sc GapCVP}}     
\newcommand{\coGapCVP}{\mbox{\sc coGapCVP}}
\newcommand{\GapSVP}{\mbox{\sc GapSVP}}     
\newcommand{\SIVP}{\mbox{\sc SIVP}}
\newcommand{\GIVP}{\mbox{\sc GIVP}}
\newcommand{\SVP}{\mbox{\sc SVP}}
\newcommand{\LWE}{\mbox{\sc LWE}}
\newcommand{\DGS}{\mbox{\sc DGS}}
\newcommand{\NP}{\mbox{\sc NP}}
\newcommand\ket[1]{{ |{#1} \rangle }}
\renewcommand{\vec}[1]{\mathbf{#1}}    
\newcommand{\eps}{\varepsilon}
\renewcommand{\epsilon}{\varepsilon}
\begin{document}

\title{\bf On Lattices, Learning with Errors, \\ Random Linear Codes, and Cryptography}

\author{
 Oded Regev \footnote{School of Computer Science, Tel Aviv University, Tel Aviv 69978, Israel. Supported
   by an Alon Fellowship, by the Binational Science Foundation, by the Israel Science Foundation, by the Army Research Office grant DAAD19-03-1-0082,
   by the European Commission under the Integrated Project QAP funded by the IST directorate as Contract Number 015848, and by a European Research Council (ERC) Starting Grant.
    } }


\maketitle

\begin{abstract}
Our main result is a reduction from worst-case lattice problems such as $\GapSVP$ and $\SIVP$ to a certain learning
problem. This learning problem is a natural extension of the `learning from parity with error' problem to higher
moduli. It can also be viewed as the problem of decoding from a random linear code. This, we believe, gives a strong
indication that these problems are hard. Our reduction, however, is quantum. Hence, an efficient solution to the
learning problem implies a {\em quantum} algorithm for $\GapSVP$ and $\SIVP$. A main open question is whether this
reduction can be made classical (i.e., non-quantum).

We also present a (classical) public-key cryptosystem whose security is based on the hardness of the learning problem.
By the main result, its security is also based on the worst-case quantum hardness
of $\GapSVP$ and $\SIVP$. The new cryptosystem is much more efficient than previous
lattice-based cryptosystems: the public key is of size $\tilde{O}(n^2)$ and encrypting a message increases its size by a factor of $\tilde{O}(n)$
(in previous cryptosystems these values are $\tilde{O}(n^4)$ and $\tilde{O}(n^2)$, respectively). In fact, under the
assumption that all parties share a random bit string of length $\tilde{O}(n^2)$, the size of the public key can be
reduced to $\tilde{O}(n)$.

\end{abstract}

\section{Introduction}

\paragraph{Main theorem.}
For an integer $n \ge 1$ and a real number $\eps \ge 0$, consider the `learning from parity with error' problem,
defined as follows: the goal is to find an unknown $\vec s \in \Z_2^n$ given a list of `equations with errors'
\begin{align*}
 \ip{\vec s, \vec a_1} &\approx_\eps b_1 ~(\mod ~ 2) \\
 \ip{\vec s, \vec a_2} &\approx_\eps b_2 ~(\mod ~ 2) \\
 &\vdots
\end{align*}
where the $\vec a_i$'s are chosen independently from the uniform distribution on $\Z_2^n$, $\ip{\vec s, \vec
a_i}=\sum_j s_j (a_i)_j$ is the inner product modulo $2$ of $\vec s$ and $\vec a_i$, and each equation
is correct independently with probability $1-\eps$. More precisely, the input to the problem consists of
pairs $(\vec a_i, b_i)$ where each $\vec a_i$ is chosen independently and uniformly from $\Z_2^n$
and each $b_i$ is independently chosen to be equal to $\ip{\vec s, \vec a_i}$
with probability $1-\eps$. The goal is to find $\vec s$. Notice
that the case $\eps = 0$ can be solved efficiently by, say, Gaussian elimination. This requires $O(n)$ equations and
$\poly(n)$ time.

The problem seems to become significantly harder when we take any positive $\eps > 0$. For example, let us consider
again the Gaussian elimination process and assume that we are interested in recovering only the first bit of $\vec s$. Using
Gaussian elimination, we can find a set $S$ of $O(n)$ equations such that $\sum_S \vec a_i$ is $(1,0,\ldots,0)$.
Summing the corresponding values $b_i$ gives us a guess for the first bit of $\vec s$. However, a standard calculation
shows that this guess is correct with probability $\frac{1}{2} + 2^{-\Theta(n)}$. Hence, in order to obtain the first
bit with good confidence, we have to repeat the whole procedure $2^{\Theta(n)}$ times. This yields an algorithm that
uses $2^{O(n)}$ equations and $2^{O(n)}$ time. In fact, it can be shown that given only $O(n)$ equations, the $\vec s'
\in \Z_2^n$ that maximizes the number of satisfied equations is with high probability $\vec s$. This yields a simple
maximum likelihood algorithm that requires only $O(n)$ equations and runs in time $2^{O(n)}$.

Blum, Kalai, and Wasserman \cite{BlumKW03} provided the first subexponential algorithm for this problem. Their
algorithm requires only $2^{O(n/\log n)}$ equations/time and is currently the best known algorithm for the problem. It
is based on a clever idea that allows to find a small set $S$ of equations (say, $O(\sqrt{n})$) among $2^{O(n/\log n)}$
equations, such that $\sum_S \vec a_i$ is, say, $(1,0,\ldots,0)$. This gives us a guess for the first bit of $\vec s$
that is correct with probability $\frac{1}{2} + 2^{-\Theta(\sqrt{n})}$. We can obtain the correct value with high
probability by repeating the whole procedure only $2^{O(\sqrt{n})}$ times. Their idea was later shown to have
other important applications, such as the first $2^{O(n)}$-time algorithm for solving the shortest vector problem \cite{KumarS01,AjtaiKS01}.

An important open question is to explain the apparent difficulty in finding efficient algorithms for this learning
problem. Our main theorem explains this difficulty for a natural extension of this problem to higher moduli, defined
next.

Let $p=p(n) \le \poly(n)$ be some prime integer and consider a list of `equations with error'
\begin{align*}
 \ip{\vec s, \vec a_1} &\approx_\chi b_1 ~(\mod ~ p) \\
 \ip{\vec s, \vec a_2} &\approx_\chi b_2 ~(\mod ~ p) \\
 &\vdots
\end{align*}
where this time $\vec s \in \Z_p^n$, $\vec a_i$ are chosen independently and uniformly from $\Z_p^n$, and $b_i \in
\Z_p$. The error in the equations is now specified by a probability distribution $\chi:\Z_p \to \R^+$ on $\Z_p$.
Namely, for each equation $i$, $b_i = \ip{\vec s, \vec a_i} + e_i$ where each $e_i \in \Z_p$ is chosen independently
according to $\chi$. We denote the problem of recovering $\vec s$ from such equations by $\LWE_{p,\chi}$ (learning with
error). For example, the learning from parity problem with error $\eps$ is the special case where $p=2$,
$\chi(0)=1-\eps$, and $\chi(1)=\eps$. Under a reasonable assumption on $\chi$ (namely, that $\chi(0)>1/p+1/\poly(n)$),
the maximum likelihood algorithm described above solves $\LWE_{p,\chi}$ for $p \le \poly(n)$ using $\poly(n)$ equations
and $2^{O(n \log n)}$ time. Under a similar assumption, an algorithm resembling the one by Blum et al. \cite{BlumKW03}
requires only $2^{O(n)}$ equations/time. This is the best known algorithm for the $\LWE$ problem.

Our main theorem shows that for certain choices of $p$ and $\chi$, a solution to $\LWE_{p,\chi}$ implies a quantum
solution to worst-case lattice problems.

\begin{theorem}[Informal]\label{thm:mainthminformal} Let $n,p$ be integers and $\alpha \in (0,1)$ be such that $\alpha p > 2\sqrt{n}$.
If there exists an efficient algorithm that solves $\LWE_{p,\bar{\Psi}_\alpha}$ then there exists an efficient quantum
algorithm that approximates the decision version of the shortest vector problem ($\GapSVP$) and the shortest independent vectors problem ($\SIVP$)
to within $\tilde{O}(n/\alpha)$ in the worst case.
\end{theorem}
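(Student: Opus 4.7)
The plan is to construct a quantum procedure that, given the $\LWE_{p,\bar\Psi_\alpha}$ oracle, produces samples from the discrete Gaussian distribution $D_{L,r}$ on any input lattice $L$ at an essentially arbitrary small width, and then use those samples to solve approximate $\GapSVP$ and $\SIVP$. The first step is to reduce $\tilde O(n/\alpha)$-$\GapSVP$ and $\SIVP$ to this Discrete Gaussian Sampling (DGS) problem by standard arguments: a handful of samples of width $r \approx \sqrt n\,\lambda_1(L)/\alpha$ contains a short nonzero vector with high probability, and polynomially many samples of similar width, by a counting argument keyed to the smoothing parameter $\eta_\eps(L)$, span $L$ with short linearly independent representatives.

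The heart of the proof is an iterative width-shrinking loop for DGS. Given polynomially many samples from $D_{L,r}$, I would produce polynomially many samples from $D_{L,r'}$ with $r' \approx r\sqrt n/(\alpha p)$, a factor strictly less than one by the hypothesis $\alpha p > 2\sqrt n$. Iterating $\poly(n)$ times drives the width down to the target. The base of the recursion is handled classically: for a sufficiently large $r_0$ one can sample $D_{L,r_0}$ directly from any LLL-reduced basis via a Klein/GPV-style rejection procedure, with no oracle needed.

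Each iteration combines a classical half and a quantum half. In the classical half, I would use samples from $D_{L^*,r}$ on the dual lattice together with the $\LWE$ oracle to solve Bounded Distance Decoding on the primal lattice $L$ at radius $d \approx \alpha p/(\sqrt 2\,r)$. Given a noisy target $\vec t = \vec v + \vec e$ with $\vec v\in L$ and $\|\vec e\|\le d$, each dual sample $\vec y\in L^*$ yields a pair $(\vec y\bmod pL^*,\ \ip{\vec y,\vec t}\bmod 1)$. Because $r$ exceeds the smoothing parameter of $pL^*$, the first coordinate is statistically close to uniform on $L^*/pL^*\cong\Z_p^n$, while the second equals $\ip{\vec y,\vec v}+\ip{\vec y,\vec e}\bmod 1$, whose error component is nearly a discretized Gaussian of width $\approx\alpha$. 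Rescaled by $p$, this is statistically close to a genuine $\LWE_{p,\bar\Psi_\alpha}$ sample whose secret is the coordinate vector of $\vec v$ modulo $p$; feeding these to the oracle recovers $\vec v\bmod pL$, and a standard recursion on the sublattice $pL$ finishes BDD.

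In the quantum half, I would convert the BDD solver for $L$ at radius $d$ into a procedure producing samples from $D_{L^*,r'}$ with $r'\approx\sqrt n/d$. The idea is to prepare a superposition over a fine discretization of space whose amplitudes are proportional to a narrow Gaussian of width $1/r'$ placed on every lattice point, invoke the BDD oracle on a second register to identify and then uncompute the closest lattice vector, leaving a Gaussian wavepacket supported in a fundamental region of $L$, and finally apply the quantum Fourier transform. The QFT converts this narrow Gaussian, centered at the unknown offset, into a wide Gaussian supported on a fine approximation of $L^*$; measurement yields the desired sample from $D_{L^*,r'}$. The main obstacle is precisely this quantum step: no classical substitute is known, because the trick hinges on the QFT coherently erasing the unknown center of the wavepacket. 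The other technically delicate tasks are the joint bookkeeping of widths, smoothing parameters, and noise tolerances across all iterations so that the final constraint matches $\alpha p>2\sqrt n$, and bounding the error introduced by discretization and by the BDD oracle so that the output is statistically indistinguishable from $D_{L^*,r'}$.
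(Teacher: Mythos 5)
Your proposal captures the paper's core architecture faithfully: reduce to discrete Gaussian sampling, then iterate a two-part width-shrinking step whose classical half turns discrete-Gaussian dual samples plus the $\LWE$ oracle into a bounded-distance decoder at radius $\approx \alpha p/(\sqrt{2}r)$, and whose quantum half uses the decoder to uncompute the lattice-point register of a Gaussian superposition, QFTs, and measures. The bootstrapping at large $r_0$ via an LLL basis, the uniformity of the first coordinate from $\rho_r(pL^*+\vec{y}) \approx \det((pL^*)^*)r^n$ once $r$ exceeds $\eta_\eps(pL^*)$, and the calculation that the second coordinate encodes $\ip{\vec{a},\vec{s}}/p$ plus approximately Gaussian noise all match the paper's Lemmas~\ref{lem:find_coeff} and~\ref{lem:quantum_part}. (One small omission: the paper has to \emph{add} extra continuous Gaussian noise to the second coordinate and then handle unknown $\beta \le \alpha$ via Lemma~\ref{lem:learning_smooth}; your ``bookkeeping of noise tolerances'' gestures at this without saying how.)

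The one genuine gap is your reduction from $\GapSVP$ to $\DGS$. You propose that ``a handful of samples of width $r \approx \sqrt{n}\lambda_1(L)/\alpha$ contains a short nonzero vector,'' but this does not work: a $\DGS$ oracle only accepts $r > \sqrt{2n}\,\eta_\eps(L)/\alpha$, and samples from $D_{L,r}$ at any legal width have norm concentrated around $\sqrt{n}r$, which is always far above $\lambda_1(L)$ (indeed it exceeds $\lambda_n(L)$). So you cannot \emph{find} a $\lambda_1$-scale vector this way, and you certainly cannot \emph{certify} a lower bound on $\lambda_1$ in the $\NO$ case, which is what a $\GapSVP$ decision procedure requires. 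The paper instead reduces $\GapSVP$ to $\GapCVP'$ (via~\cite{GMSS99}) and then calls $\DGS$ on the \emph{dual} lattice at width $1/(100d)$, feeding the resulting samples into the $\NP$ verifier for $\coGapCVP$ of Aharonov--Regev~\cite{AharonovR04}; a $\NO$ instance is recognized because that verifier, given genuine $D_{L^*,1/(100d)}$ samples, accepts when $\dist(\vec t, L) > 100\sqrt{n}d$. That dual-side verification idea is essential and is not a ``standard argument'' of the flavor you sketch; without it, your proof only yields the $\SIVP$ (and $\GIVP$) part of the theorem. Your $\SIVP$ reduction, by contrast, is correct and matches the paper's use of Lemma~\ref{lem:hyperplane} and Corollary~\ref{cor:nindependentvectors}.

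Two cosmetic notes: you interchange the roles of $L$ and $L^*$ between your second and third paragraphs (the iteration claims samples from $D_{L,r}$, the classical half then consumes samples from $D_{L^*,r}$); pick one convention. Also, the second coordinate should be $\ip{\vec{y},\vec{t}}/p \bmod 1$ rather than $\ip{\vec{y},\vec{t}} \bmod 1$: without the $/p$, $\ip{\vec{y},\vec{v}}$ is an integer and all information about the secret vanishes modulo $1$. Your ``rescaled by $p$'' parenthetical suggests you know this, but as written the displayed pair is wrong.
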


The exact definition of $\bar{\Psi}_\alpha$ will be given later. For now, it is enough to know that it is a
distribution on $\Z_p$ that has the shape of a discrete Gaussian centered around 0 with standard deviation $\alpha p$,
as in Figure~\ref{fig:discretized}. Also, the probability of $0$ (i.e., no error) is roughly $1/(\alpha p)$. A possible
setting for the parameters is $p = O(n^2)$ and $\alpha = 1/(\sqrt{n} \log^2 n)$ (in fact, these are the parameters that
we use in our cryptographic application).

\begin{figure}[h]
 \center{\epsfxsize=2.5in\epsfbox{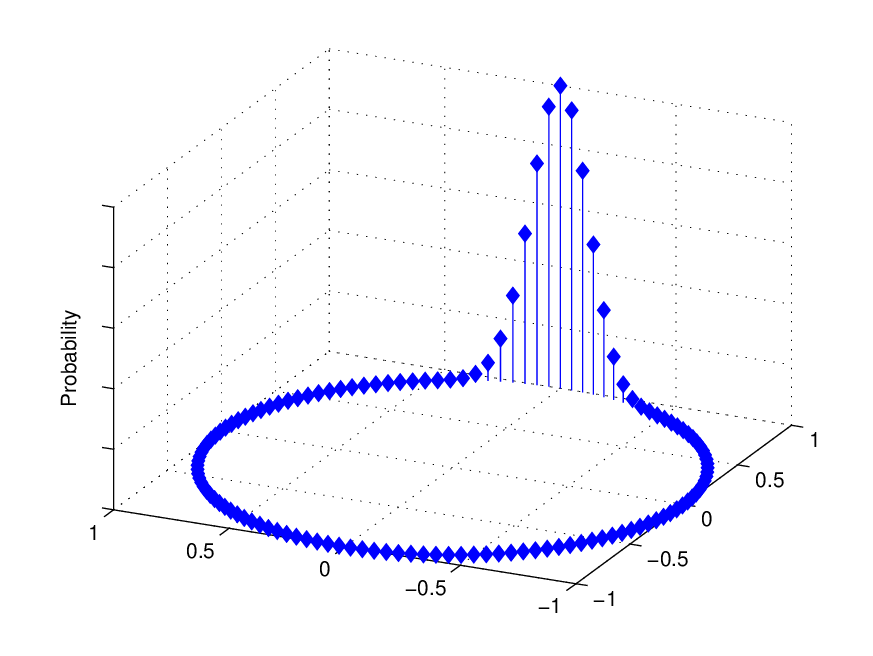} \epsfxsize=2.5in\epsfbox{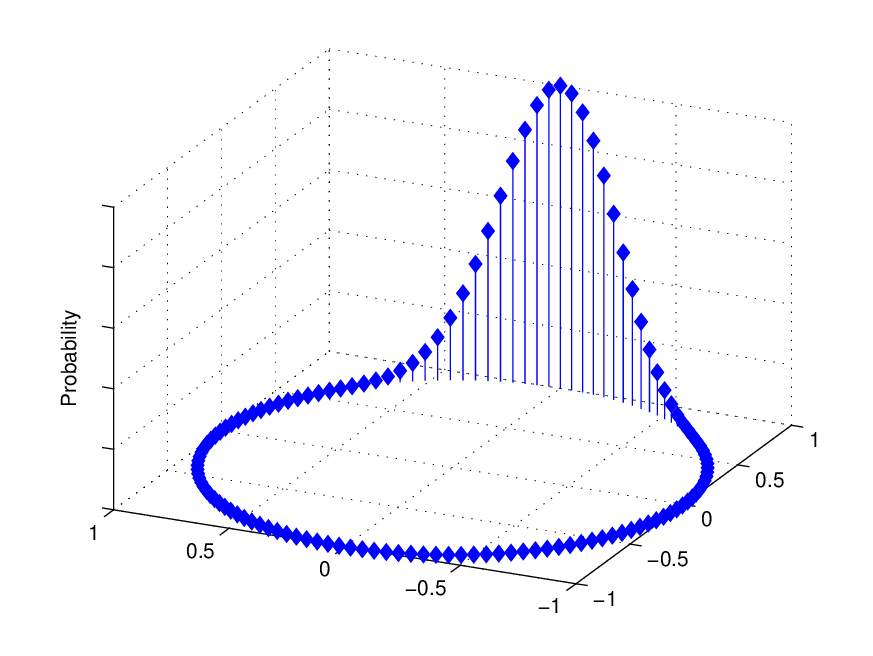}}
  \caption{$\bar{\Psi}_\alpha$ for $p=127$ with $\alpha=0.05$ (left) and $\alpha=0.1$ (right). The elements of $\Z_p$ are arranged on a circle.}
  \label{fig:discretized}
\end{figure}

$\GapSVP$ and $\SIVP$ are two of the main computational problems on lattices. In $\GapSVP$, for instance,
the input is a lattice, and the goal is to approximate the length of the shortest nonzero lattice vector.
The best known polynomial time algorithms for them yield only mildly subexponential approximation factors \cite{LLL,SchnorrLLL,AjtaiKS01}. It is conjectured that there is no classical
(i.e., non-quantum) polynomial time algorithm that approximates them to within any polynomial factor. Lattice-based constructions of
one-way functions, such as the one by Ajtai \cite{Ajtai96}, are based on this conjecture.

One might even conjecture that there is no {\em quantum} polynomial time
algorithm that approximates $\GapSVP$ (or $\SIVP$) to within any polynomial factor. One can then interpret the main
theorem as saying that based on this conjecture, the $\LWE$ problem is hard. The only evidence supporting this
conjecture is that there are no known quantum algorithms for lattice problems that outperform classical
algorithms, even though this is probably one of the most important open questions in the field of quantum computing.\footnote{If forced to make a guess,
the author would say that the conjecture is true.}

In fact, one could also interpret our main theorem as a way to disprove this conjecture: if one finds an efficient
algorithm for $\LWE$, then one also obtains a quantum algorithm for approximating worst-case lattice problems. Such a
result would be of tremendous importance on its own. Finally, we note that it is possible that our
main theorem will one day be made classical. This would make all our results stronger and the above discussion unnecessary.

The $\LWE$ problem can be equivalently presented as the problem of decoding random linear codes. More specifically, let
$m=\poly(n)$ be arbitrary and let $\vec s \in \Z_p^n$ be some vector. Then, consider the following problem: given a
random matrix $Q \in \Z_p^{m \times n}$ and the vector $\vec t = Q \vec s + \vec e \in \Z_p^m$ where each coordinate of
the error vector $\vec e \in \Z_p^m$ is chosen independently from $\bar{\Psi}_\alpha$, recover $\vec s$. The Hamming
weight of $\vec e$ is roughly $m(1-1/(\alpha p))$ (since a value chosen from $\bar{\Psi}_\alpha$ is $0$ with
probability roughly $1/(\alpha p)$). Hence, the Hamming distance of $\vec t$ from $Q\vec s$ is roughly $m(1-1/(\alpha
p))$. Moreover, it can be seen that for large enough $m$, for any other word $\vec s'$, the Hamming distance of $\vec
t$ from $Q \vec s'$ is roughly $m(1-1/p)$. Hence, we obtain that approximating the nearest codeword problem to within
factors smaller than $(1-1/p)/(1-1/(\alpha p))$ on random codes is as hard as quantumly approximating worst-case
lattice problems. This gives a partial answer to the important open question of understanding the hardness of decoding
from random linear codes.

It turns out that certain problems, which are seemingly easier than the $\LWE$ problem, are in fact equivalent to the
$\LWE$ problem. We establish these equivalences in Section~\ref{sec:reductions} using elementary reductions. For example, being able to distinguish a
set of equations as above from a set of equations in which the $b_i$'s are chosen uniformly from $\Z_p$ is equivalent
to solving $\LWE$. Moreover, it is enough to correctly distinguish these two distributions for some non-negligible
fraction of all $\vec s$. The latter formulation is the one we use in our cryptographic applications.

\paragraph{Cryptosystem.}
In Section~\ref{sec:cryptosystem} we present a public key cryptosystem
and prove that it is secure based on the hardness of the
$\LWE$ problem. We use the standard security notion of semantic, or IND-CPA, security (see, e.g., \cite[Chapter 10]{KatzL08}).
The cryptosystem and its security proof are entirely classical.
In fact, the cryptosystem itself is quite simple; the reader is encouraged to glimpse
at the beginning of Section~\ref{sec:cryptosystem}.
Essentially, the idea is to provide a list of equations as above as the public key;
encryption is performed by summing some of the equations (forming another
equation with error) and modifying the right hand side depending on the
message to be transmitted. Security follows from the fact that a list of equations with error is
computationally indistinguishable from a list of equations in which the $b_i$'s are chosen
uniformly.

By using our main theorem, we obtain that the security of the
system is based also on the worst-case quantum hardness of approximating $\SIVP$ and $\GapSVP$ to
within $\tilde{O}(n^{1.5})$. In other words, breaking our cryptosystem implies
an efficient quantum algorithm for approximating $\SIVP$ and $\GapSVP$ to within $\tilde{O}(n^{1.5})$.
Previous cryptosystems, such as the Ajtai-Dwork cryptosystem \cite{AjtaiDwork} and the one by Regev \cite{Regev03A},
are based on the worst-case (classical) hardness of the unique-$\SVP$ problem, which can be related
to $\GapSVP$ (but not $\SIVP$) through the recent result of Lyubashevsky and Micciancio~\cite{LyubashevskyM09}.

Another important feature of our cryptosystem is its improved efficiency. In previous cryptosystems, the public key
size is $\tilde{O}(n^4)$ and the encryption increases the size of messages by a factor of $\tilde{O}(n^2)$. In our
cryptosystem, the public key size is only $\tilde{O}(n^2)$ and encryption increases the size of messages by a factor of
only $\tilde{O}(n)$. This possibly makes our cryptosystem practical. Moreover, using an idea of Ajtai
\cite{AjtaiHardLattices}, we can reduce the size of the public key to $\tilde{O}(n)$. This requires all users of the
cryptosystem to share some (trusted) random bit string of length $\tilde{O}(n^2)$. This can be achieved by, say, distributing
such a bit string as part of the encryption and decryption software.

We mention that learning problems similar to ours were already suggested as possible sources of cryptographic
hardness in, e.g., \cite{BlumFKL94,Alekhnovich03}, although this was done without establishing any
connection to lattice problems. In another related work \cite{AjtaiHardLattices}, Ajtai suggested
a cryptosystem that has several properties in common with ours (including its efficiency),
although its security is not based on worst-case lattice problems.

\paragraph{Why quantum?} This paper is almost entirely classical. In fact, quantum is needed only in one step
in the proof of the main theorem. Making this step classical would make the entire reduction classical. To demonstrate
the difficulty, consider the following situation. Let $L$ be some lattice and let $d=\lambda_1(L)/n^{10}$ where
$\lambda_1(L)$ is the length of the shortest nonzero vector in $L$. We are given an oracle that for any point $\vec
x\in \R^n$ within distance $d$ of $L$ finds the closest lattice vector to $\vec x$. If $\vec x$ is not within distance
$d$ of $L$, the output of the oracle is undefined. Intuitively, such an oracle seems quite powerful; the best known
algorithms for performing such a task require exponential time. Nevertheless, we do not see any way to use this oracle
classically. Indeed, it seems to us that the only way to generate inputs to the oracle is the following: somehow choose
a lattice point $\vec y\in L$ and let $\vec x = \vec y + \vec z$ for some perturbation vector $\vec z$ of length at
most $d$. Clearly, on input $\vec x$ the oracle outputs $\vec y$. But this is useless since we already know $\vec y$!

It turns out that quantumly, such an oracle is quite useful. Indeed, being able to compute $\vec y$ from $\vec x$
allows us to {\em uncompute} $\vec y$. More precisely, it allows us to transform the quantum state $\ket{\vec x, \vec
y}$ to the state $\ket{\vec x, 0}$ in a reversible (i.e., unitary) way. This ability to erase the contents of a memory
cell in a reversible way seems useful only in the quantum setting.

\paragraph{Techniques.}
Unlike previous constructions of lattice-based public-key
cryptosystems, the proof of our main theorem uses an `iterative construction'. Essentially, this means
that instead of `immediately' finding very short vectors in a lattice, the reduction proceeds in steps where in each
step shorter lattice vectors are found. So far, such iterative techniques have been used only in the construction
of lattice-based one-way functions \cite{Ajtai96, CaiNerurkar97atow,
Micciancio04perfect, MicciancioR04}. Another novel aspect of our main theorem is its crucial use of quantum
computation. Our cryptosystem is the first {\em classical} cryptosystem whose security is based on a
{\em quantum} hardness assumption (see \cite{MooreRZ07} for a somewhat related recent work).

Our proof is based on the Fourier transform of Gaussian measures, a technique that was developed in previous papers
\cite{Regev03A, MicciancioR04, AharonovR04}. More specifically, we use a parameter known as the smoothing parameter, as
introduced in \cite{MicciancioR04}. We also use the discrete Gaussian distribution and approximations to its Fourier
transform, ideas that were developed in \cite{AharonovR04}.

\paragraph{Open questions.} The main open question raised by this work is whether Theorem~\ref{thm:mainthminformal} can be dequantized:
can the hardness of $\LWE$ be established based on the classical
hardness of $\SIVP$ and $\GapSVP$? We see no reason why this should be
impossible. However, despite our efforts over the last few years, we were not able to show this. As mentioned above, the difficulty is that
there seems to be no classical way to use an oracle that solves the closest vector problem within small distances.
Quantumly, however, such an oracle turns out to be quite useful.

Another important open question is to determine the hardness of the learning from parity with errors problem (i.e., the
case $p=2$). Our theorem only works for $p > 2\sqrt{n}$. It seems that in order to prove similar results for smaller
values of $p$, substantially new ideas are required. Alternatively, one can interpret our inability to prove hardness
for small $p$ as an indication that the problem might be easier than believed.

Finally, it would be interesting to relate the $\LWE$ problem to other average-case problems in the literature,
and especially to those considered by Feige in \cite{Feige02}. See Alekhnovich's paper \cite{Alekhnovich03}
for some related work.

\paragraph{Followup work.}
We now describe some of the followup work that has appeared
since the original publication of our results in 2005~\cite{Regev05}.

One line of work focussed on improvements to our cryptosystem.
First, Kawachi, Tanaka, and Xagawa~\cite{KawachiTX07}
proposed a modification to our cryptosystem that slightly improves the encryption blowup to $O(n)$, essentially getting
rid of a $\log$ factor. A much more significant improvement is
described by Peikert, Vaikuntanathan, and Waters in~\cite{PeikertVW07}.
By a relatively simple modification to the cryptosystem, they managed
to bring the encryption blowup down to only $O(1)$, in addition
to several equally significant improvements in running time.
Finally, Akavia, Goldwasser, and Vaikuntanathan~\cite{AkaviaGV09}
show that our cryptosystem remains secure even if almost the entire
secret key is leaked.

Another line of work focussed on the design of other cryptographic protocols
whose security is based on the hardness of the $\LWE$ problem.
First, Peikert and Waters \cite{PeikertW07} constructed,
among other things, CCA-secure cryptosystems (see also~\cite{Peikert09} for a simpler construction).
These are cryptosystems that are secure even if the adversary is allowed access
to a decryption oracle (see, e.g., \cite[Chapter 10]{KatzL08}).
All previous lattice-based cryptosystems
(including the one in this paper) are not CCA-secure.
Second, Peikert, Vaikuntanathan, and Waters \cite{PeikertVW07} showed
how to construct oblivious transfer protocols, which are
useful, e.g., for performing secure multiparty computation.
Third, Gentry, Peikert, and Vaikuntanathan \cite{GentryPV08} constructed an identity-based
encryption (IBE) scheme. This is a public-key encryption scheme in which
the public key can be any unique identifier of the user;
very few constructions of such schemes are known.
Finally, Cash, Peikert, and Sahai \cite{CashPS09} constructed
a public-key cryptosystem that remains secure even when the
encrypted messages may depend upon the secret key.
The security of all the above constructions is based on the $\LWE$ problem
and hence, by our main theorem, also on the worst-case quantum hardness
of lattice problems.

The $\LWE$ problem has also been used by Klivans and Sherstov to show hardness
results related to learning halfspaces \cite{KlivansS06}.
As before, due to our main theorem, this implies hardness of learning halfspaces
based on the worst-case quantum hardness of lattice problems.

Finally, we mention two results giving further evidence for the hardness
of the $\LWE$ problem. In the first, Peikert~\cite{Peikert07}
somewhat strengthens our main theorem by replacing our worst-case
lattice problems with their analogues for the $\ell_q$ norm, where $2 \le q \le \infty$ is arbitrary.
Our main theorem only deals with the standard $\ell_2$ versions.

In another recent result, Peikert~\cite{Peikert09} shows that
the quantum part of our proof can be removed, leading to
a \emph{classical} reduction from $\GapSVP$ to the $\LWE$ problem.
As a result, Peikert is able to show that public-key cryptosystems
(including many of the above LWE-based schemes)
can be based on the classical hardness of $\GapSVP$, resolving
a long-standing open question (see also~\cite{LyubashevskyM09}).
Roughly speaking, the way Peikert circumvents the difficulty we
described earlier is by noticing that the \emph{existence}
of an oracle that is able to recover $\vec y$ from $\vec y + \vec z$, where
$\vec y$ is a random lattice point and $\vec z$ is a random perturbation
of length at most $d$, is by itself a useful piece of information
as it provides a lower bound on the length of the shortest nonzero vector.
By trying to construct such oracles for several different values
of $d$ and checking which ones work, Peikert is able to obtain
a good approximation of the length of the shortest nonzero vector.

Removing the quantum part, however, comes at a cost:
the construction can no longer be iterative, the hardness
can no longer be based on $\SIVP$, and even for hardness
based on $\GapSVP$, the modulus $p$ in the $\LWE$ problem must be exponentially big
unless we assume the hardness of a non-standard variant of $\GapSVP$.
Because of this, we believe that dequantizing our main theorem
remains an important open problem.

\subsection{Overview}

In this subsection, we give a brief informal overview of the proof of our main theorem, Theorem~\ref{thm:mainthminformal}.
The complete proof appears in Section~\ref{sce:main_theorem}. We do not discuss here
the reductions in Section~\ref{sec:reductions} and the cryptosystem in Section~\ref{sec:cryptosystem} as these parts
of the paper are more similar to previous work.

In addition to some very basic definitions related to lattices,
we will make heavy use here of the \emph{discrete Gaussian distribution on $L$ of width $r$}, denoted $D_{L,r}$. This is the
distribution whose support is $L$ (which is typically a lattice), and in which the probability of each $\vec x \in L$
is proportional to $\expo{-\pi \| \vec x/r\|^2}$ (see Eq.~\eqref{eq:discrete_gaussian} and Figure~\ref{fig:dist_d}).
We also mention here the \emph{smoothing parameter} $\eta_\epsilon(L)$. This is a real positive
number associated with any lattice $L$ ($\epsilon$ is an accuracy parameter which we can safely ignore here).
Roughly speaking, it gives the smallest $r$ starting from which $D_{L,r}$ `behaves like' a continuous
Gaussian distribution. For instance, for $r \ge \eta_\epsilon(L)$, vectors chosen from $D_{L,r}$ have norm
roughly $r \sqrt{n}$ with high probability. In contrast, for sufficiently small $r$, $D_{L,r}$ gives almost all its
mass to the origin $0$.
Although not required for this section, a complete list of definitions can be found in Section~\ref{sec:prelim}.

\begin{figure}[h]
 \center{\epsfxsize=2.5in\epsfbox{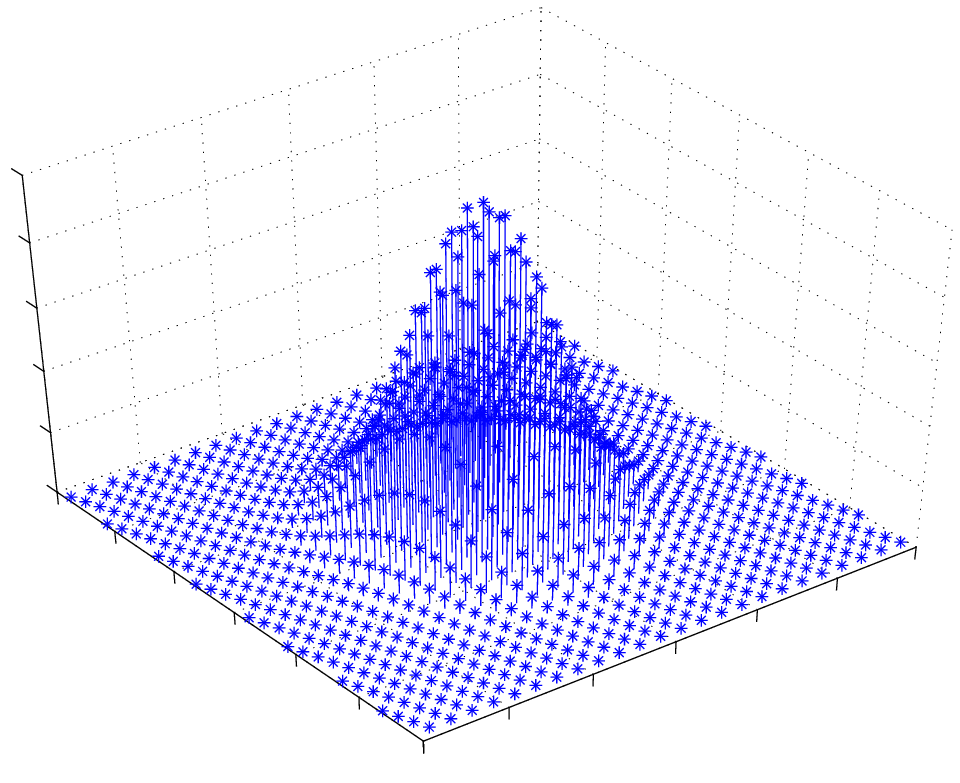}
         \epsfxsize=2.5in\epsfbox{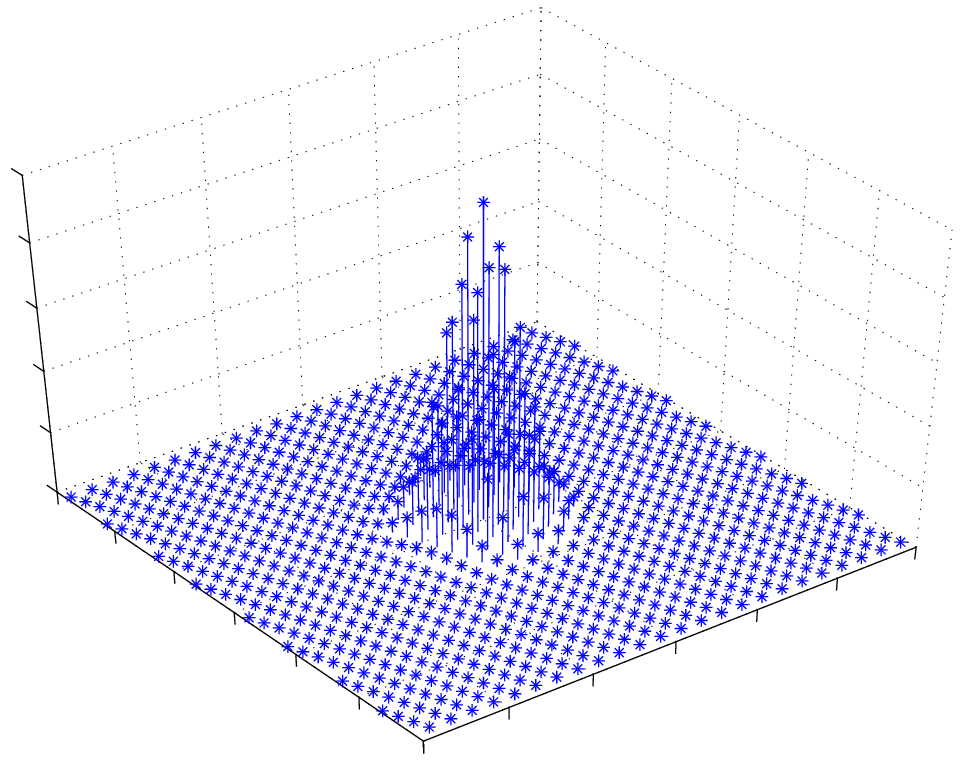}}
  \caption{$D_{L,2}$ (left) and $D_{L,1}$ (right) for a two-dimensional lattice $L$. The $z$-axis represents probability.}
  \label{fig:dist_d}
\end{figure}

Let $\alpha,p,n$ be such that $\alpha p > 2\sqrt{n}$, as required in Theorem~\ref{thm:mainthminformal},
and assume we have an oracle that solves $\LWE_{p,\bar{\Psi}_\alpha}$. For concreteness,
we can think of $p=n^2$ and $\alpha=1/n$. Our goal is to show
how to solve the two lattice problems mentioned in Theorem~\ref{thm:mainthminformal}.
As we prove in Subsection~\ref{ssec:standardlatticeproblems} using standard reductions, it suffices
to solve the following \emph{discrete Gaussian sampling problem} ($\DGS$):
Given an $n$-dimensional lattice $L$ and a number $r \ge \sqrt{2n} \cdot \eta_\eps(L) /\alpha$, output a sample
from $D_{L,r}$. Intuitively, the connection to $\GapSVP$ and $\SIVP$ comes from the fact that by taking
$r$ close to its lower limit $\sqrt{2n} \cdot \eta_\eps(L) /\alpha$, we can obtain short lattice vectors
(of length roughly $\sqrt{n} r$).
In the rest of this subsection we describe our algorithm for sampling from $D_{L,r}$.
We note that the exact lower bound on $r$ is not that important for purposes of this overview,
as it only affects the approximation factor we obtain for $\GapSVP$ and $\SIVP$. It suffices
to keep in mind that our goal is to sample from $D_{L,r}$ for $r$ that is rather small,
say within a polynomial factor of $\eta_\eps(L)$.

The core of the algorithm is the following procedure, which we call the `iterative step'.
Its input consists of a number $r$ (which is guaranteed to be not too small, namely,
greater than $\sqrt{2} p \eta_\eps(L)$), and $n^c$ samples from $D_{L,r}$ where $c$
is some constant. Its output is a sample from the distribution $D_{L,r'}$
for $r' = r \sqrt{n}/(\alpha p)$. Notice that since $\alpha p > 2\sqrt{n}$, $r' < r/2$.
In order to perform this `magic' of converting vectors of norm $\sqrt{n} r$ into
shorter vectors of norm $\sqrt{n} r'$, the procedure of course needs to use the
$\LWE$ oracle.

Given the iterative step, the algorithm for solving $\DGS$ works as follows. Let $r_i$ denote $r \cdot (\alpha p / \sqrt{n})^i$.
The algorithm starts by producing $n^c$ samples from $D_{L,r_{3n}}$.
Because $r_{3n}$ is so large, such samples can be computed efficiently by a simple
procedure described in Lemma~\ref{lem:bootstrap}.
Next comes the core of the algorithm: for $i = 3n, 3n-1,\ldots,1$ the algorithm uses its $n^{c}$
samples from $D_{L,r_i}$ to produce $n^{c}$ samples from $D_{L,r_{i-1}}$
by calling the iterative step $n^c$ times. Eventually, we end up with $n^{c}$ samples
from $D_{L,r_0} = D_{L,r}$ and we complete the algorithm by simply outputting
the first of those. Note the following crucial fact: using $n^{c}$ samples from $D_{L,r_i}$, we are able
to generate the same number of samples $n^{c}$ from $D_{L,r_{i-1}}$ (in fact, we could even generate more than $n^{c}$
samples). The algorithm would not work if we could only generate, say, $n^{c}/2$ samples, as this would
require us to start with an exponential number of samples.

We now finally get to describe the iterative step.
Recall that as input we have $n^c$ samples from $D_{L,r}$ and we are supposed to generate a sample from
$D_{L, r'}$ where $r' = r \sqrt{n}/(\alpha p)$. Moreover, $r$ is known and guaranteed to be at least
$\sqrt{2} p \eta_\eps(L)$, which can be shown to imply that $p/r < \lambda_1(L^*)/2$.
As mentioned above, the exact lower bound on $r$ does not matter much for this overview;
it suffices to keep in mind that $r$ is sufficiently larger than $\eta_\eps(L)$, and that $1/r$ is
sufficiently smaller than $\lambda_1(L^*)$.

The iterative step is obtained by combining two parts (see Figure~\ref{fig:iterative}).
In the first part, we construct a classical algorithm that uses the given samples and
the $\LWE$ oracle to solve the following closest vector problem, which we denote by $\CVP_{L^*, \alpha p/r}$:
given any point $\vec x\in \R^n$ within distance $\alpha p/r$ of the dual lattice $L^*$,
output the closest vector in $L^*$ to $\vec x$.\footnote{In fact, we only solve
$\CVP_{L^*, \alpha p/(\sqrt{2}r)}$ but for simplicity we ignore the factor $\sqrt{2}$ here.}
By our assumption on $r$, the distance between any two points in $L^*$ is greater than
$2\alpha p/r$ and hence the closest vector is unique.
In the second part, we use this algorithm
to generate samples from $D_{L,r'}$. This part is quantum (and in fact, the only quantum part of our proof).
The idea here is to use the $\CVP_{L^*, \alpha p/r}$ algorithm to generate a certain quantum superposition
which, after applying the quantum Fourier transform and performing a measurement, provides us
with a sample from $D_{L,r \sqrt{n}/(\alpha p)}$.
In the following, we describe each of the two parts in more detail.

\begin{figure}[h]
 \center{\epsfxsize=2.5in\epsfbox{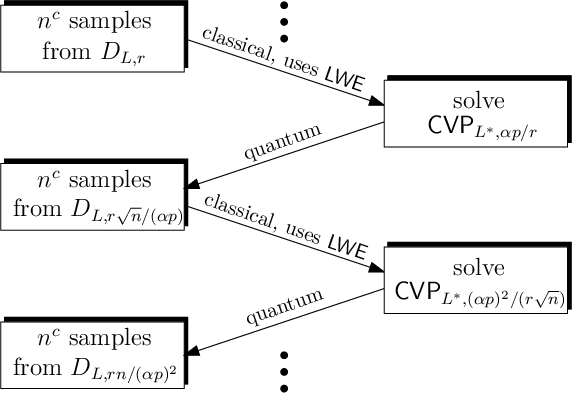}}
  \caption{Two iterations of the algorithm}
  \label{fig:iterative}
\end{figure}

\paragraph{Part 1:}
We start by recalling the main idea in \cite{AharonovR04}. Consider some
probability distribution $D$ on some lattice $L$ and consider its Fourier transform $f: \R^n \to \C$, defined as
$$ f(\vec x) = \sum_{\vec y \in L} D(\vec y) \expo{2\pi i \ip{\vec x, \vec y}} =
   \Exp_{\vec y \sim D}[\expo{2\pi i \ip{\vec x, \vec y}}]$$
where in the second equality we simply rewrite the sum as an expectation.
By definition, $f$ is $L^*$-periodic, i.e., $f(\vec x)=f(\vec x+ \vec y)$ for any $\vec x \in \R^n$ and $\vec y
\in L^*$. In \cite{AharonovR04} it was shown that given a polynomial number of samples from $D$, one can compute an
approximation of $f$ to within $\pm 1/\poly(n)$. To see this, note that by the Chernoff-Hoeffding bound,
if $\vec y_1,\ldots, \vec y_N$ are $N=\poly(n)$ independent samples from $D$, then
$$ f(\vec x) \approx \frac{1}{N} \sum_{j=1}^N \expo{2\pi i \ip{\vec x, \vec y_j}} $$
where the approximation is to within $\pm 1/\poly(n)$ and holds with probability
exponentially close to $1$, assuming that $N$ is a large enough polynomial.

By applying this idea to the samples from $D_{L,r}$ given to us as input,
we obtain a good approximation of the Fourier transform of $D_{L,r}$, which
we denote by $f_{1/r}$. It can be shown that since $1/r \ll \lambda_1(L^*)$ one
has the approximation
\begin{equation}\label{eq:fourier_tra_1}
f_{1/r}(\vec x) \approx \expo{-\pi (r \cdot \dist(L^*, \vec x))^2}
\end{equation}
(see Figure~\ref{fig:gaussianperiod1}).
Hence, $f_{1/r}(\vec x) \approx 1$ for any $\vec x \in L^*$ (in fact an equality holds)
and as one gets away from $L^*$, its value decreases. For points
within distance, say, $1/r$ from the lattice, its value is still some positive constant (roughly $\expo{-\pi}$).
As the distance from $L^*$ increases, the value of the function soon becomes negligible.
Since the distance between any two vectors in $L^*$ is at least
$\lambda_1(L^*) \gg 1/r$,
the Gaussians around each point of $L^*$ are well-separated.

\begin{figure}[h]
 \center{\includegraphics[width=8cm]{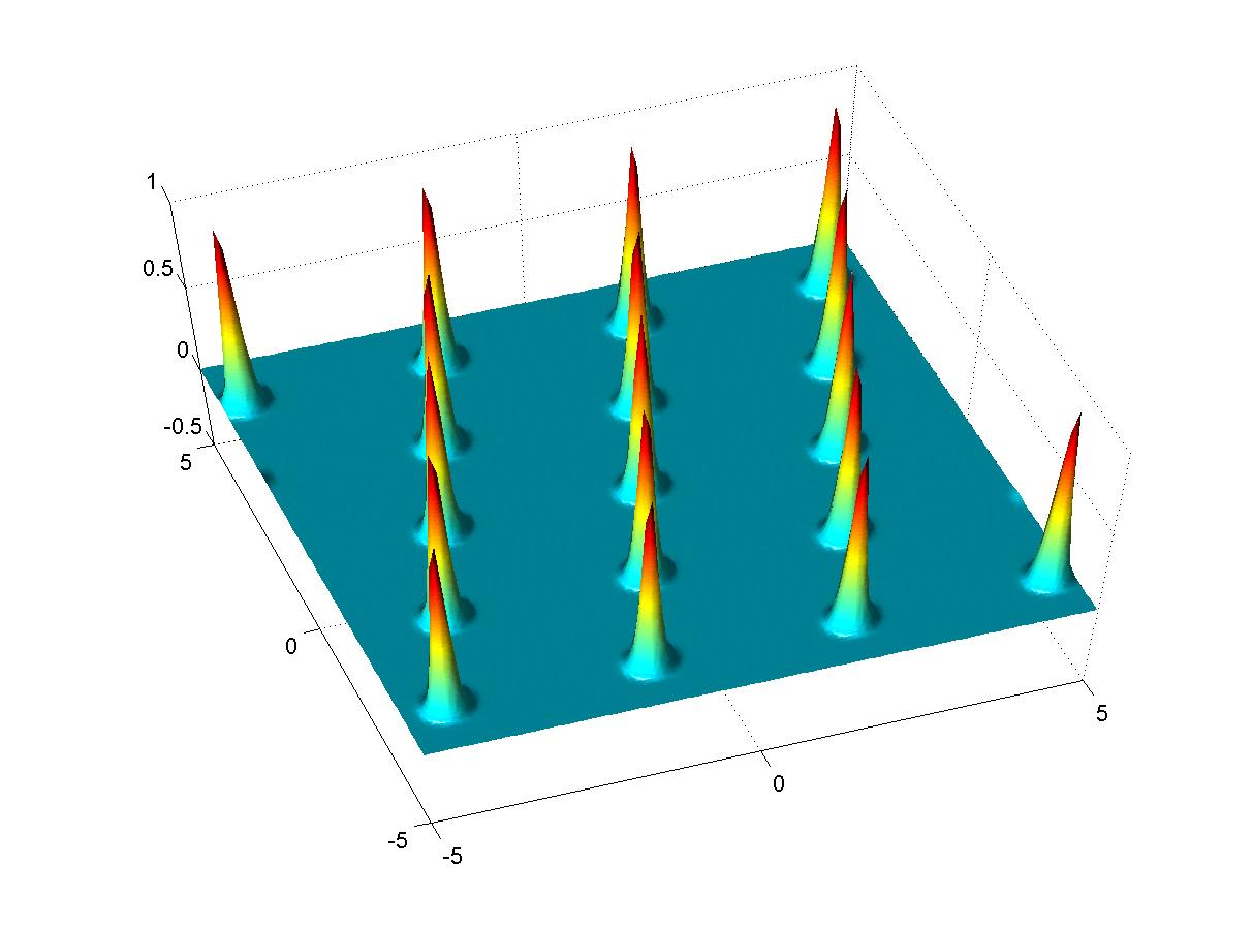}}
  \caption{$f_{1/r}$ for a two-dimensional lattice}
  \label{fig:gaussianperiod1}
\end{figure}

Although not needed in this paper, let us briefly outline how one can solve $\CVP_{L^*, 1/r}$ using samples from
$D_{L,r}$. Assume that we are given
some point $\vec x$ within distance $1/r$ of $L^*$. Intuitively, this $\vec x$ is located on one of the Gaussians of
$f_{1/r}$. By repeatedly computing an approximation of $f_{1/r}$ using the samples from $D_{L,r}$ as described above,
we `walk uphill' on $f_{1/r}$ in an attempt to find its `peak'. This peak corresponds to the
closest lattice point to $\vec x$. Actually, the procedure as described here does not quite work: due to the error in our approximation of
$f_{1/r}$, we cannot find the closest lattice point exactly. It is possible to overcome this difficulty; see~\cite{LiLyMi06}
for the details. The same procedure actually works for slightly longer distances, namely $O(\sqrt{\log n}/r)$,
but beyond that distance the value of $f_{1/r}$ becomes negligible and no useful information
can be extracted from our approximation of it.

Unfortunately, solving $\CVP_{L^*, 1/r}$ is not useful for the iterative step as it would lead
to samples from $D_{L,r \sqrt{n}}$, which is a wider rather than a narrower distribution
than the one we started with. This is not surprising, since our solution to $\CVP_{L^*, 1/r}$
did not use the $\LWE$ oracle. Using the $\LWE$ oracle, we will now show that we can gain an
extra $\alpha p$ factor in the radius, and obtain the desired $\CVP_{L^*, \alpha p /r}$
algorithm.

Notice that if we could somehow obtain samples from $D_{L,r/p}$ we would be done:
using the procedure described above, we could solve $\CVP_{L^*, p /r}$,
which is better than what we need. Unfortunately, it is not clear how to
obtain such samples, even with the help of the $\LWE$ oracle.
Nevertheless, here is an obvious way to obtain something similar to samples from $D_{L,r/p}$:
just take the given samples from $D_{L,r}$ and divide them by $p$.
This provides us with samples from $D_{L/p,r/p}$ where $L/p$ is the lattice $L$ scaled
down by a factor of $p$. In the following we will show how to use these samples
to solve $\CVP_{L^*, \alpha p /r}$.

Let us first try to understand what the distribution $D_{L/p,r/p}$
looks like. Notice that the lattice $L/p$ consists of $p^n$
translates of the original lattice $L$. Namely, for each
$\vec a \in \Z_p^n$, consider the set
$$ L + L \vec a / p = \{ L \vec b / p  ~|~ \vec b \in \Z^n,~~\vec b ~\mod~ p = \vec a \}.$$
Then $\{ L + L \vec a/p ~|~ \vec a \in \Z_p^n \}$ forms a partition of $L/p$.
Moreover, it can be shown that since $r/p$ is larger than the smoothing parameter $\eta_\epsilon(L)$,
the probability given to each $L+ L \vec a/p$ under $D_{L/p,r/p}$ is essentially the same, that is, $p^{-n}$.
Intuitively, beyond the smoothing parameter, the Gaussian
measure no longer `sees' the discrete structure of $L$, so in particular
it is not affected by translations (this will be shown in Claim~\ref{clm:gaussianweightshiftinvariant}).

This leads us to consider the following distribution, call it $\tilde{D}$.
A sample from $\tilde{D}$ is a pair $(\vec a, \vec y)$ where $\vec y$ is sampled from $D_{L/p,r/p}$,
and $\vec a \in \Z_p^n$ is such that $\vec y \in L+L\vec a/p$. Notice
that we can easily obtain samples from $\tilde{D}$ using the given samples
from $D_{L,r}$.
From the above discussion we have that the marginal distribution of $\vec a$ is essentially uniform. Moreover, by definition we have
that the distribution of $\vec y$ conditioned on any $\vec a$ is $D_{L+L\vec a/p, r/p}$.
Hence, $\tilde{D}$ is essentially identical to the distribution on pairs $(\vec a, \vec y)$
in which $\vec a\in \Z_p^n$ is chosen uniformly at random, and
then $\vec y$ is sampled from $D_{L+L\vec a/p, r/p}$. From now on,
we think of $\tilde{D}$ as being this distribution.

We now examine the Fourier transform of $D_{L+L\vec a/p, r/p}$ (see Figure
\ref{fig:gaussianperiod2}). When $\vec a$ is zero, we already know that the
Fourier transform is $f_{p/r}$. For general $\vec a$, a standard
calculation shows that the Fourier transform of $D_{L+L\vec a/p, r/p}$ is given by
\begin{equation}\label{eq:fourier_tra_2}
\expo{2 \pi i \ip{\vec a, \tau(\vec x)}/p} \cdot f_{p/r}(\vec x)
\end{equation}
where $\tau(\vec x) \in \Z_p^n$ is defined as
$$\tau(\vec x) := (L^*)^{-1} \kappa_{L^*}(\vec x) ~\mod~ p,$$
and $\kappa_{L^*}(\vec x)$ denotes the (unique) closest vector in $L^*$ to $\vec x$.
In other words, $\tau(\vec x)$ is the vector of coefficients of the vector
in $L^*$ closest to $\vec x$ when represented in the basis of $L^*$, reduced
modulo $p$. So we see that the Fourier transform $D_{L+L\vec a/p, r/p}$
is essentially $f_{p/r}$, except that each `hill' gets its own phase
depending on the vector of coefficients of the lattice point in its center.
The appearance of these phases is as a result of a well-known
property of the Fourier transform, saying that translation is transformed
to multiplication by phase.

\begin{figure}[h]
 \center{\includegraphics[width=8cm]{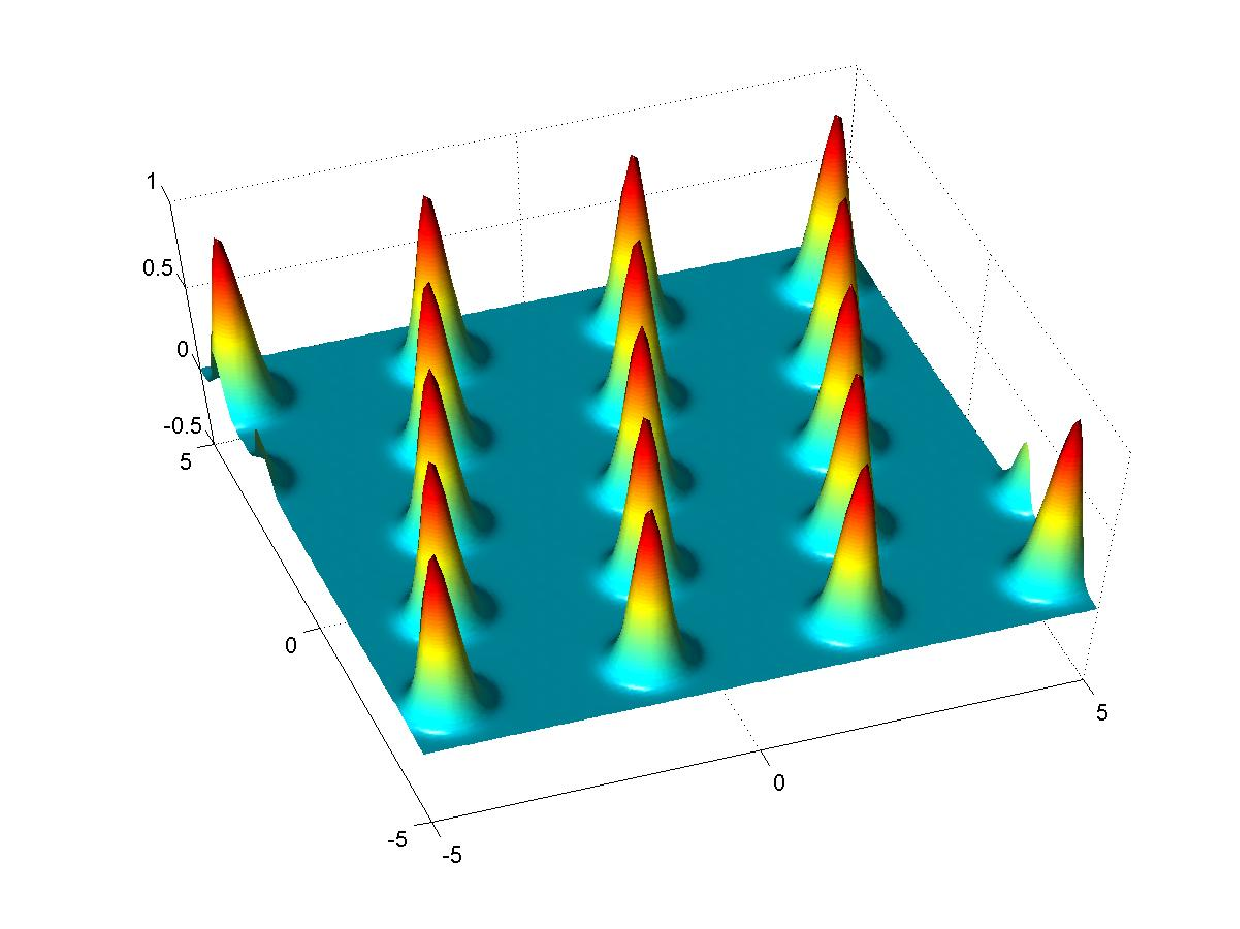} \includegraphics[width=8cm]{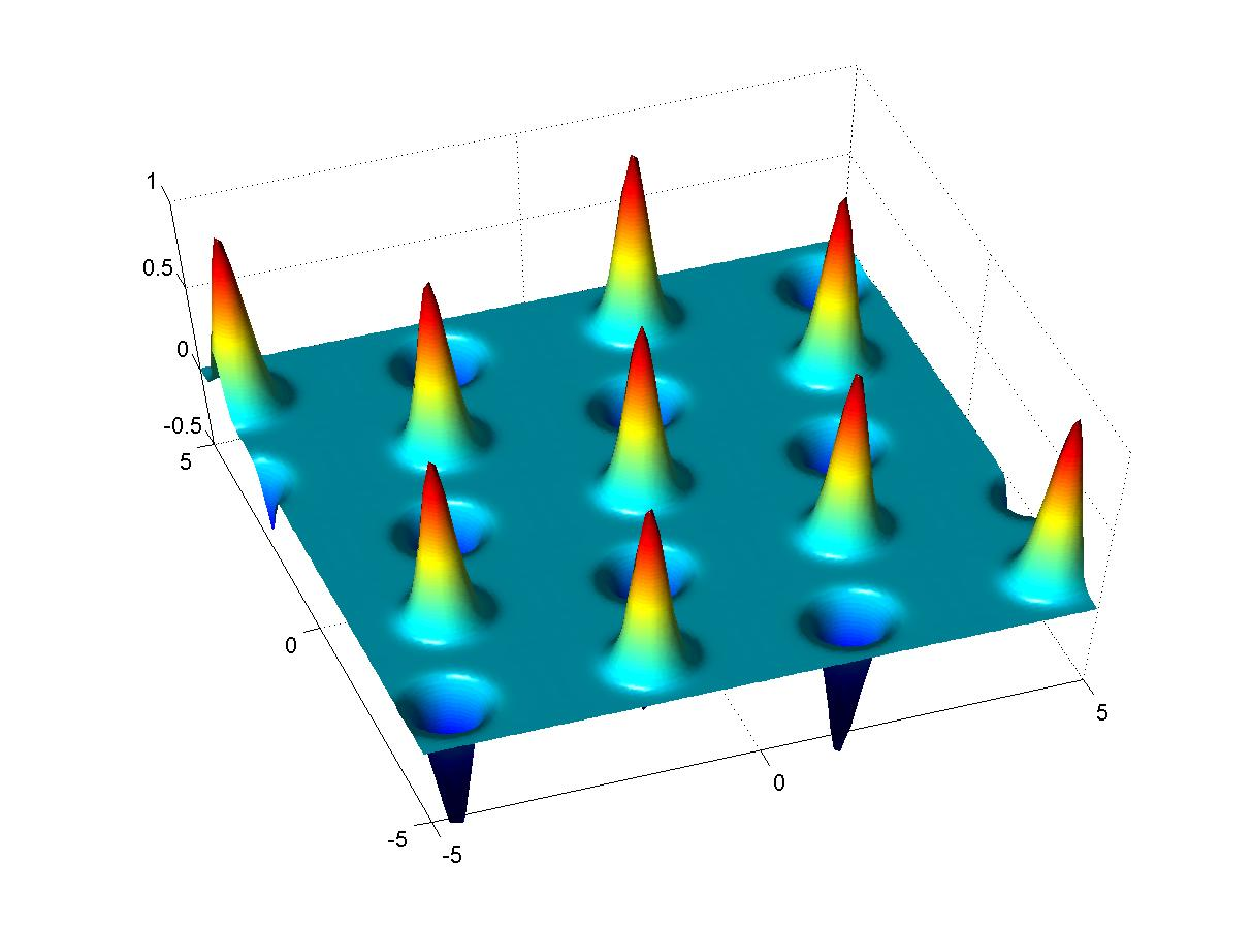}}
  \caption{The Fourier transform of $D_{L+L\vec a/p, r/p}$ with $n=2$, $p=2$, $\vec a=(0,0)$ (left), $\vec a=(1,1)$ (right).}
  \label{fig:gaussianperiod2}
\end{figure}

Equipped with this understanding of the Fourier transform of $D_{L+L\vec a/p, r/p}$, we can
get back to our task of solving $\CVP_{L^*, \alpha p /r}$. By the definition of the
Fourier transform, we know that the average of $\expo{2\pi i \ip{\vec x, \vec y}}$ over
$\vec y\sim D_{L+L\vec a/p, r/p}$ is given by \eqref{eq:fourier_tra_2}. Assume for
simplicity that $\vec x \in L^*$ (even though in this case finding the closest
vector is trivial; it is simply $\vec x$ itself). In this case,
\eqref{eq:fourier_tra_2} is equal to $\expo{2 \pi i \ip{\vec a, \tau(\vec x)}/p}$. Since the absolute value of this
expression is $1$, we see that for such $\vec x$, the random variable $\ip{\vec x, \vec y} ~\mod~ 1$ (where $\vec y
\sim D_{L+L\vec a/p, r/p}$) must be deterministically equal to $\ip{\vec a, \tau(\vec x)}/p ~\mod~ 1$ (this fact can also be
seen directly). In other words, when $\vec x \in L^*$, each sample $(\vec a, \vec y)$ from $\tilde{D}$,
provides us with a linear equation
$$ \ip{\vec a, \tau(\vec x)} = p\ip{\vec x, \vec y} ~\mod~ p$$
with $\vec a$ distributed essentially uniformly in $\Z_p^n$.
After collecting about $n$ such equations, we can use Gaussian elimination
to recover $\tau(\vec x) \in \Z_p^n$. And as we shall show in Lemma~\ref{lem:cvpmodp} using a simple reduction,
the ability to compute $\tau(\vec x)$ easily leads to the ability to compute
the closest vector to $\vec x$.

We now turn to the more interesting case in which $\vec x$ is not in $L^*$, but only
within distance $\alpha p / r$ of $L^*$. In this case,
the phase of \eqref{eq:fourier_tra_2} is still equal to $\expo{2 \pi i \ip{\vec a, \tau(\vec x)}/p}$.
Its absolute value, however, is no longer $1$, but still quite close to $1$ (depending
on the distance of $\vec x$ from $L^*$).
Therefore, the random variable $\ip{\vec x, \vec y} ~\mod~ 1$, where $\vec y
\sim D_{L+L\vec a/p, r/p}$, must be typically quite close to $\ip{\vec a, \tau(\vec x)}/p ~\mod~ 1$
(since, as before, the average of $\expo{2\pi i \ip{\vec x, \vec y}}$ is given by \eqref{eq:fourier_tra_2}).
Hence, each sample $(\vec a, \vec y)$ from $\tilde{D}$, provides us with a linear equation with error,
$$ \ip{\vec a, \tau(\vec x)} \approx \lfloor p\ip{\vec x, \vec y} \rceil ~\mod~ p.$$
Notice that $p\ip{\vec x, \vec y}$ is typically not an integer and
hence we round it to the nearest integer.
After collecting a polynomial number of such equations, we call the $\LWE$ oracle
in order to recover $\tau(\vec x)$. Notice that $\vec a$ is distributed
essentially uniformly, as required by the $\LWE$ oracle.
Finally, as mentioned above, once we are able to compute
$\tau(\vec x)$, computing $\vec x$ is easy (this will be shown in Lemma~\ref{lem:cvpmodp}).

The above outline ignores one important detail: what is the error distribution
in the equations we produce? Recall that the $\LWE$ oracle is only guaranteed
to work with error distribution $\bar{\Psi}_\alpha$. Luckily, as we will
show in Claim~\ref{clm:gaussian_noise} and Corollary~\ref{cor:gaussian_noise_one_dim} (using a rather technical proof), if $\vec x$ is at distance $\beta p/r$
from $L^*$ for some $0 \le \beta \le \alpha$, then the error distribution
in the equations is essentially $\bar{\Psi}_\beta$. (In fact, in order
to get this error distribution, we will have to modify the procedure a bit
and add a small amount of normal error to each equation.)
We then complete
the proof by noting (in Lemma~\ref{lem:learning_smooth}) that an oracle for solving $\LWE_{p,\bar{\Psi}_\alpha}$
can be used to solve $\LWE_{p,\bar{\Psi}_\beta}$ for any $0 \le \beta \le \alpha$
(even if $\beta$ is unknown).

\paragraph{Part 2:} In this part, we describe a quantum algorithm that, using a $\CVP_{L^*, \alpha p/r}$ oracle,
generates one sample from $D_{L,r \sqrt{n}/(\alpha p)}$. Equivalently, we show how to produce a sample from $D_{L,r}$ given a $\CVP_{L^*, \sqrt{n}/r}$ oracle.
The procedure is essentially the following: first, by using the $\CVP$ oracle, create a quantum state corresponding to
$f_{1/r}$. Then, apply the quantum Fourier transform and obtain a quantum state corresponding to $D_{L,r}$. By
measuring this state we obtain a sample from $D_{L,r}$.

In the following, we describe this procedure in more detail. Our first goal is to create a quantum state corresponding
to $f_{1 / r}$. Informally, this can be written as
\begin{equation}\label{eq:quantum_state_1}
 \sum_{\vec x \in \R^n} f_{1 / r} \ket{\vec x}.
\end{equation}
This state is clearly not well-defined. In the actual procedure, $\R^n$ is replaced with some finite set (namely, all
points inside the basic parallelepiped of $L^*$ that belong to some fine grid). This introduces several technical
complications and makes the computations rather tedious. Therefore, in the present discussion, we opt to continue
with informal expressions as in \eqref{eq:quantum_state_1}.

Let us now continue our description of the procedure. In order to prepare the state in \eqref{eq:quantum_state_1}, we
first create the uniform superposition on $L^*$,
$$ \sum_{\vec x \in L^*} \ket{\vec x}. $$
(This step is actually unnecessary in the real procedure, since there we work
in the basic parallelepiped of $L^*$; but for the present discussion,
it is helpful to imagine that we start with this state.)
On a separate register, we create a `Gaussian state' of width $1/r$,
$$ \sum_{\vec z \in \R^n} \expo{-\pi \|r \vec z \|^2} \ket{\vec z}. $$
This can be done using known techniques. The combined state of the system can be written as
$$ \sum_{\vec x \in L^*, \vec z \in \R^n} \expo{-\pi \|r \vec z\|^2} \ket{\vec x, \vec z}. $$
We now add the first register to the second (a reversible operation), and obtain
$$ \sum_{\vec x \in L^*, \vec z \in \R^n} \expo{-\pi \|r \vec z \|^2} \ket{\vec x, \vec x + \vec z}. $$
Finally, we would like to {\em erase}, or {\em uncompute}, the first register to obtain
\begin{align*}
 \sum_{\vec x \in L^*, \vec z \in \R^n} \expo{-\pi \|r \vec z\|^2} \ket{\vec x + \vec z} \approx
 \sum_{\vec z \in \R^n} f_{1 /r}(\vec z) \ket{\vec z}.
\end{align*}
However, `erasing' a register is in general not a reversible operation. In order for it to be reversible, we need to
be able to compute $\vec x$ from the remaining register $\vec x+ \vec z$. This is precisely why we need the $\CVP_{L^*,
\sqrt{n}/r}$ oracle. It can be shown that almost all the mass of $\expo{-\pi \|r \vec z\|^2}$ is on $\vec z$ such that
$\| \vec z \| \le \sqrt{n}/r$. Hence, $\vec x + \vec z$ is within distance $\sqrt{n}/r$ of the lattice and the oracle
finds the closest lattice point, namely, $\vec x$. This allows us to erase the first register in a reversible way.

In the final part of the procedure, we apply the quantum Fourier transform. This yields the quantum state corresponding
to $D_{L, r}$, namely,
$$ \sum_{\vec y \in L} D_{L,r}(\vec y) \ket{\vec y}.$$
By measuring this state, we obtain a sample from the distribution $D_{L, r}$ (or in fact from
$D_{L, r}^2 = D_{L, r/\sqrt{2}}$ but this is a minor issue).

\section{Preliminaries}\label{sec:prelim}

In this section we include some notation that will be used throughout the paper. Most of the notation is standard.
Some of the less standard notation is: the Gaussian function $\rho$ (Eq.~\eqref{def:rho}), the Gaussian distribution $\nu$ (Eq.~\eqref{def:nu}),
the periodic normal distribution $\Psi$ (Eq.~\eqref{def:Psi}), the discretization of a distribution on $\T$ (Eq.~\eqref{def:discretiz}),
the discrete Gaussian distribution $D$ (Eq.~\eqref{eq:discrete_gaussian}), the unique
closest lattice vector $\kappa$ (above Lemma~\ref{lem:transference}), and the smoothing parameter $\eta$ (Definition~\ref{def:smoothingpara}).

\paragraph{General.}
For two real numbers $x$ and $y>0$ we define $x ~\mod~ y$ as
$x - \floor{x/y}y$. For $x\in \R$ we define $\round{x}$ as the integer closest to $x$ or, in case two such integers
exist, the smaller of the two. For any integer $p \ge 2$, we write $\Z_p$ for the cyclic group $\{0,1,\ldots, p-1\}$
with addition modulo $p$. We also write $\T$ for $\R/\Z$, i.e., the segment $[0,1)$ with addition modulo $1$.

We define a {\em negligible amount} in $n$ as an amount that is asymptotically smaller than $n^{-c}$ for any constant $c>0$.
More precisely, $f(n)$ is a negligible function in $n$ if $\lim_{n \to \infty} n^c f(n) = 0$ for any $c>0$. Similarly,
a non-negligible amount is one which is at least $n^{-c}$ for some $c>0$. Also, when we say that an expression is
exponentially small in $n$ we mean that it is at most $2^{-\Omega(n)}$. Finally, when we say that an expression (most
often, some probability) is exponentially close to $1$, we mean that it is $1-2^{-\Omega(n)}$.

We say that an algorithm $\calA$ with oracle access is a {\em distinguisher} between two distributions if its acceptance
probability when the oracle outputs samples of the first distribution and its acceptance probability when the oracle
outputs samples of the second distribution differ by a non-negligible amount.

Essentially all algorithms and reductions in this paper have an exponentially small error probability,
and we sometimes do not state this explicitly.

For clarity, we present some of our reductions in a model that allows operations on real numbers. It is possible to
modify them in a straightforward way so that they operate in a model that approximates real numbers up to an error of
$2^{-n^c}$ for arbitrary large constant $c$ in time polynomial in $n$.

Given two probability density functions $\phi_1,\phi_2$ on $\R^n$, we define the {\em statistical distance} between them as
$$ \Delta(\phi_1,\phi_2) := \int_{\R^n} |\phi_1(\vec x) - \phi_2(\vec x)| d \vec x$$
(notice that with this definition, the statistical distance ranges in $[0,2]$). A similar definition can be given for discrete
random variables. The statistical distance satisfies the triangle inequality, i.e., for any $\phi_1,\phi_2,\phi_3$,
$$ \Delta(\phi_1,\phi_3) \le \Delta(\phi_1, \phi_2) + \Delta(\phi_2,\phi_3).$$
Another important fact which we often use is that the statistical distance cannot increase
by applying a (possibly randomized) function $f$, i.e.,
\begin{equation*}\label{eq:statistical_dst}
\Delta(f(X),f(Y)) \le \Delta(X,Y),
\end{equation*}
see, e.g., \cite{MicciancioBook}. In particular, this implies that the acceptance probability of any algorithm on
inputs from $X$ differs from its acceptance probability on inputs from $Y$ by at most $\frac{1}{2}\Delta(X,Y)$ (the
factor half coming from the choice of normalization in our definition of $\Delta$).

\paragraph{Gaussians and other distributions.}
Recall that the {\em normal distribution} with mean 0 and variance $\sigma^2$ is the distribution on $\R$ given by the
density function $\frac{1}{\sqrt{2\pi}\cdot \sigma} \expo{-\frac{1}{2}(\frac{x}{\sigma})^2}$ where $\expo{y}$ denotes
$e^y$. Also recall that the sum of two independent normal variables with mean 0 and variances $\sigma_1^2$ and
$\sigma_2^2$ is a normal variable with mean 0 and variance $\sigma_1^2+\sigma_2^2$. For a vector $\vec{x}$ and any
$s>0$, let
\begin{align}\label{def:rho}
\rho_{s}(\vec{x}) := \expo{-\pi \|\vec x/s\|^2}
\end{align}
be a Gaussian function scaled by a factor of $s$. We denote $\rho_1$ by $\rho$.
Note that $\int_{\vec x\in \R^n} \rho_{s}(\vec x) d\vec{x} = s^n$.
Hence,
\begin{align}\label{def:nu}
\nu_s := \rho_s / s^n
\end{align}
is an $n$-dimensional probability density function and as before,
we use $\nu$ to denote $\nu_1$. The dimension $n$ is implicit. Notice
that a sample from the Gaussian distribution $\nu_s$ can be obtained by taking $n$ independent samples from the $1$-dimensional Gaussian
distribution. Hence, sampling from $\nu_s$ to within arbitrarily good accuracy can be performed efficiently by using standard techniques.
For simplicity, in this paper we assume that we can sample from
$\nu_s$ exactly.\footnote{In practice, when only finite precision is available, $\nu_s$ can be approximated by picking
a fine grid, and picking points from the grid with probability approximately proportional to $\nu_s$. All our arguments
can be made rigorous by selecting a sufficiently fine grid.}
Functions are extended to sets in the usual way; i.e., $ \rho_{s}(A)  = \sum_{\vec x\in A}
\rho_{s}(\vec x)$ for any countable set $A$. For any vector $\vec c \in \R^n$, we define $\rho_{s, \vec c}(\vec x) :=
\rho_s(\vec x - \vec c)$ to be a shifted version of $\rho_s$.
The following simple claim bounds the amount by which $\rho_s(\vec x)$ can shrink by a small change in $\vec x$.

\begin{claim}\label{clm:gaussianperturb}
For all $s,t,l>0$ and $\vec x, \vec y \in \R^n$ with $\|\vec x\| \le t$ and $\|\vec x-\vec y\| \le l$,
$$ \rho_s(\vec y) \ge (1- \pi (2lt+l^2)/s^2) \rho_s(\vec x).$$
\end{claim}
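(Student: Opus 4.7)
The plan is to reduce the claim to the scalar inequality $e^{-u} \ge 1-u$, which holds for all real $u$ (and in particular is nontrivial only when $u \ge 0$). Since $\rho_s$ is just an exponential of a quadratic form, the ratio $\rho_s(\vec y)/\rho_s(\vec x)$ is a single exponential, so everything boils down to controlling $\|\vec y\|^2 - \|\vec x\|^2$.

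First I would write $\vec y = \vec x + \vec z$ where $\vec z := \vec y - \vec x$ satisfies $\|\vec z\| \le l$. Expanding,
\begin{equation*}
\|\vec y\|^2 - \|\vec x\|^2 = 2\langle \vec x, \vec z \rangle + \|\vec z\|^2 \le 2\|\vec x\|\,\|\vec z\| + \|\vec z\|^2 \le 2tl + l^2,
\end{equation*}
by Cauchy--Schwarz together with the hypotheses $\|\vec x\| \le t$ and $\|\vec z\| \le l$.

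Next, from the definition $\rho_s(\vec v) = \exp(-\pi \|\vec v\|^2/s^2)$, I would form the ratio
\begin{equation*}
\frac{\rho_s(\vec y)}{\rho_s(\vec x)} = \expo{-\pi\bigl(\|\vec y\|^2 - \|\vec x\|^2\bigr)/s^2} \ge \expo{-\pi(2lt + l^2)/s^2},
\end{equation*}
using the above estimate and the fact that $u \mapsto e^{-u}$ is monotonically decreasing. Finally, applying $e^{-u} \ge 1-u$ (valid for all real $u$) with $u = \pi(2lt+l^2)/s^2 \ge 0$, I obtain
\begin{equation*}
\rho_s(\vec y) \ge \bigl(1 - \pi(2lt+l^2)/s^2\bigr)\,\rho_s(\vec x),
\end{equation*}
which is exactly the claim. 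There is no real obstacle here: the bound is tight enough because we only need the first-order Taylor estimate for the exponential, and the inequality is only informative when $\pi(2lt+l^2) < s^2$ anyway, which is the regime of interest (small perturbation relative to the Gaussian width).
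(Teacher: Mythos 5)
Your proof is correct and follows essentially the same route as the paper: bound $\|\vec y\|^2 - \|\vec x\|^2$ by $2tl + l^2$ (the paper does this via the triangle inequality $\|\vec y\| \le \|\vec x\| + l$, you via expanding and Cauchy--Schwarz, which amounts to the same estimate), then apply $e^{-u} \ge 1-u$. No gap; the two arguments are interchangeable.
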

\begin{proof}
Using the inequality $e^{-z} \ge 1-z$,
\begin{align*}
\rho_s(\vec y) = e^{-\pi \|\vec y/s\|^2} &\ge e^{-\pi (\|\vec x\|/s + l/s)^2}
= e^{-\pi (2 l \|\vec x\|/s^2 + (l/s)^2)} \rho_s(\vec x)
\ge (1 - \pi (2 l t + l^2)/s^2) \rho_s(\vec x).
\end{align*}
\end{proof}

For any countable set $A$ and a parameter $s>0$, we define the \emph{discrete Gaussian probability distribution} $D_{A,s}$ as
\begin{equation}\label{eq:discrete_gaussian}
 \forall \vec x \in A, ~ D_{A,s}(\vec x) :=
   \frac{\rho_{s}(\vec x)}{\rho_{s}(A)}.
\end{equation}
See Figure~\ref{fig:dist_d} for an illustration.

For $\beta\in \R^+$ the distribution $\Psi_\beta$ is the distribution on $\T$ obtained by sampling from a normal
variable with mean $0$ and standard deviation $\frac{\beta}{\sqrt{2\pi}}$ and reducing the result modulo $1$ (i.e., a
periodization of the normal distribution),
\begin{align}\label{def:Psi}
 \forall r \in [0,1), ~ \Psi_\beta(r) :=  \sum_{k=-\infty}^{\infty} \frac{1}{\beta} \cdot \expo{-\pi\Big(\frac{r-k}{\beta}\Big)^2}.
\end{align}
Clearly, one can efficiently sample from $\Psi_\beta$. The following technical claim shows that a small
change in the parameter $\beta$ does not change the distribution $\Psi_\beta$ by much.

\begin{claim}\label{clm:stat_dist_norm}
For any $0<\alpha<\beta \le 2\alpha$,
 $$ \Delta(\Psi_\alpha, \Psi_\beta) \le 9 \Big(\frac{\beta}{\alpha} - 1\Big).$$
\end{claim}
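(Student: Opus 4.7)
The plan is to reduce the periodized-Gaussian bound to a statistical distance between non-periodized one-dimensional Gaussians, and then to estimate the latter by differentiating in the scale parameter.

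First, observe that $\Psi_\beta$ is precisely the distribution of $X \bmod 1$ when $X$ has the one-dimensional density $g_\beta(x) := \frac{1}{\beta} \expo{-\pi x^2/\beta^2}$. Thus, by the data-processing inequality for statistical distance stated in Section~\ref{sec:prelim} (applied to the deterministic map $x \mapsto x \bmod 1$),
$$\Delta(\Psi_\alpha, \Psi_\beta) \;\le\; \Delta(g_\alpha, g_\beta) \;=\; \int_\R |g_\alpha(x) - g_\beta(x)|\, dx,$$
so it suffices to bound the right-hand side by $9(\beta/\alpha - 1)$.

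To estimate the integrand, view the scale as a continuous parameter: setting $F(t,x) := \frac{1}{t}\expo{-\pi x^2/t^2}$, one differentiates to obtain
$$\frac{\partial F}{\partial t}(t,x) \;=\; \frac{\expo{-\pi x^2/t^2}}{t^2}\bigg(\frac{2\pi x^2}{t^2} - 1\bigg).$$
By the fundamental theorem of calculus, $g_\beta(x) - g_\alpha(x) = \int_\alpha^\beta \partial_t F(t,x)\, dt$; taking absolute values, integrating over $x$, and applying Tonelli's theorem to swap the order (the integrand is nonnegative) gives
$$\Delta(g_\alpha, g_\beta) \;\le\; \int_\alpha^\beta \bigg(\int_\R \frac{\expo{-\pi x^2/t^2}}{t^2}\,\bigg|\frac{2\pi x^2}{t^2} - 1\bigg|\, dx\bigg)\, dt.$$

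The inner integral is evaluated by the substitution $u = x/t$, which yields $C/t$ for the absolute constant $C := \int_\R e^{-\pi u^2}|2\pi u^2 - 1|\, du$. The key observation is the antiderivative identity $\frac{d}{du}\!\left[-u\, e^{-\pi u^2}\right] = (2\pi u^2 - 1)\, e^{-\pi u^2}$; splitting the integral at the zeros $u = \pm 1/\sqrt{2\pi}$ of $2\pi u^2 - 1$ then makes $C = 4 e^{-1/2}/\sqrt{2\pi} < 1$ immediate. Consequently,
$$\Delta(g_\alpha, g_\beta) \;\le\; C\int_\alpha^\beta \frac{dt}{t} \;=\; C\,\ln(\beta/\alpha) \;\le\; \beta/\alpha - 1,$$
using $\ln(1+x) \le x$ for $x \ge 0$. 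This already beats the claimed factor of $9$ comfortably and, incidentally, does not even use the hypothesis $\beta \le 2\alpha$. The only step requiring minor care in this plan is the interchange of the two integrals; since the integrand is already in absolute-value form, this is just Tonelli's theorem, and the remaining computations are elementary calculus.
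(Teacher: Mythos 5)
Your proof is correct, and it improves on the paper's bound. Both arguments begin identically by invoking the data-processing inequality to replace the periodized distributions $\Psi_\alpha, \Psi_\beta$ with the un-periodized Gaussian densities $g_\alpha, g_\beta$. Where you diverge is in bounding $\int_\R |g_\alpha - g_\beta|$. The paper rescales so $\alpha = 1, \beta = 1+\eps$, splits the integrand by the triangle inequality into a ``different exponent'' piece and a ``different normalization'' piece, bounds the former via $|e^{-cz}-1|\le cz$ and a Gaussian second-moment integral, and arrives at $\eps + \eps(1+\eps)^3 \le 9\eps$, which is where the constant $9$ and the hypothesis $\beta \le 2\alpha$ enter. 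You instead write the difference as $\int_\alpha^\beta \partial_t F(t,x)\,dt$, swap the integrals by Tonelli (legitimate, since the integrand is nonnegative after taking absolute values), and reduce the problem to the scale-invariant constant $C = \int_\R e^{-\pi u^2}|2\pi u^2 - 1|\,du$. Your antiderivative $\tfrac{d}{du}(-u e^{-\pi u^2}) = (2\pi u^2 - 1)e^{-\pi u^2}$ is correct, and splitting at $u = \pm 1/\sqrt{2\pi}$ gives $C = 4e^{-1/2}/\sqrt{2\pi} \approx 0.968 < 1$, hence $\Delta(g_\alpha,g_\beta) \le C\ln(\beta/\alpha) \le \beta/\alpha - 1$. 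This is a strict improvement: the constant drops from $9$ to $1$, and the restriction $\beta \le 2\alpha$ becomes unnecessary. The FTC-in-scale-parameter device is also arguably more systematic, since it isolates the sharp constant $C$ as a single universal integral, whereas the paper's triangle-inequality splitting is lossier. The only cost is that it requires recognizing the antiderivative; the paper's estimate is more elementary if less tight.
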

\begin{proof}
We will show that the statistical distance between a normal variable with standard deviation
$\alpha/\sqrt{2\pi}$ and one with standard deviation $\beta/\sqrt{2\pi}$ is at most
$9 (\frac{\beta}{\alpha} - 1)$. This implies the claim since applying a function (modulo $1$
in this case) cannot increase the statistical distance.
By scaling, we can assume without loss of generality that $\alpha=1$ and $\beta=1+\eps$ for some $0<\eps\le 1$.
Then the statistical distance that we wish to bound is given by
\begin{align*}
 \int_\R \bigg| e^{-\pi x^2} - \frac{1}{1+\eps} \, e^{-\pi x^2 / (1+\eps)^2} \bigg| dx
  &\le
  \int_\R \big| e^{-\pi x^2} - e^{-\pi x^2 / (1+\eps)^2} \big| dx
  + \int_\R \bigg| \Big( 1 - \frac{1}{1+\eps} \Big) e^{-\pi x^2 / (1+\eps)^2} \bigg| dx \\
  &=
  \int_\R \big| e^{-\pi x^2} - e^{-\pi x^2 / (1+\eps)^2} \big| dx
  + \eps \\
  &=
  \int_\R \big| e^{-\pi ( 1 - 1/(1+\eps)^2) x^2} - 1  \big| \, e^{-\pi x^2 / (1+\eps)^2}dx
  + \eps.
\end{align*}
Now, since $1-z \le e^{-z} \le 1$ for all $z \ge 0$,
\begin{align*}
\big| e^{-\pi ( 1 - 1/(1+\eps)^2) x^2} - 1  \big| \le
\pi ( 1 - 1/(1+\eps)^2) x^2 \le 2 \pi \eps x^2.
\end{align*}
Hence we can bound the statistical distance above by
\begin{align*}
\eps + 2 \pi  \eps \int_\R x^2 e^{-\pi x^2 / (1+\eps)^2}  dx =
\eps + \eps(1+\eps)^3 \le 9\eps.
\end{align*}
\end{proof}

For an arbitrary probability distribution with density function
$\phi:\T \rightarrow \R^+$ and some integer $p \ge 1$ we define its discretization $\bar{\phi}:\Z_p \rightarrow \R^+$ as
the discrete probability distribution obtained by sampling from $\phi$, multiplying by $p$, and rounding to the closest
integer modulo $p$. More formally,
\begin{align}\label{def:discretiz}
 \bar{\phi}(i) := \int_{(i-1/2)/p}^{(i+1/2)/p} \phi(x) dx.
\end{align}
As an example, $\bar{\Psi}_\beta$ is shown in Figure~\ref{fig:discretized}.

Let $p \ge 2$ be some integer, and let $\chi:\Z_p \rightarrow \R^+$ be some probability distribution on $\Z_p$. Let $n$
be an integer and let $\vec{s} \in \Z_p^n$ be a vector. We define $A_{\vec{s},\chi}$ as the distribution on $\Z_p^n
\times \Z_p$ obtained by choosing a vector $\vec{a} \in \Z_p^n$ uniformly at random, choosing $e\in \Z_p$ according to
$\chi$, and outputting $(\vec{a},\ip{\vec{a},\vec{s}}+e)$, where additions are performed in $\Z_p$, i.e., modulo $p$.
We also define $U$ as the uniform distribution on $\Z_p^n \times \Z_p$.

For a probability density function $\phi$ on $\T$, we define $A_{\vec{s},\phi}$ as the distribution on $\Z_p^n \times
\T$ obtained by choosing a vector $\vec{a} \in \Z_p^n$ uniformly at random, choosing $e\in \T$ according to $\phi$, and
outputting $(\vec{a},\ip{\vec{a},\vec{s}}/p+e)$, where the addition is performed in $\T$, i.e., modulo $1$.

\paragraph{Learning with errors.}
For an integer $p=p(n)$ and a distribution $\chi$ on $\Z_p$, we say that an algorithm solves $\LWE_{p,\chi}$ if,
for any $\vec s \in \Z_p^n$, given samples from $A_{\vec{s},\chi}$ it outputs $\vec s$ with probability
exponentially close to $1$. Similarly, for a probability density function $\phi$ on $\T$, we say that an algorithm solves $\LWE_{p,\phi}$
if, for any $\vec s \in \Z_p^n$, given samples from $A_{\vec{s},\phi}$ it outputs $\vec s$ with probability
exponentially close to $1$. In both cases, we say that the algorithm is efficient if it runs in polynomial time in $n$.
Finally, we note that $p$ is assumed to be prime only in Lemma~\ref{lem:decision_search}; In the rest of the paper,
including the main theorem, $p$ can be an arbitrary integer.

\paragraph{Lattices.}
We briefly review some basic definitions; for a good introduction to lattices, see \cite{MicciancioBook}. A {\em lattice} in
$\R^n$ is defined as the set of all integer combinations of $n$ linearly independent vectors. This set of vectors is
known as a {\em basis} of the lattice and is not unique. Given a basis $(\vec v_1,\ldots,\vec v_n)$ of a lattice $L$, the
{\em fundamental parallelepiped} generated by this basis is defined as
$$\calP(\vec v_1,\ldots,\vec v_n) = \left\{ \left. \sum_{i=1}^n x_i \vec v_i ~\right|~ x_i \in [0,1) \right\}.$$
When the choice of basis is clear, we write $\calP(L)$ instead of $\calP(\vec v_1,\ldots,\vec v_n)$.
For a point $\vec x\in \R^n$ we define $\vec x ~\mod~ \calP(L)$ as the unique
point $\vec y\in \calP(L)$ such that $\vec y-\vec x \in L$.
We denote by $\det(L)$ the volume of the
fundamental parallelepiped of $L$ or equivalently, the absolute value of the determinant of the matrix whose columns are the basis vectors of
the lattice ($\det(L)$ is a lattice invariant, i.e., it is independent of the choice of basis). The {\em dual} of a lattice
$L$ in $\R^n$, denoted $L^*$, is the lattice given by the set of all vectors $\vec y\in \R^n$ such that $\ip{\vec x,\vec y}\in \Z$ for all vectors $\vec x\in
L$. Similarly, given a basis $(\vec v_1,\ldots,\vec v_n)$ of a lattice, we define the dual basis as the set of vectors
$(\vec v_1^*,\ldots,\vec v_n^*)$ such that $\ip{\vec v_i,\vec v_j^*} = \delta_{ij}$ for all $i,j\in [n]$ where
$\delta_{ij}$ denotes the Kronecker delta, i.e., $1$ if $i=j$ and $0$ otherwise. With a slight abuse of notation, we
sometimes write $L$ for the $n \times n$ matrix whose columns are $\vec v_1,\ldots,\vec v_n$. With this notation, we
notice that $L^* = (L^T)^{-1}$. From this it follows that $\det(L^*)=1/\det(L)$. As another example of this notation,
for a point $\vec{v} \in L$ we write $L^{-1}\vec{v}$ to indicate the integer coefficient vector of $\vec{v}$.

Let $\lambda_1(L)$ denote the length of the shortest nonzero vector in the lattice $L$. We denote by $\lambda_n(L)$ the
minimum length of a set of $n$ linearly independent vectors from $L$, where the length of a set is defined as the
length of longest vector in it. For a lattice $L$ and a point $\vec v$ whose distance from $L$ is less than $\lambda_1(L)/2$
we define $\kappa_L(\vec v)$ as the (unique) closest point to $\vec v$ in $L$. The following useful fact,
due to Banaszczyk, is known as a `transference theorem'. We remark that the lower bound is easy to prove.

\begin{lemma}[\cite{Banaszczyk}, Theorem 2.1]\label{lem:transference}
For any lattice $n$-dimensional $L$, $1 \le \lambda_1(L) \cdot \lambda_n(L^*) \le n$.
\end{lemma}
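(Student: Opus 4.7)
Pick $\vec v \in L \setminus \{0\}$ realizing $\lambda_1(L)$, and linearly independent vectors $\vec u_1,\ldots,\vec u_n \in L^*$ with $\max_i \|\vec u_i\| = \lambda_n(L^*)$. Since the $\vec u_i$'s span $\R^n$, at least one satisfies $\ip{\vec v,\vec u_i} \neq 0$; by the definition of $L^*$ this inner product is an integer, so $|\ip{\vec v,\vec u_i}| \ge 1$. Cauchy--Schwarz then gives $1 \le \|\vec v\|\,\|\vec u_i\| \le \lambda_1(L)\,\lambda_n(L^*)$, proving the left inequality.

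\textbf{Upper bound --- strategy.} This direction is Banaszczyk's theorem, and the plan is to follow his Fourier-analytic route. The centerpiece is the Gaussian tail estimate
$$\rho(M \setminus c\sqrt{n}\,B_n) \le \bigl(c\sqrt{2\pi e}\cdot e^{-\pi c^2}\bigr)^n \rho(M)$$
valid for every $n$-dimensional lattice $M$ and every $c \ge 1$, the parenthesized factor being strictly less than $1$ (already at $c = 1$). I would prove this by applying the Poisson summation formula to the shifted Gaussian $\vec x \mapsto \rho_s(\vec x - \vec w)$ on $M$ at an appropriate scale $s$, then averaging the shift $\vec w$ against an auxiliary Gaussian so that the resulting identity compares the mass of $M$ inside and outside the ball of radius $c\sqrt{n}$, reducing the whole inequality to an explicit one-dimensional Gaussian integral.

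\textbf{Upper bound --- derivation.} Argue by contradiction: suppose $\lambda_1(L)\,\lambda_n(L^*) > n$, and rescale so that $\lambda_1(L) = \sqrt{n}$, whence $\lambda_n(L^*) > \sqrt{n}$. Applied to $L$ with $c = 1$, the tail estimate gives $\rho(L) - 1 = \rho(L \setminus \{0\}) \le \alpha^n \rho(L)$ for a fixed $\alpha < 1$, so $\rho(L) < 2$ for all $n$ large. Poisson summation $\rho(L^*) = \det(L)\,\rho(L)$ then yields an upper bound on $\rho(L^*)$. On the other hand, $\lambda_n(L^*) > \sqrt{n}$ forces $L^* \cap \sqrt{n}\,B_n$ to lie in a proper subspace $V \subsetneq \R^n$; in particular, every vector of $L^* \setminus V$ has norm $> \sqrt{n}$, and a second application of the tail estimate concentrates almost all of $\rho(L^*)$ on the sublattice $L^* \cap V$ of dimension at most $n-1$. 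Comparing the mass $\rho(L^* \cap V)$ (a Gaussian sum in dimension $\le n-1$, controlled via the $(n-1)$-dimensional Poisson identity applied in $V$) with the total value of $\rho(L^*)$ forced by Poisson in $\R^n$ produces the required contradiction.

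The main obstacle I foresee is extracting the sharp constant $c\sqrt{2\pi e}\cdot e^{-\pi c^2}$ in the tail estimate: the Poisson step itself is routine, but pairing it with the correct shifted-Gaussian averaging so that this constant is strictly below $1$ already at $c = 1$ is the technical heart of Banaszczyk's argument, and without that strict inequality the contradiction step collapses.
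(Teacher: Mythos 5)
Your lower-bound argument is correct and is the standard one; the paper itself only remarks that the lower bound is ``easy to prove'' and cites the entire lemma to Banaszczyk's Theorem~2.1 without reproducing a proof, so there is no in-paper argument to compare against.

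For the upper bound, your outline is in the right Fourier-analytic spirit, but the derivation step would not go through as written, and the gap is not merely the constant you flag at the end. You propose to compare $\rho(L^*\cap V)$ with $\rho(L^*)$ via Poisson summation, but that comparison does not produce a contradiction: the near-equality $\rho(L^*)\approx \rho(L^*\cap V)$ is perfectly consistent with everything you have set up, since it only says the long vectors of $L^*$ carry negligible mass, which you arranged by construction, and the $(n-1)$-dimensional Poisson identity on $L^*\cap V$ involves the determinant of that sublattice, a quantity your argument does not control. The missing ingredient is Banaszczyk's \emph{coset} tail estimate (Lemma 1.5(ii) of his paper): for any lattice $M$, shift $\vec u$, and $c\ge 1/\sqrt{2\pi}$, one has $\rho\bigl((M+\vec u)\setminus c\sqrt{n}B_n\bigr)<\bigl(c\sqrt{2\pi e}\,e^{-\pi c^2}\bigr)^n\rho(M)$ with $\rho(M)$, not $\rho(M+\vec u)$, on the right-hand side. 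You only state the centered version. Banaszczyk's actual derivation proves $\lambda_1(L)\,\mu(L^*)\le n/2$ (where $\mu$ is the covering radius) and then uses $\lambda_n(L^*)\le 2\mu(L^*)$: rescale so $\lambda_1(L)=\sqrt{n}$, suppose some $\vec u$ has $\dist(\vec u,L^*)>\sqrt{n}/2$; the coset tail bound with $c=1/2$ (whose constant $\tfrac12\sqrt{2\pi e}\,e^{-\pi/4}\approx 0.94<1$) forces $\rho(L^*+\vec u)$ to be exponentially small relative to $\rho(L^*)$, while the shifted Poisson formula $\rho(L^*+\vec u)=\det(L)\sum_{\vec x\in L}e^{2\pi i\ip{\vec u,\vec x}}\rho(\vec x)$ together with the smallness of $\rho(L\setminus\{0\})$ forces $\rho(L^*+\vec u)\approx\det(L)\approx\rho(L^*)$, a contradiction. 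Your subspace-$V$ route never brings the covering radius or the coset estimate into play, so the contradiction never materializes. (A minor further point: rescaling to $\lambda_1(L)=\sqrt{n}$ exactly leaves vectors of norm exactly $\sqrt{n}$ uncontrolled by the strict tail bound; one should rescale so both $\lambda_1(L)$ and $\lambda_n(L^*)$ strictly exceed $\sqrt{n}$, which is possible precisely when their product exceeds $n$.)
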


\noindent
Two other useful facts by Banaszczyk are the following. The first bounds the amount by
which the Gaussian measure of a lattice changes by scaling; the second shows that
for any lattice $L$, the mass given by the discrete Gaussian measure $D_{L,r}$
to points of norm greater than $\sqrt{n}r$ is at most exponentially small (the analogous statement for the continuous
Gaussian $\nu_r$ is easy to establish).

\begin{lemma}[\cite{Banaszczyk}, Lemma 1.4(i)]\label{lem:bana1}
For any lattice $L$ and $a \ge 1$, $\rho_a(L) \le a^n \rho(L)$.
\end{lemma}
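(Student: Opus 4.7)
The plan is to use Poisson summation, which is the standard tool for relating Gaussian sums over a lattice to Gaussian sums over its dual. Recall that the Fourier transform of the Gaussian $\rho_s$ (with our normalization) is $\widehat{\rho_s}(\vec y) = s^n \rho_{1/s}(\vec y)$, and the Poisson summation formula says that for any Schwartz function $f$,
$$\sum_{\vec x \in L} f(\vec x) \;=\; \frac{1}{\det(L)} \sum_{\vec y \in L^*} \widehat{f}(\vec y).$$

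Applying this to $f = \rho_a$ and to $f = \rho$ respectively, I would obtain
$$\rho_a(L) \;=\; \frac{a^n}{\det(L)} \, \rho_{1/a}(L^*), \qquad \rho(L) \;=\; \frac{1}{\det(L)} \, \rho(L^*).$$
Dividing the first by the second gives the clean identity
$$\frac{\rho_a(L)}{\rho(L)} \;=\; a^n \cdot \frac{\rho_{1/a}(L^*)}{\rho(L^*)}.$$

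Now the key observation is that for $a \ge 1$ we have $1/a \le 1$, so $\rho_{1/a}(\vec y) = \exp(-\pi \|a \vec y\|^2) \le \exp(-\pi \|\vec y\|^2) = \rho(\vec y)$ for every single $\vec y \in \R^n$. Summing this pointwise inequality over all $\vec y \in L^*$ yields $\rho_{1/a}(L^*) \le \rho(L^*)$, and plugging into the displayed identity gives $\rho_a(L) \le a^n \rho(L)$, which is the claim.

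I do not anticipate a serious obstacle: the only nontrivial input is Poisson summation together with the self-duality of the Gaussian under Fourier transform, both of which are standard. The one thing to keep in mind is the normalization of the Fourier transform (with the $2\pi$ in the exponent built into the Gaussian, the transform of $\rho_s$ is $s^n \rho_{1/s}$ without extra factors), but this is just bookkeeping. The bound is in fact tight up to a dual-lattice correction factor, and one sees from the proof that equality would require $L^*$ to consist of only the origin, which never happens for a genuine lattice, so the inequality is always strict; this observation is not needed, however.
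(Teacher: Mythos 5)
The paper does not prove this lemma; it imports it directly from Banaszczyk~\cite{Banaszczyk} as a black box, so there is no in-paper argument to compare against. Your Poisson-summation proof is correct: with $\widehat{\rho_s} = s^n \rho_{1/s}$ and $\rho_s(L) = \det(L^*)\,\widehat{\rho_s}(L^*)$ you get $\rho_a(L)/\rho(L) = a^n\,\rho_{1/a}(L^*)/\rho(L^*)$, and the pointwise bound $\rho_{1/a}(\vec y) = e^{-\pi a^2\|\vec y\|^2} \le e^{-\pi\|\vec y\|^2} = \rho(\vec y)$ for $a\ge 1$ finishes it. This is in fact the standard (Banaszczyk-style) argument, so the approach matches the cited source even though the paper itself omits the proof.
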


\begin{lemma}[\cite{Banaszczyk}, Lemma 1.5(i)]\label{lem:bana2}
Let $B_n$ denote the Euclidean unit ball. Then, for any lattice $L$ and any $r>0$,
 $ \rho_r(L \setminus \sqrt{n} r B_n) < 2^{-2n} \cdot \rho_r(L)$,
 where $L \setminus \sqrt{n} r B_n$ is the set of lattice points
 of norm greater than $\sqrt{n} r$.
\end{lemma}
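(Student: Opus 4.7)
The plan is to rescale, split the exponent, and apply Lemma~\ref{lem:bana1}. First, by a scaling argument, one may assume $r = 1$: substituting $L' = L/r$ gives $\rho_r(L) = \rho(L')$ and $\rho_r(L \setminus \sqrt{n} r B_n) = \rho(L' \setminus \sqrt{n} B_n)$, so it is enough to prove $\rho(L \setminus \sqrt{n} B_n) < 2^{-2n} \rho(L)$ for an arbitrary lattice $L$.

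The key trick is to introduce a free parameter $c \in (0,1)$ and write, for any $\vec x$ with $\| \vec x \| > \sqrt{n}$,
$$ e^{-\pi \|\vec x\|^2} = e^{-\pi(1-c)\|\vec x\|^2} \cdot e^{-\pi c \|\vec x\|^2} < e^{-\pi(1-c) n} \cdot e^{-\pi c \|\vec x\|^2}. $$
Summing over all such lattice points (and bounding the remaining sum by a sum over all of $L$) gives
$$ \rho(L \setminus \sqrt{n} B_n) < e^{-\pi(1-c) n} \sum_{\vec x \in L} e^{-\pi c \|\vec x\|^2} = e^{-\pi(1-c) n} \cdot \rho_{1/\sqrt{c}}(L). $$

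Now I would apply Lemma~\ref{lem:bana1} with $a = 1/\sqrt{c} \ge 1$ to control the right-hand side: $\rho_{1/\sqrt{c}}(L) \le c^{-n/2} \rho(L)$. Substituting yields
$$ \rho(L \setminus \sqrt{n} B_n) < \bigl( e^{-\pi(1-c)} c^{-1/2} \bigr)^n \rho(L), $$
so it remains to choose $c$ so that $e^{-\pi(1-c)} c^{-1/2} < 1/4$. Taking logarithms, this is equivalent to $\pi(1-c) + \tfrac{1}{2}\ln c > 2\ln 2$, and plugging in $c = 1/(2\pi)$ gives $(\pi - \tfrac{1}{2}) - \tfrac{1}{2}\ln(2\pi) \approx 2.64$, while $2\ln 2 \approx 1.39$, so the inequality is comfortably satisfied.

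The argument is genuinely short once the splitting trick is in hand; the only place requiring care is the constant in the exponent. The main ``obstacle,'' such as it is, is selecting a $c$ that produces the factor $2^{-2n}$ rather than a weaker bound like $2^{-n}$; differentiating $\pi(1-c) + \tfrac{1}{2}\ln c$ shows the optimum is at $c = 1/(2\pi)$, and this choice leaves enough slack to recover Banaszczyk's $2^{-2n}$.
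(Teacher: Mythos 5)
The paper does not prove this lemma---it is cited directly from Banaszczyk \cite{Banaszczyk} without proof---so there is no internal proof to compare against. Your argument is a correct, self-contained proof of the standard Chernoff/Markov-type kind (essentially the proof Banaszczyk himself gives): rescale to $r=1$, split the Gaussian exponent using a free parameter $c \in (0,1)$, bound the truncated sum by a full sum at a wider scale, and invoke Lemma~\ref{lem:bana1} to compare back to $\rho(L)$. The logic and the choice $c = 1/(2\pi)$ (which indeed maximizes $\pi(1-c) + \tfrac{1}{2}\ln c$) are both right.

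One small arithmetic slip: $(\pi - \tfrac{1}{2}) - \tfrac{1}{2}\ln(2\pi) \approx 2.642 - 0.919 \approx 1.72$, not $2.64$; it looks like you evaluated only the $\pi - \tfrac{1}{2}$ term. Since $1.72 > 2\ln 2 \approx 1.39$, the conclusion is unaffected and the factor $2^{-2n}$ is still recovered, but the margin is smaller than you stated, so it is worth fixing the number.
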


In this paper we consider the following lattice problems. The first two, the decision
version of the shortest vector problem ($\GapSVP$) and
the shortest independent vectors problem ($\SIVP$), are among
the most well-known lattice problems and are concerned with $\lambda_1$ and $\lambda_n$, respectively.
In the definitions below, $\gamma = \gamma(n) \ge 1$ is the approximation factor, and
the input lattice is given in the form of some arbitrary basis.

\begin{definition}
An instance of $\GapSVP_\gamma$ is given by an $n$-dimensional lattice $L$ and a number $d>0$.
In $\YES$ instances, $\lambda_1(L)\leq d$ whereas in $\NO$ instances
$\lambda_1(L)> \gamma(n) \cdot d$.
\end{definition}

\begin{definition}
An instance of $\SIVP_\gamma$ is given by an $n$-dimensional lattice $L$.
The goal is to output a set of $n$ linearly independent
lattice vectors of length at most $\gamma(n) \cdot \lambda_n(L)$.
\end{definition}

\noindent
A useful generalization of $\SIVP$ is the following somewhat less standard
lattice problem, known as the generalized independent vectors problem
($\GIVP$). Here, $\varphi$ denotes an arbitrary real-valued function on lattices. Choosing
$\varphi=\lambda_n$ results in $\SIVP$.

\begin{definition}
An instance of $\GIVP^\varphi_{\gamma}$ is given by an $n$-dimensional lattice $L$.
The goal is to output a set of $n$ linearly independent lattice
vectors of length at most $\gamma(n) \cdot \varphi(L)$.
\end{definition}

\noindent
Another useful (and even less standard) lattice problem is the following. We call it
the discrete Gaussian sampling problem ($\DGS$). As before, $\varphi$ denotes
an arbitrary real-valued function on lattices.

\begin{definition}
An instance of $\DGS_\varphi$ is given by an $n$-dimensional lattice $L$
and a number $r > \varphi(L)$. The goal is to output a sample
from $D_{L,r}$.
\end{definition}

\noindent
We also consider a variant of the closest
vector problem (which is essentially what is known as the bounded distance decoding problem \cite{LiLyMi06}): For an $n$-dimensional lattice $L$, and some $d>0$, we say that an algorithm solves $\CVP_{L,d}$ if, given a
point $\vec x \in \R^n$ whose distance to $L$ is at most $d$, the algorithm finds the closest lattice point to $\vec
x$. In this paper $d$ will always be smaller than $\lambda_1(L)/2$ and hence the closest vector is
unique.

\paragraph{The smoothing parameter.}
We make heavy use of a lattice parameter known as the {\em smoothing parameter} \cite{MicciancioR04}.
Intuitively, this parameter provides the width beyond which the discrete Gaussian measure
on a lattice behaves like a continuous one. The precise definition is the following.

\begin{definition}\label{def:smoothingpara}
For an $n$-dimensional lattice $L$ and positive real $\epsilon>0$, we define the smoothing parameter
$\eta_\epsilon(L)$ to be the smallest $s$ such that $\rho_{1/s}(L^* \setminus \{\vec 0\}) \leq \epsilon$.
\end{definition}

\noindent
In other words, $\eta_\eps(L)$ is the smallest $s$ such that a Gaussian measure scaled by $1/s$ on the dual lattice $L^*$
gives all but $\eps/(1+\eps)$ of its weight to the origin. We usually take $\eps$ to be some negligible function
of the lattice dimension $n$. Notice that $\rho_{1/s}(L^* \setminus\{\vec 0\})$ is a continuous and strictly decreasing
function of $s$. Moreover, it can be shown that $\lim_{s\to 0}\rho_{1/s}(L^* \setminus\{\vec 0\}) = \infty$ and $\lim_{s\to
\infty}\rho_{1/s}(L^* \setminus\{\vec 0\}) = 0$. So, the parameter $\eta_\epsilon(L)$ is well defined for any
$\epsilon>0$, and $\epsilon \mapsto \eta_\epsilon(L)$ is the inverse function of $s \mapsto \rho_{1/s}(L^* \setminus \{
\vec 0\})$. In particular, $\eta_{\epsilon}(L)$ is also a continuous and strictly decreasing function of $\epsilon$.

The motivation for this definition (and the name `smoothing parameter') comes from the following result, shown in
\cite{MicciancioR04} (and included here as Claim~\ref{clm:gaussianweightshiftinvariant}). Informally, it says that if we choose a `random' lattice point from an $n$-dimensional lattice
$L$ and add continuous Gaussian noise $\nu_s$ for some $s > \eta_\eps(L)$ then the resulting distribution is within statistical distance
$\eps$ of the `uniform distribution on $\R^n$'. In this
paper, we show (in Claim~\ref{clm:gaussian_noise}) another important property of this parameter: for $s > \sqrt{2} \eta_\eps(L)$, if we sample a point from
$D_{L,s}$ and add Gaussian noise $\nu_s$, we obtain a distribution whose statistical distance to a continuous Gaussian
$\nu_{\sqrt{2} s}$ is at most $4\eps$. Notice that $\nu_{\sqrt{2} s}$ is the distribution one obtains when
summing two independent samples from $\nu_s$. Hence, intuitively, the noise $\nu_s$ is enough to hide the
discrete structure of $D_{L,s}$.

The following two upper bounds on the smoothing parameter appear in \cite{MicciancioR04}.

\begin{lemma}\label{lem:lambda1}
For any $n$-dimensional lattice $L$, $\eta_\epsilon(L) \leq \sqrt{n}/\lambda_1(L^*)$ where $\epsilon=2^{-n}$.
\end{lemma}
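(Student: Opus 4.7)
The plan is to show that $s := \sqrt{n}/\lambda_1(L^*)$ satisfies $\rho_{1/s}(L^*\setminus\{\vec 0\}) \le 2^{-n}$, which by the definition of $\eta_\epsilon$ immediately yields $\eta_{2^{-n}}(L)\le s$.

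The natural tool is Banaszczyk's Lemma~\ref{lem:bana2}, which controls the Gaussian mass a lattice places outside a ball of radius $\sqrt{n}$ times the Gaussian width. Applied to the lattice $L^*$ with parameter $r := 1/s = \lambda_1(L^*)/\sqrt{n}$, it yields
\[
\rho_{1/s}\bigl(L^* \setminus \sqrt{n}\, r\, B_n\bigr) \;<\; 2^{-2n}\, \rho_{1/s}(L^*).
\]
The key point is that $\sqrt{n}\, r = \lambda_1(L^*)$, so the ball $\sqrt{n}\, r\, B_n$ contains no nonzero lattice vector (up to the boundary of norm exactly $\lambda_1(L^*)$, which can be handled by taking $r$ infinitesimally larger and using continuity/monotonicity of $\rho_{1/s}(L^*\setminus\{\vec 0\})$ in $s$). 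Hence the left-hand side is $\rho_{1/s}(L^* \setminus \{\vec 0\})$.

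Writing $\rho_{1/s}(L^*) = 1 + \rho_{1/s}(L^*\setminus\{\vec 0\})$ and substituting gives
\[
\rho_{1/s}(L^*\setminus\{\vec 0\}) \;<\; 2^{-2n}\bigl(1 + \rho_{1/s}(L^*\setminus\{\vec 0\})\bigr),
\]
so
\[
\rho_{1/s}(L^*\setminus\{\vec 0\}) \;<\; \frac{2^{-2n}}{1-2^{-2n}}.
\]
An elementary check shows $2^{-2n}/(1-2^{-2n}) \le 2^{-n}$ for every $n \ge 1$ (equivalent to $2^{-n}+2^{-2n}\le 1$), which finishes the proof.

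The only real subtlety is the boundary case when $L^*$ happens to contain vectors of norm exactly $\lambda_1(L^*)$; this is not a serious obstacle, since applying Banaszczyk's lemma with $r' := \lambda_1(L^*)/\sqrt{n} + \delta$ for any $\delta > 0$ avoids the boundary entirely, and then letting $\delta \to 0^+$ and using that $s \mapsto \rho_{1/s}(L^*\setminus\{\vec 0\})$ is continuous (indeed strictly decreasing, as noted right after Definition~\ref{def:smoothingpara}) gives the bound at $s$ itself.
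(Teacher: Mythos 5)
The overall strategy---reduce to showing $\rho_{1/s}(L^*\setminus\{\vec 0\})\le 2^{-n}$ for $s=\sqrt{n}/\lambda_1(L^*)$, apply Lemma~\ref{lem:bana2} to $L^*$, and solve the resulting inequality $\rho_{1/s}(L^*\setminus\{\vec 0\})< 2^{-2n}\big(1+\rho_{1/s}(L^*\setminus\{\vec 0\})\big)$---is exactly the right one, and the closing arithmetic $2^{-2n}/(1-2^{-2n})\le 2^{-n}$ is correct. You also correctly spot that the boundary matters: with $r=1/s=\lambda_1(L^*)/\sqrt{n}$, the shortest nonzero vectors of $L^*$ have norm \emph{exactly} $\sqrt{n}r$, and as the paper defines it, $L^*\setminus\sqrt{n}\,r\,B_n$ consists of points of norm \emph{strictly greater than} $\sqrt{n}r$, so the shortest vectors are not counted in Lemma~\ref{lem:bana2}'s left-hand side. (The paper itself gives no proof of Lemma~\ref{lem:lambda1}, citing \cite{MicciancioR04}, so there is nothing in-text to compare against; but this is the standard argument.)

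However, your perturbation goes in the wrong direction. You propose $r' := \lambda_1(L^*)/\sqrt{n}+\delta$, i.e., a \emph{larger} Gaussian width parameter. Then $\sqrt{n}\,r' = \lambda_1(L^*)+\sqrt{n}\,\delta > \lambda_1(L^*)$, so the ball $\sqrt{n}\,r'\,B_n$ becomes bigger and now \emph{contains} the shortest vectors of $L^*$; consequently $L^*\setminus\sqrt{n}\,r'\,B_n$ still omits them, and letting $\delta\to 0^+$ only recovers the deficient bound $\rho_{1/s}\big(\{v\in L^*:\|v\|>\lambda_1(L^*)\}\big)$, not $\rho_{1/s}(L^*\setminus\{\vec 0\})$. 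You should take $r' := \lambda_1(L^*)/\sqrt{n}-\delta$ (equivalently, $s' := \sqrt{n}/\lambda_1(L^*)+\delta'$): then $\sqrt{n}\,r' < \lambda_1(L^*)$, the ball contains no nonzero point of $L^*$, so $L^*\setminus\sqrt{n}\,r'\,B_n = L^*\setminus\{\vec 0\}$ exactly, and Lemma~\ref{lem:bana2} gives $\rho_{r'}(L^*\setminus\{\vec 0\})\le 2^{-n}$ for every such $r'$. Now continuity of $r\mapsto\rho_r(L^*\setminus\{\vec 0\})$ lets you pass to the limit $\delta\to 0^+$ and conclude $\rho_{1/s}(L^*\setminus\{\vec 0\})\le 2^{-n}$, hence $\eta_{2^{-n}}(L)\le s$. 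With this one sign flipped, the proof is correct.
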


\begin{lemma}\label{lem:lambdan}
For any $n$-dimensional lattice $L$ and $\epsilon>0$,
\[ \eta_\epsilon(L) \leq \sqrt{\frac{\ln(2n(1+1/\epsilon))}{\pi}} \cdot \lambda_n(L).\]
In particular, for any superlogarithmic function $\omega(\log n)$, $\eta_{\epsilon(n)}(L) \leq \sqrt{\omega(\log n)}
\cdot \lambda_n(L)$ for some negligible function $\epsilon(n)$.
\end{lemma}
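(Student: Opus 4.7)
The plan is to reduce the claim to a sublattice of $L$ spanned by short vectors, and then control the Gaussian mass on the dual via a Gram--Schmidt expansion. By definition of $\lambda_n(L)$, pick $n$ linearly independent $\vec v_1,\dots,\vec v_n\in L$ with $\|\vec v_i\|\le \lambda_n(L)$, and set $L'=\sum_i \Z\vec v_i\subseteq L$. Since $L^*\subseteq (L')^*$, it suffices to show $\rho_{1/s}((L')^*\setminus\{\vec{0}\})\le\eps$ for $s=\sqrt{\ln(2n(1+1/\eps))/\pi}\cdot\lambda_n(L)$.

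For the main step, I would use the forward Gram--Schmidt orthogonalization $\tilde{\vec v}_j$ of the $\vec v_i$'s (so $\|\tilde{\vec v}_j\|\le\lambda_n(L)$) together with a \emph{reverse}-order Gram--Schmidt orthogonalization of the dual basis $\vec v_n^*,\dots,\vec v_1^*$. A short QR-decomposition calculation shows that the resulting reverse-GSO vectors $\tilde{\vec u}_j$ are orthogonal, satisfy $\|\tilde{\vec u}_j\|=1/\|\tilde{\vec v}_j\|$, and $\vec v_i^*=\tilde{\vec u}_i+\sum_{j>i}c_{ij}\tilde{\vec u}_j$ for some real coefficients $c_{ij}$. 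Hence every $\vec y=\sum_i k_i\vec v_i^*\in (L')^*$ satisfies $\|\vec y\|^2=\sum_j m_j(\vec k)^2/\|\tilde{\vec v}_j\|^2$ where $m_j(\vec k)=k_j+\sum_{i<j}k_ic_{ij}$. I then sum $\rho_{1/s}((L')^*)$ over $\vec k\in\Z^n$ iteratively, innermost over $k_n$ first: for any fixed $k_1,\dots,k_{n-1}$, $m_n(\vec k)$ is just $k_n$ shifted by a real constant, so Banaszczyk's shift inequality $\rho_{r,\vec c}(\Lambda)\le\rho_r(\Lambda)$ bounds the inner sum by $\theta(\|\tilde{\vec v}_n\|/s)$, where $\theta(t):=\rho_t(\Z)$. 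The key point is that this bound is independent of the outer summation variables, so iterating yields $\rho_{1/s}((L')^*)\le \prod_j\theta(\|\tilde{\vec v}_j\|/s)$; subtracting the $\vec k=\vec{0}$ contribution and using $\|\tilde{\vec v}_j\|\le\lambda_n(L)$ with monotonicity of $\theta$ gives $\rho_{1/s}((L')^*\setminus\{\vec{0}\})\le\theta(\lambda_n(L)/s)^n-1$.

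The rest is a short algebraic verification: the elementary bound $\theta(t)-1\le 2e^{-\pi/t^2}/(1-e^{-\pi/t^2})$ (from $k^2\ge k$ for $k\ge 1$) combined with the choice of $s$ (which makes $e^{-\pi s^2/\lambda_n(L)^2}$ equal to $1/(2n(1+1/\eps))$) reduces the claim to the elementary inequality $\ln(1+\eps)\ge 2n\eps/(2n+(2n-1)\eps)$, which is easily verified by checking that the right-hand side is decreasing in $n$ and then evaluating at $n=1$. The ``in particular'' clause is immediate by choosing $\eps(n)=2n\,e^{-\pi\omega(\log n)}$, which is negligible. The main obstacle I expect is preserving the logarithmic $\sqrt{\ln(n/\eps)}$ dependence: a naive argument using $\lambda_{\max}(V^TV)\le n\lambda_n(L)^2$ would only reduce the sum to a Gaussian on $\sqrt{n}\lambda_n(L)\cdot\Z^n$ and lose a spurious $\sqrt{n}$ factor, yielding the much weaker Lemma~\ref{lem:lambda1}-type bound. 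The reverse-GSO trick is exactly what sidesteps this obstacle, by absorbing the off-diagonal ``skewness'' of $V^TV$ into translation offsets that Banaszczyk's shift inequality handles for free.
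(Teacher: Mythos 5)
The paper does not prove this lemma itself but cites it from \cite{MicciancioR04}; your argument --- passing to the sublattice $L'$ generated by $n$ vectors of length $\le\lambda_n(L)$, bounding $\rho_{1/s}((L')^*)$ by a product of one-dimensional theta functions $\prod_j\theta(\|\tilde{\vec v}_j\|/s)$ via the reverse-order Gram--Schmidt duality $\|\tilde{\vec u}_j\|=1/\|\tilde{\vec v}_j\|$ and the shift-domination $\rho_{t,c}(\Z)\le\rho_t(\Z)$, then using $\theta(t)-1\le 2e^{-\pi/t^2}/(1-e^{-\pi/t^2})$ and $\ln(1+\eps)\ge 2\eps/(2+\eps)$ --- is essentially that proof, and all the steps check out. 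The one slip is in the ``in particular'' clause: with $\eps(n)=2n\,e^{-\pi\omega(\log n)}$ you get $\ln\bigl(2n(1+1/\eps)\bigr)=\ln\bigl(2n+e^{\pi\omega(\log n)}\bigr)>\pi\,\omega(\log n)$, so the main bound gives something strictly larger than $\sqrt{\omega(\log n)}\,\lambda_n(L)$; instead take $\eps(n)=2n/\bigl(e^{\pi\omega(\log n)}-2n\bigr)$, which is well defined and negligible for large $n$ (since $\omega$ is superlogarithmic) and makes $2n(1+1/\eps(n))=e^{\pi\omega(\log n)}$ exactly.
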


We also need the following simple lower bound on the smoothing parameter.

\begin{claim}\label{clm:lowerboundsmoothing}
For any lattice $L$ and any $\eps>0$,
$$ \eta_\eps(L) \ge \sqrt{\frac{\ln 1/\eps}{\pi}} \cdot \frac{1}{\lambda_1(L^*)}
 \ge \sqrt{\frac{\ln 1/\eps}{\pi}} \cdot \frac{\lambda_n(L)}{n}.$$
In particular, for any $\eps(n)=o(1)$ and any constant $c>0$, $\eta_{\eps(n)}(L) > c / \lambda_1(L^*) \ge c \lambda_n(L) / n$
for large enough $n$.
\end{claim}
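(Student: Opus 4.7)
The plan is to lower-bound $\rho_{1/s}(L^* \setminus \{\vec 0\})$ by the single-term contribution coming from a shortest nonzero dual vector, and then invert the defining inequality of the smoothing parameter. Concretely, let $\vec v \in L^*$ be a nonzero vector with $\|\vec v\| = \lambda_1(L^*)$. Since $\vec v \in L^* \setminus \{\vec 0\}$, for every $s > 0$ we have
$$ \rho_{1/s}(L^* \setminus \{\vec 0\}) \;\ge\; \rho_{1/s}(\vec v) \;=\; \expo{-\pi s^2 \lambda_1(L^*)^2}. $$
Setting $s = \eta_\eps(L)$ and applying the definition of the smoothing parameter, the left-hand side is at most $\eps$, so $\expo{-\pi \eta_\eps(L)^2 \lambda_1(L^*)^2} \le \eps$. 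Taking logarithms and rearranging yields the first desired inequality,
$$ \eta_\eps(L) \;\ge\; \sqrt{\frac{\ln 1/\eps}{\pi}} \cdot \frac{1}{\lambda_1(L^*)}. $$

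For the second inequality, I would invoke the transference theorem (Lemma~\ref{lem:transference}) applied to the lattice $L^*$, whose dual is $L$: this gives $\lambda_1(L^*) \cdot \lambda_n(L) \le n$, hence $1/\lambda_1(L^*) \ge \lambda_n(L)/n$, and chaining this into the inequality just proved finishes the main bound. For the \textbf{in particular} clause, note that $\eps(n) = o(1)$ forces $\ln(1/\eps(n)) \to \infty$, so $\sqrt{\ln(1/\eps(n))/\pi}$ eventually exceeds any constant $c$; combining this with the two displayed inequalities gives $\eta_{\eps(n)}(L) > c/\lambda_1(L^*) \ge c \lambda_n(L)/n$ for all sufficiently large $n$. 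There is no real obstacle here — the only thing to be careful about is to apply the transference theorem with the roles of $L$ and $L^*$ swapped so that $\lambda_n$ lives on $L$ and $\lambda_1$ on $L^*$, matching the statement.
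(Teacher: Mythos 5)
Your proof is correct and takes essentially the same approach as the paper: lower-bound $\rho_{1/s}(L^*\setminus\{\vec 0\})$ by the contribution of a single shortest dual vector, invert the defining inequality, and finish with the transference theorem applied with $L$ and $L^*$ swapped.
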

\begin{proof}
Let $\vec v\in L^*$ be a vector of length $\lambda_1(L^*)$ and let $s=\eta_\eps(L)$. Then,
$$ \eps = \rho_{1/s}(L^* \setminus \{\vec 0\}) \ge \rho_{1/s}(\vec v) = \expo{-\pi (s \lambda_1(L^*))^2}.$$
The first inequality follows by solving for $s$. The second inequality is by Lemma~\ref{lem:transference}.
\end{proof}

\paragraph{The Fourier transform.}
We briefly review some of the important properties of the Fourier transform. In the following, we omit certain
technical conditions as these will always be satisfied in our applications. For a more precise and in-depth treatment,
see, e.g., \cite{LatticeAndCodesBook}. The Fourier transform of a function $h:\R^n\to \C$ is defined to be
$$\hat{h}(\vec w)=\int_{\R^n} h(\vec x) e^{-2\pi i\ip{\vec x,\vec w}} d\vec x.$$
From the definition we can obtain two useful formulas; first, if $h$ is defined by $h(\vec x)=g(\vec x+\vec v)$ for
some function $g$ and vector $\vec v$ then
\begin{equation}\label{eq:fourier_shift}
\hat{h}(\vec w) = e^{2\pi i \ip{\vec v, \vec w}} \hat{g}(\vec w).
\end{equation}
Similarly, if $h$ is defined by $h(\vec x) = e^{2\pi i \ip{\vec x, \vec v}} g(\vec x)$ for some function $g$ and vector
$\vec v$ then
\begin{equation}\label{eq:fourier_shift_2}
\hat{h}(\vec w) = \hat{g}(\vec w - \vec v).
\end{equation}

Another important fact is that the Gaussian is its own Fourier transform, i.e., $\hat{\rho}=\rho$. More generally, for
any $s>0$ it holds that $\widehat{\rho_s}=s^n \rho_{1/s}$.
Finally, we will use the following formulation of the Poisson summation formula.
\begin{lemma}[Poisson summation formula]\label{lem:psf}
For any lattice $L$ and any function $f: \R^n \to \C$,
$$ f(L) = \det(L^*) \hat{f}(L^*).$$
\end{lemma}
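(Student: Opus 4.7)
The plan is to prove Poisson summation by the classical route of periodizing $f$ along $L$ and expanding the result as a Fourier series over the dual lattice $L^*$. Concretely, I would define
$$ g(\vec x) := \sum_{\vec v \in L} f(\vec x + \vec v), $$
which is an $L$-periodic function on $\R^n$, meaning $g(\vec x + \vec v) = g(\vec x)$ for every $\vec v \in L$. The desired identity will fall out by evaluating $g$ at $\vec x = \vec 0$: the left-hand side is manifestly $f(L)$, and the right-hand side will match $\det(L^*) \hat f(L^*)$ once the Fourier series of $g$ is computed.

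The main computation is to identify the Fourier coefficients of $g$ with respect to the orthonormal family $\{ e^{2\pi i \ip{\vec w, \cdot}} \}_{\vec w \in L^*}$ on the fundamental parallelepiped $\calP(L)$ (which has volume $\det(L) = 1/\det(L^*)$). For each $\vec w \in L^*$ one computes
$$ c_{\vec w} = \frac{1}{\det(L)} \int_{\calP(L)} g(\vec x) \, e^{-2\pi i \ip{\vec w, \vec x}} d\vec x = \frac{1}{\det(L)} \sum_{\vec v \in L} \int_{\calP(L)} f(\vec x + \vec v) \, e^{-2\pi i \ip{\vec w, \vec x}} d\vec x, $$
and the substitution $\vec y = \vec x + \vec v$ produces an extra factor $e^{2\pi i \ip{\vec w, \vec v}} = 1$ (because $\vec w \in L^*$ and $\vec v \in L$), so the sum of integrals telescopes into a single integral over all of $\R^n$, giving $c_{\vec w} = \hat f(\vec w) / \det(L) = \det(L^*) \hat f(\vec w)$. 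Plugging back into the Fourier series $g(\vec x) = \sum_{\vec w \in L^*} c_{\vec w} e^{2\pi i \ip{\vec w, \vec x}}$ and evaluating at $\vec x = \vec 0$ yields the claim.

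The step that really needs care is the analytic justification: the interchange of the sum over $\vec v \in L$ with the integral, the pointwise (and indeed absolutely uniform) convergence of the defining series for $g$, and the validity of the Fourier inversion identity $g(\vec 0) = \sum_{\vec w \in L^*} c_{\vec w}$. All of these require appropriate decay hypotheses on $f$ and $\hat f$ (for instance, both being Schwartz, or both being continuous with sufficiently fast polynomial decay). Since the paper uses this lemma only for Gaussian-type functions $\rho_s$ and their translates/modulations, these hypotheses hold with enormous margin, which is precisely why the paper is content to omit them. I would therefore state the analytic preconditions once (e.g.\ $f$ Schwartz) and appeal to standard results on Fourier series convergence in $\R^n / L$ to close the argument without dwelling on the technicalities.
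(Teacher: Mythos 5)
The paper does not prove this lemma at all: it is stated as a standard fact (with a pointer earlier in the preliminaries to a reference for the Fourier transform, and the caveat that technical conditions are being suppressed), so there is no proof to compare against. Your argument is the classical periodization-plus-Fourier-series proof, and it is correct as an outline: defining $g(\vec x) = \sum_{\vec v \in L} f(\vec x + \vec v)$, observing $L$-periodicity, computing the Fourier coefficients over $\calP(L)$ via the telescoping change of variables, and evaluating at $\vec x = \vec 0$ is exactly how one proves this. You also correctly identify the one place where rigor is needed --- interchanging sum and integral, and pointwise Fourier inversion for $g$ at $\vec 0$ --- and correctly note that for the Gaussian-type functions $\rho_s$, $\rho_{s,\vec c}$ and their modulations used throughout the paper, Schwartz-class decay of both $f$ and $\hat f$ makes all these steps unproblematic. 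Nothing is missing given the intended scope.
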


\section{Main Theorem}\label{sce:main_theorem}

Our main theorem is the following. The connection to the standard lattice problems $\GapSVP$
and $\SIVP$ will be established in Subsection~\ref{ssec:standardlatticeproblems} by
polynomial time reductions to $\DGS$.

\begin{theorem}[Main theorem]\label{thm:mainthmmm}
Let $\eps=\eps(n)$ be some negligible function of $n$. Also, let $p=p(n)$ be some integer
and $\alpha = \alpha (n) \in (0,1)$ be such that $\alpha p
> 2\sqrt{n}$.
Assume that we have access to an oracle $W$ that solves $\LWE_{p,\Psi_\alpha}$ given a polynomial number of samples.
Then there exists an efficient quantum algorithm for $\DGS_{\sqrt{2n} \cdot \eta_\eps(L) /\alpha}$.
\end{theorem}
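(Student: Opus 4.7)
The plan is to assemble the $\DGS$ sampler from three pieces: a bootstrap that produces samples at a very large width, an iterative step that shrinks the width by more than a factor of $2$ while preserving the sample count, and a final harvest at the target width $r$. Concretely, let $r_i = r \cdot (\alpha p / \sqrt{n})^i$ for $i = 0, 1, \ldots, 3n$. Since $\alpha p > 2\sqrt{n}$, $r_{3n}$ is exponentially larger than $\eta_\eps(L)$, so a bootstrap procedure (Lemma~\ref{lem:bootstrap}) produces $n^c$ samples from $D_{L, r_{3n}}$ in polynomial time. Then for $i = 3n, 3n-1, \ldots, 1$, I would apply the iterative step $n^c$ times in parallel to convert $n^c$ samples from $D_{L, r_i}$ into $n^c$ samples from $D_{L, r_{i-1}}$. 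After $3n$ rounds one returns any one of the $n^c$ samples from $D_{L, r_0} = D_{L, r}$. It is essential that each iteration outputs at least as many samples as it consumed as advice, so a polynomial stockpile suffices throughout; the cumulative statistical distance stays negligible over $\poly(n)$ rounds.

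One iterative step splits into a classical Part~1 that invokes $W$ and a quantum Part~2. In Part~1, given $n^c$ samples from $D_{L, r_i}$ (with $r_i \ge \sqrt{2}\, p\, \eta_\eps(L)$, whence $p/r_i < \lambda_1(L^*)/2$ and closest vectors are unique), I build a $\CVP_{L^*,\, \alpha p/(\sqrt{2} r_i)}$ solver. Dividing the given samples by $p$ gives samples from $D_{L/p,\, r_i/p}$; regrouping $L/p$ as the disjoint union of the cosets $L + L\vec a /p$ indexed by $\vec a \in \Z_p^n$, and noting that $r_i/p \ge \sqrt{2}\,\eta_\eps(L)$, Claim~\ref{clm:gaussianweightshiftinvariant} implies that the induced marginal on $\vec a$ is essentially uniform and that, conditioned on $\vec a$, the companion vector $\vec y$ is drawn from $D_{L + L\vec a /p,\, r_i/p}$. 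On a query $\vec x$ within $\alpha p/(\sqrt{2} r_i)$ of $L^*$, the Fourier identity~\eqref{eq:fourier_tra_2} forces $\ip{\vec x, \vec y} \bmod 1$ to cluster near $\ip{\vec a, \tau(\vec x)}/p \bmod 1$, so each sample $(\vec a, \vec y)$ becomes a noisy $\LWE$ equation in the unknown $\tau(\vec x) \in \Z_p^n$. Handing polynomially many such equations to $W$ recovers $\tau(\vec x)$, and Lemma~\ref{lem:cvpmodp} then converts $\tau(\vec x)$ into $\kappa_{L^*}(\vec x)$.

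For Part~2, I use the $\CVP_{L^*,\, \sqrt{n}/r_{i-1}}$ oracle manufactured by Part~1 (with widths tuned so that $\sqrt{n}/r_{i-1} = \alpha p/(\sqrt{2} r_i)$) to produce one sample from $D_{L, r_{i-1}}$. Working on a sufficiently fine grid inside $\calP(L^*)$, I prepare the entangled Gaussian state
\[
 \sum_{\vec x \in L^*,\, \vec z} \rho_{1/r_{i-1}}(\vec z)\, \ket{\vec x,\, \vec x + \vec z},
\]
call the $\CVP$ oracle on the second register to recompute $\vec x$, and XOR the result into the first register to uncompute it. This is reversible because Lemma~\ref{lem:bana2} says that all but an exponentially small fraction of the Gaussian mass lies inside the ball of radius $\sqrt{n}/r_{i-1}$, precisely the decoding radius of the oracle. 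The remaining state is a discretization of $\sum_{\vec z} \rho_{1/r_{i-1}}(\vec z) \ket{\vec z}$; applying the quantum Fourier transform and measuring produces a sample from $D_{L, r_{i-1}}$ (absorbing a harmless factor of $\sqrt{2}$ into the width), with the Fourier-side identification of the measured distribution justified by Poisson summation (Lemma~\ref{lem:psf}).

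The main obstacle will be the error bookkeeping in Part~1. The naive derivation only shows that $\ip{\vec x, \vec y} \bmod 1$ is \emph{close} on average to $\ip{\vec a, \tau(\vec x)}/p \bmod 1$; turning this into the precise periodic-Gaussian shape $\bar\Psi_\beta$ that $W$ demands requires the technical Claim~\ref{clm:gaussian_noise} together with Corollary~\ref{cor:gaussian_noise_one_dim}, and in particular I would deliberately add a small fresh normal perturbation to each equation so that the resulting error has the correct $\bar\Psi_\beta$ distribution. Moreover, $\beta$ depends on the unknown distance $\dist(\vec x, L^*)$, so Lemma~\ref{lem:learning_smooth} must be used to promote the fixed-width oracle $W$ into one that works uniformly over all $0 \le \beta \le \alpha$. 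A secondary but unavoidable chore is the discretization of the "continuous" superpositions of Part~2 onto a grid of mesh $2^{-\poly(n)}$ inside $\calP(L^*)$ while tracking that every statistical error introduced, including after the quantum Fourier transform, remains exponentially small.
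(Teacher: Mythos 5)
Your proposal follows the paper's proof essentially verbatim: the same $r_i = r(\alpha p/\sqrt{n})^i$ ladder, the same bootstrap at $r_{3n}$ via Lemma~\ref{lem:bootstrap}, the same decomposition of the iterative step into a classical $\LWE$-to-$\CVP$ reduction (Lemmas~\ref{lem:from_samples_to_cvp}, \ref{lem:cvpmodp}, \ref{lem:find_coeff}, with Claim~\ref{clm:gaussian_noise}, Corollary~\ref{cor:gaussian_noise_one_dim}, and Lemma~\ref{lem:learning_smooth} handling the error calibration) followed by the quantum uncompute-and-QFT step of Lemma~\ref{lem:quantum_part}, with Lemma~\ref{lem:bana2} and Poisson summation doing the same work. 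The only blemish is a $\sqrt 2$ bookkeeping slip in matching Part~1's decoding radius $\alpha p/(\sqrt 2 r_i)$ to Part~2's input (the correct identity is $\alpha p/(\sqrt 2 r_i) = \sqrt n/(\sqrt 2\, r_{i-1})$, not $\sqrt n/r_{i-1}$), which you partially acknowledge by "absorbing a harmless factor of $\sqrt 2$" but don't fully reconcile; this is a cosmetic issue, not a gap.
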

\begin{proof}
The input to our algorithm is an $n$-dimensional lattice $L$
and a number $r > \sqrt{2n} \cdot \eta_\eps(L) /\alpha$. Our goal is to output a sample
from $D_{L,r}$.
Let $r_i$ denote $r \cdot (\alpha p / \sqrt{n})^i$. The algorithm starts by producing
$n^{c}$ samples from $D_{L,r_{3n}}$ where $c$ is the constant from the iterative step lemma, Lemma~\ref{lem:iterative_step}.
By Claim~\ref{clm:lowerboundsmoothing}, $r_{3n} > 2^{3n} r > 2^{2n} \lambda_n(L)$, and hence
we can produce these samples efficiently by the procedure described in the bootstrapping lemma,
Lemma~\ref{lem:bootstrap}. Next, for $i = 3n, 3n-1,\ldots,1$ we use our $n^{c}$
samples from $D_{L,r_i}$ to produce $n^{c}$ samples from $D_{L,r_{i-1}}$.
The procedure that does this, called the iterative step, is the core of the algorithm and is described
in Lemma~\ref{lem:iterative_step}.
Notice that the condition in Lemma~\ref{lem:iterative_step} is satisfied since for all $i \ge 1$,
$r_i \ge r_1 = r \alpha p / \sqrt{n} > \sqrt{2} p \eta_\eps(L)$. At the end of the loop, we end
up with $n^{c}$ samples from $D_{L,r_0} = D_{L,r}$ and we complete the algorithm
by simply outputting the first of those.
\end{proof}

\subsection{Bootstrapping}

\begin{lemma}[Bootstrapping]\label{lem:bootstrap}
There exists an efficient algorithm that, given any $n$-dimensional lattice $L$ and $r > 2^{2n} \lambda_n(L)$,
outputs a sample from a distribution that is within statistical distance $2^{-\Omega(n)}$ of $D_{L,r}$.
\end{lemma}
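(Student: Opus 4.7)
The plan is to sample from the continuous Gaussian $\nu_r$ and round to $L$ via a fixed fundamental domain. The intuition is that $r$ is so large compared to $\lambda_n(L)$ that the rounding distorts $\nu_r$ only negligibly, and $r$ also vastly exceeds the smoothing parameter so that the rounded distribution is close to $D_{L,r}$.

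Concretely, I would first run LLL to obtain a basis $B=(\vec b_1,\dots,\vec b_n)$ of $L$ with $\max_i\|\vec b_i\|\le 2^n\lambda_n(L)$, so that the diameter $D$ of $\calP(B):=\calP(\vec b_1,\dots,\vec b_n)$ satisfies $D\le n\cdot 2^n\lambda_n(L)$. The sampler then draws $\vec z\sim\nu_r$ and returns the unique $\vec y\in L$ with $\vec z-\vec y\in\calP(B)$; calling the output distribution $P$,
\[ P(\vec y)\;=\;\frac{1}{r^n}\int_{\vec u\in\calP(B)}\rho_r(\vec y+\vec u)\,d\vec u. \]

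The heart of the argument is a pointwise comparison of $P(\vec y)$ with $D_{L,r}(\vec y)=\rho_r(\vec y)/\rho_r(L)$ on the ``typical'' set $\|\vec y\|\le t:=2\sqrt{n}\,r$. Using Claim~\ref{clm:gaussianperturb} (and the symmetric upper bound $\rho_r(\vec y+\vec u)\le\rho_r(\vec y)\,\expo{\pi(2Dt+D^2)/r^2}$), for $\vec u\in\calP(B)$ one gets $\rho_r(\vec y+\vec u)=(1\pm\eps_1)\rho_r(\vec y)$ with $\eps_1=O((Dt+D^2)/r^2)=2^{-\Omega(n)}$, thanks to the hypothesis $r>2^{2n}\lambda_n(L)$. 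Thus $P(\vec y)=(1\pm\eps_1)\det(L)\rho_r(\vec y)/r^n$ on the typical set. Separately, Lemma~\ref{lem:lambda1} combined with Lemma~\ref{lem:transference} gives $\eta_{2^{-n}}(L)\le\sqrt{n}/\lambda_1(L^*)\le\sqrt{n}\,\lambda_n(L)\ll r$, so the Poisson summation formula (Lemma~\ref{lem:psf}) together with $\widehat{\rho_r}=r^n\rho_{1/r}$ yields $\rho_r(L)=(r^n/\det(L))(1\pm 2^{-n})$, whence $D_{L,r}(\vec y)=(1\pm 2^{-n})\det(L)\rho_r(\vec y)/r^n$. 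Comparing, $P(\vec y)=(1\pm 2^{-\Omega(n)})D_{L,r}(\vec y)$ uniformly on the typical set.

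The remaining step is to handle the tails: Lemma~\ref{lem:bana2} bounds the $D_{L,r}$-mass of $\{\|\vec y\|>\sqrt{n}\,r\}$ by $2^{-\Omega(n)}$, and standard Gaussian tail bounds give the analogous $2^{-\Omega(n)}$ bound for $P$ (since $\|\vec y\|>t$ forces $\|\vec z\|>t-D$, an event of negligible $\nu_r$-probability). Summing the pointwise deviation with these tails yields $\Delta(P,D_{L,r})=2^{-\Omega(n)}$, as required. The main technical point, and the reason the hypothesis demands $r$ exponentially larger than $\lambda_n(L)$, is the two-sided estimate $\rho_r(\vec y+\vec u)/\rho_r(\vec y)=1\pm 2^{-\Omega(n)}$ on the typical set: it is precisely what forces the diameter $D$ of the rounding domain, and hence $\lambda_n(L)$, to be exponentially smaller than $r$.
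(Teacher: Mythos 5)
Your proposal is correct and follows essentially the same approach as the paper: the sampler is identical (draw from $\nu_r$, round via an LLL-reduced fundamental parallelepiped), and the core tools are the same (Claim~\ref{clm:gaussianperturb} for the perturbation, the Poisson summation formula for $\rho_r(L)$, and Lemma~\ref{lem:bana2} for the tail of $D_{L,r}$). The only real difference is in the final accounting. You carry out a genuinely \emph{two-sided} pointwise estimate $P(\vec y)=(1\pm 2^{-\Omega(n)})D_{L,r}(\vec y)$ on the typical set, which forces you to (a) supplement Claim~\ref{clm:gaussianperturb} with the symmetric upper bound $\rho_r(\vec y+\vec u)\le e^{\pi(2Dt+D^2)/r^2}\rho_r(\vec y)$, (b) obtain a matching upper bound $\rho_{1/r}(L^*)\le 1+2^{-n}$ via the smoothing parameter (whence the detour through Lemmas~\ref{lem:lambda1} and \ref{lem:transference}), and (c) prove a separate tail bound for $P$. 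The paper avoids all three of these by observing that a one-sided comparison suffices: it only shows $P(\vec x)\ge (1-2^{-\Omega(n)})\det(L)\nu_r(\vec x)$ and $D_{L,r}(\vec x)\le \det(L)\nu_r(\vec x)$ (the latter needing only the trivial inequality $\rho_{1/r}(L^*)\ge 1$), so that $P\ge(1-2^{-\Omega(n)})D_{L,r}$ pointwise on a set carrying $1-2^{-\Omega(n)}$ of $D_{L,r}$'s mass; since the total variation distance between two probability distributions equals twice the total mass by which one falls below the other, this already gives $\Delta(P,D_{L,r})=2^{-\Omega(n)}$ with no tail bound on $P$ and no upper bounds at all. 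Your version is correct but does a bit more work than necessary.
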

\begin{proof}
By using the LLL basis reduction algorithm \cite{LLL}, we obtain a basis for $L$ of length at most $2^n
\lambda_n(L)$ and let $\calP(L)$ be the parallelepiped generated by this basis.
The sampling procedure samples a vector $\vec{y}$ from $\nu_r$ and then outputs $\vec y -
(\vec y ~\mod~ \calP(L)) \in L$. Notice that $\|\vec y ~\mod~ \calP(L)\| \le \diam(\calP(L)) \le n 2^n \lambda_n(L)$.

Our goal is to show that the resulting distribution is exponentially close to $D_{L,r}$.
By Lemma~\ref{lem:bana2}, all but an exponentially small part of $D_{L,r}$
is concentrated on points of norm at most $\sqrt{n} r$. So consider any $\vec x \in L$
with $\| \vec x\| \le \sqrt{n} r$. By definition, the probability given to it by $D_{L,r}$ is
$\rho_r(\vec x) / \rho_r(L)$. By Lemma~\ref{lem:psf}, the denominator is
$\rho_r(L) = \det(L^*) \cdot r^n \rho_{1/r}(L^*) \ge \det(L^*) \cdot r^n$ and hence the probability
is at most $\rho_r(\vec x) / (\det(L^*) \cdot r^n) = \det(L) \nu_r(\vec x)$.
On the other hand, by Claim~\ref{clm:gaussianperturb}, the probability given to $\vec x \in L$ by our procedure is
$$ \int_{\vec x + \calP(L)} \nu_r(\vec y) d \vec y \ge (1-2^{-\Omega(n)})\det(L) \nu_r(\vec x).$$
Together, these facts imply that our output distribution is within statistical distance $2^{-\Omega(n)}$
of $D_{L,r}$.
\end{proof}

\subsection{The iterative step}

\begin{lemma}[The iterative step]\label{lem:iterative_step}
Let $\eps=\eps(n)$ be a negligible function, $\alpha = \alpha(n) \in (0,1)$ be
a real number, and $p =p(n) \ge 2$ be an integer.
Assume that we have access to an oracle $W$ that solves $\LWE_{p,\Psi_\alpha}$ given a polynomial number of samples.
Then, there exists a constant $c>0$ and an efficient quantum algorithm that, given any
$n$-dimensional lattice $L$, a number $r>\sqrt{2} p
\eta_\eps(L)$, and $n^{c}$ samples from $D_{L,r}$,
produces a sample from $D_{L,r\sqrt{n}/(\alpha p)}$.
\end{lemma}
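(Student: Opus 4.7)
The plan is to decompose the iterative step into a classical Part~1, which uses the $D_{L,r}$ samples together with the $\LWE$ oracle $W$ to build a solver for $\CVP_{L^*,\alpha p/(\sqrt{2}r)}$, and a quantum Part~2, which uses this CVP oracle to produce a single sample from $D_{L,r\sqrt{n}/(\alpha p)}$. Both parts rely critically on the assumption $r > \sqrt{2}\, p\, \eta_\eps(L)$, which via Claim~\ref{clm:lowerboundsmoothing} and Lemma~\ref{lem:transference} forces $p/r < \lambda_1(L^*)/2$, so that closest vectors in $L^*$ within the relevant radius are uniquely defined.

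For Part~1, I would first rescale each given sample $\vec y \sim D_{L,r}$ to $\vec y/p \sim D_{L/p,\,r/p}$ and canonically split it as a coset label $\vec a := L^{-1}\vec y \bmod p \in \Z_p^n$ together with its representative. Because $r/p \ge \eta_\eps(L)$, an application of Claim~\ref{clm:gaussianweightshiftinvariant} (via Poisson summation, Lemma~\ref{lem:psf}) shows that $\vec a$ is within statistical distance $\eps$ of uniform and that, conditioned on $\vec a$, the sample is distributed as $D_{L+L\vec a/p,\,r/p}$. Given a query point $\vec x \in \R^n$ whose distance from $L^*$ is $\beta p/(\sqrt{2}r)$ for some unknown $\beta \le \alpha$, I form pairs $(\vec a,\ \round{p\ip{\vec x,\vec y}} \bmod p)$; the Fourier identity \eqref{eq:fourier_tra_2} then shows that the error on the right-hand side is concentrated in a distribution close to $\bar\Psi_\beta$, after I inject a small independent Gaussian into each equation to bring the error into exactly the right shape (this is the modification anticipated above Claim~\ref{clm:gaussian_noise} and Corollary~\ref{cor:gaussian_noise_one_dim}). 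Feeding polynomially many such samples to $W$ and invoking Lemma~\ref{lem:learning_smooth} to cope with the fact that $\beta$ is not known in advance, I recover $\tau(\vec x) = (L^*)^{-1}\kappa_{L^*}(\vec x) \bmod p$, from which Lemma~\ref{lem:cvpmodp} recovers $\kappa_{L^*}(\vec x)$ itself. Since $n^c$ input samples from $D_{L,r}$ can be recycled across arbitrarily many CVP calls (all that is needed is a polynomial number of LWE samples per call), one set of input samples suffices.

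For Part~2, I would build a quantum routine that, given the CVP oracle from Part~1, produces one sample from $D_{L,r'}$ where $r' := r\sqrt{n}/(\alpha p)$; equivalently, given a $\CVP_{L^*,\sqrt{n}/r'}$ oracle, produce a sample from $D_{L,r'}$. The routine prepares a uniform superposition over a sufficiently fine grid inside $\calP(L^*)$, tensors it with a discretized Gaussian state of width $1/r'$ (which can be built by standard techniques), coherently adds the two registers, and then calls the CVP oracle to uncompute the lattice coset of the combined register in a reversible way. By Lemma~\ref{lem:bana2} applied to the continuous Gaussian $\nu_{1/r'}$, all but an exponentially small fraction of the Gaussian mass sits within distance $\sqrt{n}/r'$ of $L^*$, so the oracle succeeds with overwhelming probability. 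After erasure, the remaining register approximates $\sum_{\vec z \in \calP(L^*)} \rho_{1/r'}(\vec z)\,\ket{\vec z}$. Applying the quantum Fourier transform over the dual grid and invoking $\widehat{\rho_{1/r'}} = (r')^{-n}\rho_{r'}$ together with Poisson summation yields a state whose amplitudes on $L$ are proportional to $\rho_{r'}$; measuring gives the desired sample from $D_{L,r'}$.

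The main obstacle will be quantitative error control in Part~1: one must show that the error distribution in the produced equations is close in statistical distance to a genuine $\bar\Psi_\beta$, since the ``raw'' error coming from the Fourier identity has the right phase but a shape that matches $\Psi_\beta$ only after the extra Gaussian smoothing, discretization, and rounding; simultaneously, the marginal over $\vec a$ must be shown to be close enough to uniform that $W$'s guarantees still apply. A subsidiary obstacle is translating the informal continuous construction of Part~2 into a finite-precision grid computation while preserving both the applicability of the CVP oracle (requiring the tail bound on $\rho_{1/r'}$ to still hold after discretization) and the exactness of the QFT analysis up to exponentially small statistical distance.
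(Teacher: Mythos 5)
Your decomposition matches the paper's exactly: Part~1 is a classical reduction from $D_{L,r}$ samples plus the $\LWE$ oracle $W$ to a solver for $\CVP_{L^*,\alpha p/(\sqrt{2}r)}$, and Part~2 is a quantum routine that turns such a CVP oracle into a sample from $D_{L,r\sqrt{n}/(\alpha p)}$. Your Part~1 is essentially the paper's Lemmas~\ref{lem:from_samples_to_cvp}, \ref{lem:cvpmodp}, \ref{lem:learning_smooth} and~\ref{lem:find_coeff}: the coset split $\vec a = L^{-1}\vec v \bmod p$, near-uniformity of $\vec a$ via Claim~\ref{clm:gaussianweightshiftinvariant} (using $r/p > \eta_\eps(L)$), the added Gaussian smoothing analyzed through Claim~\ref{clm:gaussian_noise} and Corollary~\ref{cor:gaussian_noise_one_dim}, the unknown-$\beta$ issue handled by Lemma~\ref{lem:learning_smooth}, and Lemma~\ref{lem:cvpmodp} to pass from coefficients modulo $p$ to the actual closest vector. (One small remark: the paper keeps the LWE equations continuous, i.e.\ works with $\Psi_\beta$ on $\T$ rather than rounding to $\bar\Psi_\beta$ on $\Z_p$; your rounding-first version is fine if you route through Lemma~\ref{lem:learning_discretize}.)

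Your Part~2, however, misdescribes the state preparation in a way that would not work as stated. You propose to take a \emph{uniform} superposition over a fine grid \emph{inside} $\calP(L^*)$, tensor it with a narrow Gaussian, add the registers, and then use the CVP oracle to uncompute the first register. There are two problems. First, after adding registers, the first register holds a \emph{grid} point, not an $L^*$-lattice point, and an oracle for $\CVP_{L^*,d}$ only recovers lattice points; it cannot recover a grid point from its perturbation. Second, even ignoring the uncomputation issue, the resulting second register would carry amplitudes $\sum_{\vec g \in G}\rho(\vec w - \vec g)$ over a dense grid $G\subset\calP(L^*)$, which for a narrow Gaussian is essentially flat -- nothing like the $L^*$-periodized Gaussian $\sum_{\vec y\in L^*}\rho_{1/r'}(\vec w - \vec y)$ that is needed for the Fourier transform (via Poisson summation, Lemma~\ref{lem:psf}) to land on $\rho_{r'}$ over $L$. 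What the paper does instead (Lemma~\ref{lem:quantum_part} via Lemma~\ref{lem:quantumstate}): prepare a \emph{Gaussian}-weighted superposition $\sum_{\vec x\in L^*/R}\rho(\vec x)\ket{\vec x}$ over the entire fine lattice $L^*/R$ using a Grover--Rudolph-type construction (there is no uniform superposition at all), compute $\vec x\bmod\calP(L^*)$ in a fresh register, and then invoke the CVP oracle on $\vec x\bmod\calP(L^*)$ -- which by the Gaussian tail bound (Lemma~\ref{lem:bana2}) lies within distance $\sqrt{n}$ of $L^*$ -- to recover the subtracted $L^*$-vector, hence $\vec x$, and erase the first register reversibly. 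The $\bmod\;\calP(L^*)$ step is what imposes the $L^*$-periodicity, and the CVP oracle is used precisely to invert that map on the Gaussian support; replacing the informal ``uniform superposition over $L^*$'' of the overview (explicitly flagged there as unnecessary in the real procedure) with a uniform superposition over a grid in $\calP(L^*)$ gives neither.
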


\noindent
Note that the output distribution is taken with respect to the randomness (and quantum measurements) used in the
algorithm, and not with respect to the input samples. In particular, this means that from the same set
of $n^{c}$ samples from $D_{L,r}$ we can produce any polynomial number of samples from $D_{L,r\sqrt{n}/(\alpha p)}$.

\begin{proof}
The algorithm consists of two main parts. The first part is shown in Lemma~\ref{lem:from_samples_to_cvp}. There, we
describe a (classical) algorithm that using $W$ and the samples from $D_{L,r}$, solves $\CVP_{L^*, \alpha p / (\sqrt{2}r)}$. The
second part is shown in Lemma~\ref{lem:quantum_part}. There, we describe a quantum algorithm that, given an oracle
that solves $\CVP_{L^*, \alpha p / (\sqrt{2}r)}$, outputs a sample from $D_{L,r\sqrt{n}/(\alpha p)}$. This is the only quantum
component in this paper.
We note that the condition in Lemma~\ref{lem:quantum_part} is satisfied since by Claim~\ref{clm:lowerboundsmoothing},
$ \alpha p/(\sqrt{2} r) \le 1/\eta_\eps(L) \le \lambda_1(L^*)/2$.
\end{proof}

\subsubsection{From samples to $\CVP$}

Our goal in this subsection is to prove the following.

\begin{lemma}[First part of iterative step]\label{lem:from_samples_to_cvp}
Let $\eps=\eps(n)$ be a negligible function, $p =p(n) \ge 2$ be an integer, and
$\alpha = \alpha(n) \in (0,1)$ be a real number.
Assume that
we have access to an oracle $W$ that solves $\LWE_{p, \Psi_\alpha}$ given a polynomial number of samples.
Then, there exist a constant $c>0$ and an efficient algorithm that, given any $n$-dimensional lattice $L$, a number $r> \sqrt{2} p \eta_\eps(L)$,
and $n^c$ samples from $D_{L,r}$, solves $\CVP_{L^*, \alpha p/(\sqrt{2}r)}$.
\end{lemma}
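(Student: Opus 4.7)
The plan is to use the samples from $D_{L,r}$ to produce, for any target $\vec x \in \R^n$ with $\dist(L^*,\vec x) \le \alpha p/(\sqrt{2}r)$, a batch of noisy linear equations modulo $p$ of the form required by the $\LWE$ oracle, whose hidden solution encodes the closest vector $\kappa_{L^*}(\vec x)$. The key object is $\tau(\vec x) := (L^*)^{-1} \kappa_{L^*}(\vec x) \bmod p \in \Z_p^n$; by Claim~\ref{clm:lowerboundsmoothing} and the hypothesis on $r$, the closest vector is well-defined, so $\tau(\vec x)$ is well-defined. Once $\tau(\vec x)$ is recovered, a separate elementary lemma (the one referenced as Lemma~\ref{lem:cvpmodp} in the overview) will let us recover $\kappa_{L^*}(\vec x)$ itself, so it suffices to reduce the problem to an $\LWE_{p,\Psi_\alpha}$ instance with hidden vector $\tau(\vec x)$.

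The construction of a single $\LWE$ sample is the one sketched in the overview. Take an input sample $\vec w \sim D_{L,r}$, let $\vec y := \vec w / p \in L/p$, and let $\vec a \in \Z_p^n$ be the unique coset representative with $\vec y \in L + L\vec a/p$ (computable from $\vec w$ by reducing $L^{-1}\vec w$ mod $p$). Output the pair
\[
 (\vec a,\; b) \quad \text{with} \quad b := \langle \vec x, \vec y\rangle + \xi \pmod 1,
\]
where $\xi$ is a small auxiliary Gaussian noise sample that I will add in order to normalize the error distribution, and then discretize to $\Z_p$ via $\bar{\phantom{x}}\cdot$ to fit the form the oracle expects. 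I will then feed a polynomial number of such samples to the oracle $W$ (after the reduction from discrete to continuous $\LWE$ implicit in the definitions), and read off $\tau(\vec x)$.

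The analysis splits into two claims that the overview already flags. First, I need the joint distribution of $(\vec a,\vec y)$ to be essentially that of choosing $\vec a \in \Z_p^n$ uniformly and then $\vec y \sim D_{L+L\vec a/p,\,r/p}$: since $r/p > \eta_\eps(L)$ by hypothesis, this follows from the shift-invariance of the Gaussian measure beyond the smoothing parameter (Claim~\ref{clm:gaussianweightshiftinvariant}), giving a uniform marginal on $\vec a$ up to statistical distance $\eps$. Second, I need to identify the error distribution in $b$. By the Fourier shift formula \eqref{eq:fourier_shift}, the Fourier transform of $D_{L+L\vec a/p,r/p}$ is given by \eqref{eq:fourier_tra_2}, i.e., $e^{2\pi i\langle \vec a,\tau(\vec x)\rangle/p} f_{p/r}(\vec x)$, so the phase of $\langle \vec x,\vec y\rangle$ modulo $1$ concentrates around $\langle \vec a,\tau(\vec x)\rangle/p$, with a deviation whose distribution (for $\vec x$ at distance $\beta p/r$ from $L^*$ with $0\le \beta\le \alpha/\sqrt{2}$) is, by Claim~\ref{clm:gaussian_noise} / Corollary~\ref{cor:gaussian_noise_one_dim}, within negligible statistical distance of $\Psi_\beta$. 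Adding the independent Gaussian noise $\xi \sim \nu_{\sqrt{\alpha^2-\beta^2}/\sqrt{2\pi}}$ (choosing $\xi$'s variance to close the gap up to $\alpha$) then produces exactly $\Psi_\alpha$ error, conditioned on $\vec a$.

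This makes the distribution of the $(\vec a,b)$ pairs negligibly close to $A_{\tau(\vec x),\Psi_\alpha}$, so invoking $W$ on polynomially many such samples returns $\tau(\vec x)$ with overwhelming probability, and the constant $c$ is dictated by the polynomial sample complexity of $W$. The main technical obstacle is the error-distribution analysis: one must carefully account for the fact that we do not know $\beta$ in advance (so the extra noise $\xi$ must be added with parameter $\alpha$, not $\sqrt{\alpha^2-\beta^2}$), which is why the reduction is routed through Lemma~\ref{lem:learning_smooth}, stating that an $\LWE_{p,\Psi_\alpha}$ oracle automatically handles any $\Psi_\beta$ with $\beta\le \alpha$. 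With that tool in hand, the bookkeeping reduces to the Fourier computation above and a triangle inequality on statistical distances, and the lemma follows.
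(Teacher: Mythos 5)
Your proposal follows the paper's proof step for step: reduce $\CVP_{L^*,d}$ to $\CVP^{(p)}_{L^*,d}$ via Lemma~\ref{lem:cvpmodp}, manufacture $\LWE$ equations with secret $\tau(\vec x)$ by scaling each $D_{L,r}$ sample down by $p$ and computing $\langle\vec x,\cdot\rangle$, invoke Claim~\ref{clm:gaussianweightshiftinvariant} for the near-uniformity of $\vec a$ and Corollary~\ref{cor:gaussian_noise_one_dim} for the shape of the error, and absorb the unknown effective error parameter via Lemma~\ref{lem:learning_smooth}. This is exactly how the paper decomposes the argument across Lemmas~\ref{lem:cvpmodp}, \ref{lem:learning_smooth}, and \ref{lem:find_coeff}.

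One concrete slip: you write that the auxiliary noise $\xi$ ``must be added with parameter $\alpha$, not $\sqrt{\alpha^2-\beta^2}$.'' Since the inherent error width $r\|\vec x'\|/p$ can be as large as $\alpha/\sqrt 2$ (this is precisely why the lemma restricts to $\dist(\vec x, L^*)\le\alpha p/(\sqrt 2 r)$ rather than $\alpha p/r$), adding independent noise of width $\alpha$ would give combined width up to $\sqrt{3/2}\,\alpha>\alpha$, which Lemma~\ref{lem:learning_smooth} cannot absorb. The paper instead adds noise of \emph{fixed} width $\alpha/\sqrt 2$ (a normal with standard deviation $\alpha/(2\sqrt\pi)$), so the combined width is $\sqrt{(r\|\vec x'\|/p)^2+\alpha^2/2}\le\alpha$, landing squarely in the range Lemma~\ref{lem:learning_smooth} handles. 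Also, no discretization to $\Z_p$ is needed here: the oracle $W$ in this lemma solves $\LWE_{p,\Psi_\alpha}$ with \emph{continuous} error on $\T$, so the pairs $(\vec a, \langle\vec x,\vec v\rangle/p+e \bmod 1)\in\Z_p^n\times\T$ are already in the form the oracle expects; the rounding to $\Z_p$ belongs to the overview's informal presentation, not to this lemma.
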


For an $n$-dimensional lattice $L$, some $0<d<\lambda_1(L)/2$, and an integer $p \ge 2$,
we say that an algorithm solves $\CVP_{L,d}^{(p)}$ if, given any point $\vec x \in \R^n$ within distance $d$ of $L$,
it outputs $L^{-1}\kappa_L(\vec x) ~\mod~ p \in \Z_p^n$, the coefficient vector of the closest vector to $\vec{x}$
reduced modulo $p$. We start with the following lemma, which shows a reduction from $\CVP_{L,d}$ to $\CVP_{L,d}^{(p)}$.

\begin{lemma}[Finding coefficients modulo $p$ is sufficient]\label{lem:cvpmodp}
There exists an efficient algorithm that given a lattice $L$, a number $d < \lambda_1(L)/2$
and an integer $p \ge 2$, solves $\CVP_{L,d}$ given access to an oracle for $\CVP_{L,d}^{(p)}$.
\end{lemma}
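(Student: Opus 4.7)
}
My plan is to recover the unknown integer coefficient vector $\vec c = L^{-1}\kappa_L(\vec x) \in \Z^n$ one base-$p$ digit at a time, using the oracle once per digit. On input $\vec x \in \R^n$ with $\dist(\vec x, L) \le d < \lambda_1(L)/2$, let $\vec y = \kappa_L(\vec x)$, which is well-defined and unique, and write $\vec y = L\vec c$. A single oracle call on $\vec x$ returns $\vec c \bmod p$. The key observation is that once we know $\vec c \bmod p^j$, we can reduce to a new $\CVP_{L,d}^{(p)}$ instance whose answer is the next digit.

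The iteration is the following. Suppose we have computed $\vec c_j := \vec c \bmod p^j$ as an explicit integer vector (with entries in $\{0,\dots,p^j-1\}$), so $\vec c = \vec c_j + p^j \vec d$ for some $\vec d \in \Z^n$. Consider
\[
\vec x_j := \frac{\vec x - L \vec c_j}{p^j}.
\]
Since $\vec y - L\vec c_j = p^j L \vec d$ and $\|\vec x - \vec y\| \le d$, we have $\|\vec x_j - L\vec d\| \le d/p^j \le d < \lambda_1(L)/2$, so $L\vec d$ is the unique closest lattice vector to $\vec x_j$ and $\vec x_j$ is a legal input to $\CVP_{L,d}^{(p)}$. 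The oracle therefore returns $\vec d \bmod p$, and combining this with $\vec c_j$ yields $\vec c_{j+1} = \vec c \bmod p^{j+1}$.

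It remains to specify the number of iterations. From the input we can compute an explicit bound $B$ on $\|\vec c\|_\infty$: e.g.\ $B = \lceil \|L^{-1}\| \cdot (\|\vec x\| + d)\rceil$, which has bit-length polynomial in the input size. Setting $k = \lceil \log_p(2B+1) \rceil$, we have $p^k > 2B$, so $\vec c$ is the unique vector in $[-B,B]^n \cap \Z^n$ congruent to $\vec c_k$ modulo $p^k$; we recover it, output $L\vec c$, and halt. The total number of oracle calls is $k = \mathrm{poly}(n, \text{input size})$, and each iteration only does elementary arithmetic, so the reduction is efficient.

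There is no real obstacle; the only things to check are that $d/p^j < \lambda_1(L)/2$ at every iteration (immediate, since $p \ge 2$ and $d < \lambda_1(L)/2$), so that the oracle's output is well-defined and corresponds to the correct digit, and that the a priori bound $B$ on $\|\vec c\|_\infty$ is polynomial in the input size so that $k$ is polynomial. Both are routine.
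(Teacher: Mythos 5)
Your proof is correct and follows the same core idea as the paper: iteratively extract the base-$p$ digits of the coefficient vector $\vec c = L^{-1}\kappa_L(\vec x)$ by repeatedly rescaling the instance, using the observation that after subtracting a known approximation and dividing by $p$, the distance to the lattice shrinks by a factor of $p$ so the rescaled point is still a valid oracle input. The one place where you diverge from the paper is the termination step. The paper iterates a fixed $n$ times, at which point the residual point is within distance $d/p^n$ of the lattice, and then invokes Babai's nearest plane algorithm; since Babai's approximation factor $2^n$ is dominated by $p^n$, this pins down the final coefficient vector exactly, after which the $\vec a_i$'s are reconstructed by back-substitution $\vec a_i = p\,\vec a_{i+1} + (\vec a_i \bmod p)$. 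You instead iterate until $p^k$ exceeds twice an a priori bound $B$ on $\|\vec c\|_\infty$ (computable from $\|L^{-1}\|$ and $\|\vec x\|$), then read off $\vec c$ directly from $\vec c_k$ by reducing into $[-B,B]^n$. Both finishes are correct and polynomial; yours avoids invoking Babai's algorithm and is a bit more self-contained, at the cost of needing to argue that $B$ (and hence $k$) has polynomial bit-length, which you do. The paper's version has the cosmetic advantage of always making exactly $n$ oracle calls regardless of the input's bit-length.
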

\begin{proof}
Our input is a point $\vec x$ within distance $d$ of $L$.
We define a sequence of points $\vec x_1=\vec x,\vec x_2,\vec x_3,\ldots$ as follows. Let $\vec a_i=L^{-1}\kappa_L(\vec
x_i) \in \Z^n$ be the coefficient vector of the closest lattice point to $\vec x_i$. We define $\vec x_{i+1}=(\vec x_i
- L(\vec a_i ~\mod ~ p))/p$. Notice that the closest lattice point to $\vec x_{i+1}$ is $L(\vec a_i-(\vec a_i ~\mod~
p))/p \in L$ and hence $\vec a_{i+1}= (\vec a_i-(\vec a_i ~\mod~ p))/p$. Moreover, the distance of $\vec x_{i+1}$ from
$L$ is at most $d/p^i$. Also note that this sequence can be computed by using the oracle.

After $n$ steps, we have a point $\vec x_{n+1}$ whose distance to the lattice is at most $d/p^n$. We now apply an
algorithm for approximately solving the closest vector problem, such as Babai's nearest plane algorithm~\cite{Babai86}.
This yields a lattice point $L\vec a$ within distance $2^n \cdot d/p^n \le d < \lambda_1(L)/2$ of $\vec x_{n+1}$.
Hence, $L\vec a$ is the lattice point closest to $\vec x_{n+1}$ and we managed to recover $\vec a_{n+1}=\vec a$.
Knowing $\vec a_{n+1}$ and $\vec a_n ~\mod~ p$ (by using the oracle), we can now recover $\vec a_n = p \vec a_{n+1} +
(\vec a_n ~\mod~ p)$. Continuing this process, we can recover $\vec a_{n-1},\vec a_{n-2},\ldots,\vec a_1$. This
completes the algorithm since $L\vec a_1$ is the closest point to $\vec x_1=\vec x$.
\end{proof}

As we noted in the proof of Lemma~\ref{lem:iterative_step}, for our choice of $r$,
$ \alpha p/(\sqrt{2} r) \le \lambda_1(L^*)/2$.
Hence, in order to prove Lemma~\ref{lem:from_samples_to_cvp},
it suffices to present an efficient algorithm for $\CVP_{L^*, \alpha p/(\sqrt{2}r)}^{(p)}$.
We do this by combining two lemmas.
The first, Lemma~\ref{lem:learning_smooth}, shows an algorithm $W'$ that, given samples from $A_{\vec{s},\Psi_\beta}$
for some (unknown) $\beta \le \alpha$, outputs $\vec{s}$ with probability exponentially close to $1$ by using
$W$ as an oracle.
Its proof is based on Lemma~\ref{lem:verification}.
The second, Lemma~\ref{lem:find_coeff}, is the main lemma of this subsection, and shows how to use $W'$ and the given
samples from $D_{L,r}$ in order to solve $\CVP_{L^*, \alpha p/(\sqrt{2}r)}^{(p)}$.

\begin{lemma}[Verifying solutions of $\LWE$]\label{lem:verification}
Let $p=p(n) \ge 1$ be some integer. There exists an efficient algorithm that, given $\vec s'$ and samples from $A_{\vec{s},\Psi_\alpha}$ for some (unknown)
$\vec{s} \in \Z_p^n$ and $\alpha < 1$, outputs whether $\vec s = \vec s'$ and is correct with probability exponentially close to
$1$.
\end{lemma}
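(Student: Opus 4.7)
The plan is to build a simple hypothesis test based on the first Fourier coefficient of the residuals. Given $\vec{s}'$ and a sample $(\vec{a},b)$ from $A_{\vec{s},\Psi_\alpha}$, define $Z := b - \ip{\vec{a},\vec{s}'}/p \bmod 1$. If $\vec{s}=\vec{s}'$ then $Z$ is distributed exactly as $\Psi_\alpha$, and its first Fourier coefficient $\E[e^{2\pi i Z}]$ equals $e^{-\pi\alpha^2}$, which for $\alpha<1$ is a positive constant bounded below by $e^{-\pi}$. If $\vec{s}\neq\vec{s}'$, write $\vec{t}=\vec{s}-\vec{s}'\in \Z_p^n\setminus\{\vec{0}\}$ and note $Z = \ip{\vec{a},\vec{t}}/p + e \bmod 1$ with $e\sim\Psi_\alpha$ independent of $\vec{a}$. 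Then
$$\E\bigl[e^{2\pi i Z}\bigr] \;=\; \prod_{j=1}^n \E_{a_j\sim\Z_p}\bigl[e^{2\pi i a_j t_j/p}\bigr] \cdot e^{-\pi\alpha^2},$$
and the coordinate factor for any $t_j\not\equiv 0\pmod p$ is a full geometric sum of $p$-th roots of unity, hence vanishes. Since at least one $t_j$ is nonzero mod $p$, the whole expectation is exactly $0$.

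Given this gap, the algorithm is immediate: draw $N=\poly(n)$ fresh samples $(\vec{a}_i,b_i)$, form $X_i := \cos(2\pi(b_i-\ip{\vec{a}_i,\vec{s}'}/p))$, and accept ``$\vec{s}=\vec{s}'$'' iff the empirical mean $\frac{1}{N}\sum_i X_i$ exceeds the threshold $\frac{1}{2}e^{-\pi}$. Since each $X_i\in[-1,1]$ and is a function of an independent sample, Hoeffding's inequality drives the deviation from the true mean below $\frac{1}{2}e^{-\pi}$ with probability $1-2\exp(-\Omega(N))$, so choosing $N$ linear in $n$ (or any polynomial) makes the error probability exponentially small in $n$, as required.

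There is essentially no obstacle of substance; the only place one must be careful is the non-prime case for $p$, but the computation above uses only that $\sum_{a=0}^{p-1} e^{2\pi i a t/p}=0$ whenever $t\not\equiv 0 \pmod p$, which holds for arbitrary integer $p\ge 2$. I would also note that working with the real part via $\cos$ (rather than the full complex character) is convenient for applying a scalar Hoeffding bound, and that symmetry of $\Psi_\alpha$ makes the first Fourier coefficient real in the $\vec{s}=\vec{s}'$ case, so no information is lost.
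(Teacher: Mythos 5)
Your proof is correct and uses the same test statistic as the paper: estimate $\E[\cos(2\pi Z)]$ where $Z$ is the residual $b-\ip{\vec a,\vec s'}/p$ modulo $1$, and threshold it. The one place you diverge is in the argument for why this expectation vanishes when $\vec s\neq\vec s'$. The paper shows that the density $\xi$ of $Z$ is $(1/k)$-periodic for some integer $k\ge2$ (by noting that $a_j t_j/p \bmod 1$ is supported uniformly on multiples of $\gcd(p,t_j)/p$), and then deduces $\int_0^1 e^{2\pi i y}\xi(y)\,dy=0$ from the invariance under the shift $y\mapsto y+1/k$. You instead factor $\E[e^{2\pi i Z}]$ directly as $e^{-\pi\alpha^2}\prod_j\E_{a_j}[e^{2\pi i a_j t_j/p}]$ using independence of the coordinates $a_j$ and of the noise $e$, and observe that each factor with $t_j\not\equiv 0\pmod p$ is a full sum of $p$-th roots of unity, hence zero. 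Your version is more elementary and arguably cleaner, avoids introducing the auxiliary periodicity of $\xi$, and as you note works for arbitrary $p\ge 2$ without primality. The remark that $\Psi_\alpha$ is symmetric so the relevant Fourier coefficient is real in the $\vec s=\vec s'$ case, making $\cos$ lossless, is a nice touch the paper leaves implicit. The concentration step is the same in both.
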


\noindent We remark that the lemma holds also for all $\alpha \le O(\sqrt{\log n})$ with essentially the same proof.

\begin{proof}
The idea is to perform a statistical test on samples from $A_{\vec{s},\Psi_\alpha}$ that checks whether $\vec s = \vec
s'$. Let $\xi$ be the distribution on $\T$ obtained by sampling $(\vec a, x)$ from $A_{\vec{s},\Psi_\alpha}$ and
outputting $x - \ip{\vec a, \vec s'}/p \in \T$. The algorithm takes $n$ samples
$y_1,\ldots,y_n$ from $\xi$. It then computes $z := \frac{1}{n} \sum_{i=1}^n \cos(2 \pi y_i)$. If $z > 0.02$, it
decides that $\vec s = \vec s'$, otherwise it decides that $\vec s \neq \vec s'$.

We now analyze this algorithm. Consider the distribution $\xi$. Notice that it be obtained by sampling $e$ from
$\Psi_\alpha$, sampling $\vec a$ uniformly from $\Z_p^n$ and outputting $e + \ip{\vec a, \vec s - \vec s'}/p \in \T$.
From this it easily follows that if $\vec s = \vec s'$, $\xi$ is exactly $\Psi_\alpha$. Otherwise, if $\vec s \neq \vec
s'$, we claim that $\xi$ has a period of $1/k$ for some integer $k\ge 2$. Indeed, let $j$ be an index on which $s_j
\neq s'_j$. Then the distribution of $a_j (s_j - s'_j) ~\mod~ p$ is periodic with period ${\rm{gcd}}(p, s_j - s'_j) < p$.
This clearly implies that the distribution of $a_j (s_j - s'_j)/p ~\mod~ 1$ is periodic with period $1/k$ for some $k
\ge 2$. Since a sample from $\xi$ can be obtained by adding a sample from $a_j (s_j - s'_j)/p ~\mod~ 1$ and an
independent sample from some other distribution, we obtain that $\xi$ also has the same period of $1/k$.

Consider the expectation\footnote{We remark that this expectation is essentially the Fourier series of $\xi$
at point $1$ and that the following arguments can be explained in terms of properties of the Fourier series.}
\begin{align*}
 \tilde{z} := \Exp_{y \sim \xi} [ \cos(2 \pi y) ] = \int_0^1 \cos(2\pi y) \xi(y) dy =
 \Re \Big[ \int_0^1 \expo{2\pi i y} \xi(y) dy\Big].
\end{align*}
First, a routine calculation shows that for $\xi = \Psi_\alpha$, $\tilde{z} = \expo{- \pi
\alpha^2}$, which is at least $0.04$ for $\alpha < 1$. Moreover, if $\xi$ has a period of $1/k$, then
$$ \int_0^1 \expo{2\pi i y} \xi(y) dy = \int_0^1 \expo{2\pi i (y+\textstyle{\frac{1}{k}})} \xi(y) dy
  = \expo{2\pi i /k }  \int_0^1 \expo{2\pi i y} \xi(y) dy$$
which implies that if $k \ge 2$ then $\tilde{z}=0$.
We complete the proof by noting that by the Chernoff bound, $|z-\tilde{z}| \le 0.01$
with probability exponentially close to $1$.
\end{proof}

\begin{lemma}[Handling error $\Psi_\beta$ for $\beta \le \alpha$]\label{lem:learning_smooth}
Let $p =p(n) \ge 2$ be some integer and $\alpha=\alpha(n) \in (0,1)$.
Assume that we have access to an oracle $W$ that solves $\LWE_{p, \Psi_\alpha}$ by using a polynomial
number of samples. Then, there
exists an efficient algorithm $W'$ that, given samples from $A_{\vec{s},\Psi_\beta}$ for some (unknown) $\beta \le \alpha$, outputs
$\vec{s}$ with probability exponentially close to $1$.
\end{lemma}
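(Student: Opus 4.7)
The plan is to convert the given samples from $A_{\vec s, \Psi_\beta}$ into samples from $A_{\vec s, \Psi_\alpha}$ by adding fresh independent Gaussian noise. Since two independent zero-mean Gaussians of widths $\beta$ and $\gamma$ sum to a Gaussian of width $\sqrt{\beta^2 + \gamma^2}$, and this property descends modulo~$1$ to the $\Psi$ family, replacing each sample $(\vec a, b)$ by $(\vec a, b + e)$ with a fresh $e \sim \Psi_\gamma$ produces a sample from $A_{\vec s, \Psi_{\sqrt{\beta^2+\gamma^2}}}$. If $\beta$ were known we would set $\gamma := \sqrt{\alpha^2 - \beta^2}$ and call $W$; since $\beta$ is not known, we enumerate a polynomially-dense grid of guesses and use the verifier of Lemma~\ref{lem:verification} to pick out the right one.

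Concretely, let $m = \poly(n)$ bound the number of samples $W$ consumes per invocation, and set $N = 100m$. Let $\beta_i := i\alpha / N$ for $0 \le i \le N$, let $\gamma_i := \sqrt{\alpha^2 - \beta_i^2}$, and let $i^*$ be the largest index with $\beta_{i^*} \le \beta$. For this index, the width of the noise on the transformed samples is $\alpha' := \sqrt{\beta^2 + \gamma_{i^*}^2} = \sqrt{\alpha^2 + (\beta^2 - \beta_{i^*}^2)}$. From $\beta - \beta_{i^*} \le \alpha/N$ and $\beta + \beta_{i^*} \le 2\alpha$ we get $\alpha' \le \alpha\sqrt{1 + 2/N}$, so $\alpha'/\alpha - 1 \le 1/N$, and Claim~\ref{clm:stat_dist_norm} gives $\Delta(\Psi_{\alpha'}, \Psi_\alpha) \le 9/N$ per sample. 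Thus the $m$-tuple of transformed samples is within statistical distance $9m/N \le 1/10$ of an honest $m$-tuple from $A_{\vec s, \Psi_\alpha}$, so at $i = i^*$ the oracle $W$ returns $\vec s$ with probability at least~$1/2$.

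To amplify to exponentially close to $1$, for each $i \in \{0, \dots, N\}$ I would repeat the following $n$ times: draw fresh input samples from $A_{\vec s, \Psi_\beta}$, add fresh $\Psi_{\gamma_i}$-noise to each, feed the result to $W$ to obtain a candidate $\vec s'$, and then test $\vec s'$ using Lemma~\ref{lem:verification} on fresh copies of the original samples (this is permitted because $\beta < 1$). The algorithm outputs any candidate that passes the verifier. With probability at least $1 - 2^{-n}$ at least one of the $n$ trials at $i = i^*$ produces the correct $\vec s$, and the verifier then confirms it; a union bound over the $n(N+1) = \poly(n)$ tests of incorrect candidates $\vec s' \neq \vec s$ bounds the probability of accepting a wrong one by $2^{-\Omega(n)}$.

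The one point that requires care is the tension between the grid spacing, which must be polynomial to keep the algorithm efficient, and the precision $W$ is formally guaranteed to tolerate, which is only exactly $\Psi_\alpha$. A polynomial grid brings the transformed distribution to inverse-polynomial statistical distance of $\Psi_\alpha$, which by itself only yields a constant per-trial success probability from $W$; exponentially small error is then recovered by combining polynomial-size independent repetition with the one-sided verifier of Lemma~\ref{lem:verification}, which is the reason Lemma~\ref{lem:verification} was established just above.
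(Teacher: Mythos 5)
Your proof is correct and takes essentially the same approach as the paper's: add fresh Gaussian noise of a grid-searched width to bring the error distribution within inverse-polynomial statistical distance of $\Psi_\alpha$ (via Claim~\ref{clm:stat_dist_norm}), call $W$, confirm candidates with the verifier of Lemma~\ref{lem:verification}, and amplify by polynomially many independent repetitions. The only cosmetic difference is that you grid the implicit $\beta$ values while the paper grids the added noise variance $\gamma$ directly (integer multiples of $n^{-2c}\alpha^2$); the analysis is otherwise identical.
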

\begin{proof}
The proof is based on the following idea: by adding the right amount of noise, we can transform samples from
$A_{\vec{s},\Psi_\beta}$ to samples from $A_{\vec{s},\Psi_\alpha}$ (or something sufficiently close to it). Assume that the number of samples required by $W$
is at most $n^c$ for some $c>0$. Let $Z$ be the set of all integer multiplies of $n^{-2c} \alpha^2$ between $0$ and
$\alpha^2$. For each $\gamma \in Z$, Algorithm $W'$ does the following $n$ times. It takes $n^c$ samples from
$A_{\vec{s},\Psi_\beta}$ and adds to the second element of each sample a noise sampled independently from
$\Psi_{\sqrt{\gamma}}$. This creates $n^c$ samples taken from the distribution $A_{\vec{s},\Psi_{\sqrt{\beta^2+\gamma}}}$. It
then applies $W$ and obtains some candidate $\vec{s}'$. Using Lemma~\ref{lem:verification}, it checks whether
$\vec{s}'=\vec{s}$. If the answer is yes, it outputs $\vec s'$;
otherwise, it continues.

We now show that $W'$ finds $\vec s$ with probability exponentially close to $1$. By Lemma~\ref{lem:verification}, if
$W'$ outputs some value, then this value is correct with probability exponentially close to $1$. Hence, it is enough to
show that in one of the iterations, $W'$ outputs some value. Consider the smallest $\gamma \in Z$ such that $\gamma \ge
\alpha^2 - \beta^2$. Clearly, $\gamma \le \alpha^2-\beta^2 + n^{-2c}\alpha^2$. Define $\alpha' =
\sqrt{\beta^2+\gamma}$. Then,
$$\alpha \le \alpha' \le \sqrt{\alpha^2 + n^{-2c} \alpha^2} \le (1+n^{-2c}) \alpha.$$
By Claim~\ref{clm:stat_dist_norm}, the statistical distance between $\Psi_\alpha$ and $\Psi_{\alpha'}$ is at most
$9n^{-2c}$. Hence, the statistical distance between $n^c$ samples from $\Psi_\alpha$ and $n^c$ samples from
$\Psi_{\alpha'}$ is at most $9n^{-c}$. Therefore, for our choice of $\gamma$, $W$ outputs $\vec s$ with probability at
least $1-9n^{-c}/2-2^{-\Omega(n)} \ge \frac{1}{2}$. The probability that $\vec s$ is not found in any of the $n$ calls to
$W$ is at most $2^{-n}$.
\end{proof}

For the analysis of our main procedure in Lemma~\ref{lem:find_coeff},
we will need to following claims regarding Gaussian
measures on lattices.
On first reading, the reader can just read the statements
of Claim~\ref{clm:gaussianweightshiftinvariant} and Corollary~\ref{cor:gaussian_noise_one_dim}
and skip directly to Lemma~\ref{lem:find_coeff}.
All claims show that in some sense,
when working above the smoothing parameter, the discrete
Gaussian measure behaves like the continuous Gaussian
measure. We start with the following claim, showing that above
the smoothing parameter, the discrete Gaussian measure
is essentially invariant under shifts.

\begin{claim}\label{clm:gaussianweightshiftinvariant}
For any lattice $L$, $\vec c \in \R^n$, $\eps>0$, and $r \ge \eta_\eps(L)$,
$$ \rho_r(L+\vec c) \in r^n \det(L^*) (1\pm \eps).$$
\end{claim}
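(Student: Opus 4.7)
The plan is to apply the Poisson summation formula (Lemma~\ref{lem:psf}) to a suitable shifted Gaussian. Define $f(\vec x) := \rho_r(\vec x + \vec c)$ so that $f(L) = \sum_{\vec y \in L} \rho_r(\vec y + \vec c) = \rho_r(L+\vec c)$, which is exactly the quantity we want to estimate. By Lemma~\ref{lem:psf}, this equals $\det(L^*)\hat f(L^*)$, so we only need to compute $\hat f$ on $L^*$ and show that the sum there is essentially just its value at the origin.

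Next I would compute $\hat f$ using the two standard facts recalled in the preliminaries: the Fourier transform of $\rho_r$ is $r^n \rho_{1/r}$, and a shift in space corresponds to multiplication by a phase on the frequency side (Eq.~\eqref{eq:fourier_shift}). Writing $f(\vec x) = \rho_r(\vec x - (-\vec c))$, we get
$$\hat f(\vec w) = e^{2\pi i \ip{-\vec c, \vec w}} \cdot r^n \rho_{1/r}(\vec w).$$
Plugging into Poisson summation yields
$$\rho_r(L+\vec c) = r^n \det(L^*)\sum_{\vec w \in L^*} e^{-2\pi i \ip{\vec c, \vec w}} \rho_{1/r}(\vec w).$$
The $\vec w = \vec 0$ term contributes exactly $r^n \det(L^*)$, which is the main term we want.

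Finally I would bound the error from the remaining frequencies. Since $|e^{-2\pi i \ip{\vec c, \vec w}}| = 1$, the absolute value of the contribution of $L^* \setminus \{\vec 0\}$ is at most
$$r^n \det(L^*)\sum_{\vec w \in L^* \setminus \{\vec 0\}} \rho_{1/r}(\vec w) = r^n \det(L^*)\cdot \rho_{1/r}(L^*\setminus \{\vec 0\}).$$
Because $r \ge \eta_\eps(L)$ and $s \mapsto \rho_{1/s}(L^* \setminus \{\vec 0\})$ is decreasing in $s$ (as noted after Definition~\ref{def:smoothingpara}), this tail sum is at most $\eps$. Combining the main term with the $\pm \eps$ error gives $\rho_r(L+\vec c) \in r^n \det(L^*)(1\pm \eps)$, as claimed.

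There is no real obstacle here: the only things to be careful about are the sign convention in the phase (which disappears anyway once we take absolute values) and the use of monotonicity of $\rho_{1/s}(L^* \setminus \{\vec 0\})$ in $s$ to replace $\eta_\eps(L)$ by $r$ in the bound on the tail.
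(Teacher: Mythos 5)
Your proof is correct and matches the paper's own argument essentially line-for-line: write $\rho_r(L+\vec c)$ as a lattice sum of a shifted Gaussian, apply Poisson summation (Lemma~\ref{lem:psf}), compute the Fourier transform via $\hat{\rho_r} = r^n\rho_{1/r}$ and the shift-to-phase rule (Eq.~\eqref{eq:fourier_shift}), isolate the $\vec w = \vec 0$ term, and bound the tail by $\eps$ using $r \ge \eta_\eps(L)$. The one slip is the sign in the phase (Eq.~\eqref{eq:fourier_shift} with $f(\vec x) = \rho_r(\vec x + \vec c)$ gives $e^{2\pi i \ip{\vec c, \vec w}}$, not $e^{2\pi i \ip{-\vec c, \vec w}}$), but as you yourself observe this vanishes once you pass to absolute values, so it does not affect the estimate.
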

\begin{proof}
Using the Poisson summation formula (Lemma~\ref{lem:psf})
and the assumption that $\rho_{1/r}(L^* \setminus \{\vec 0\}) \le \eps$,
\begin{align*}
 \rho_r(L+\vec c)
 = \sum_{\vec{x} \in L} \rho_r(\vec{x}+\vec{c})
 &= \sum_{\vec x \in L} \rho_{r,-\vec c}(\vec x) \\
 &= \det(L^*)\sum_{\vec{y}\in L^*} \widehat{\rho_{r,-\vec c}}(\vec{y}) \\
 &= r^n \det(L^*)\sum_{\vec{y}\in L^*} \expo{2\pi i \ip{\vec c, \vec{y}}} \rho_{1/r}(\vec{y}) \\
 &= r^n \det(L^*)(1\pm \eps).
\end{align*}
\end{proof}

The following claim (which is only used to establish the corollary following it)
says that when adding a continuous Gaussian of width $s$ to a discrete Gaussian of width $r$,
with both $r$ and $s$ sufficiently greater than the smoothing parameter,
the resulting distribution is very close to a continuous Gaussian of the width we would expect,
namely $\sqrt{r^2+s^2}$.
To get some intuition on why we need to assume that \emph{both} Gaussians
are sufficiently wide, notice for instance that if the discrete Gaussian is very narrow, then it
is concentrated on the origin, making the sum have width $s$.
Also, if the continuous Gaussian is too narrow, then the discrete
structure is still visible in the sum.

\begin{claim}\label{clm:gaussian_noise}
Let $L$ be a lattice, let $\vec u\in \R^n$ be any vector, let $r,s > 0$ be two reals, and let $t$ denote
$\sqrt{r^2+s^2}$. Assume that $rs/t = 1/\sqrt{1/r^2+1/s^2} \ge \eta_\eps(L)$ for some $\eps < \frac12$.
Consider the
continuous distribution $Y$ on $\R^n$ obtained by sampling from $D_{L+\vec u,r}$ and then adding a noise vector taken
from $\nu_s$. Then, the statistical distance between $Y$ and $\nu_t$ is at most $4\eps$.
\end{claim}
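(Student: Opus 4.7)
The plan is to write out the density of $Y$ explicitly, complete the square in the exponent, and then apply Claim~\ref{clm:gaussianweightshiftinvariant} to extract a pointwise ratio bound between $Y$ and $\nu_t$.

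The density of $Y$ at a point $\vec x\in\R^n$ is
\[
  Y(\vec x) = \sum_{\vec v\in L+\vec u} D_{L+\vec u,r}(\vec v)\, \nu_s(\vec x-\vec v)
           = \frac{1}{s^n\, \rho_r(L+\vec u)} \sum_{\vec v\in L+\vec u} \rho_r(\vec v)\,\rho_s(\vec x-\vec v).
\]
The key calculation is to rewrite the product $\rho_r(\vec v)\rho_s(\vec x-\vec v)$. Expanding the exponent and completing the square in $\vec v$, one finds (setting $u := rs/t = 1/\sqrt{1/r^2+1/s^2}$)
\[
  \frac{\|\vec v\|^2}{r^2} + \frac{\|\vec x-\vec v\|^2}{s^2}
    = \frac{1}{u^2}\Big\|\vec v - \tfrac{r^2}{t^2}\vec x\Big\|^2 + \frac{\|\vec x\|^2}{t^2},
\]
since the cross term $\|\vec x\|^2/s^2 - (r^2/s^2)\cdot\|\vec x\|^2/t^2$ collapses to $\|\vec x\|^2/t^2$. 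Letting $\vec c_{\vec x} := (r^2/t^2)\vec x$, this gives the clean factorization
\[
  Y(\vec x) = \frac{\rho_t(\vec x)}{s^n\, \rho_r(L+\vec u)}\cdot \rho_u(L + \vec u - \vec c_{\vec x}).
\]

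Now comes the application of the smoothing hypothesis. By assumption $u\ge \eta_\eps(L)$, so Claim~\ref{clm:gaussianweightshiftinvariant} applies to both lattice sums: $\rho_r(L+\vec u) \in r^n\det(L^*)(1\pm\eps)$ and $\rho_u(L+\vec u-\vec c_{\vec x}) \in u^n\det(L^*)(1\pm\eps)$, the latter uniformly in $\vec x$. Plugging these in and using $u/(rs) = 1/t$, the $\det(L^*)$'s cancel and the lengths combine to give
\[
  Y(\vec x) \in \frac{\rho_t(\vec x)}{t^n}\cdot\frac{1\pm\eps}{1\pm\eps} = \nu_t(\vec x)\cdot\frac{1\pm\eps}{1\pm\eps}.
\]
For $\eps<\tfrac12$, elementary estimates give $(1+\eps)/(1-\eps)\le 1+4\eps$ and $(1-\eps)/(1+\eps)\ge 1-4\eps$, so $|Y(\vec x)-\nu_t(\vec x)| \le 4\eps\,\nu_t(\vec x)$ pointwise. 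Integrating,
\[
  \Delta(Y,\nu_t) = \int_{\R^n} |Y(\vec x)-\nu_t(\vec x)|\,d\vec x \le 4\eps\int_{\R^n}\nu_t(\vec x)\,d\vec x = 4\eps,
\]
as desired.

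The main obstacle is really just the completing-the-square identity; once one recognizes that the effective width of the resulting Gaussian over $\vec v$ is exactly $u = rs/t$ — which is precisely the quantity the hypothesis lower-bounds by $\eta_\eps(L)$ — everything else is bookkeeping. It is worth noting that the hypothesis on $u$, rather than separately on $r$ and $s$, is exactly what the calculation demands and illustrates why both $r$ and $s$ must be moderately large: if either one is much smaller than $\eta_\eps(L)$, then $u$ is too.
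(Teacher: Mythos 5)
Your proof is correct and follows essentially the same route as the paper's: complete the square to factor the density of $Y$ as $\nu_t(\vec x)$ times a ratio of lattice Gaussian sums of widths $r$ and $u=rs/t$, then use the smoothing hypothesis to pin that ratio in $[1-4\eps,\,1+4\eps]$. The only difference is cosmetic — you invoke Claim~\ref{clm:gaussianweightshiftinvariant} directly on both sums (noting $r>u\ge\eta_\eps(L)$), whereas the paper re-derives the same $1\pm\eps$ bounds by applying the Poisson summation formula inline; since Claim~\ref{clm:gaussianweightshiftinvariant} is itself proved via Poisson summation, the two arguments are logically identical.
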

\begin{proof}
The probability density function of $Y$ can be written as
\begin{align}
Y(\vec x) & = \frac{1}{s^n\rho_r(L+\vec u)} \sum_{\vec y \in L+\vec u}  \rho_{r}(\vec y)\rho_s(\vec x - \vec y) \nonumber \\
          & = \frac{1}{s^n\rho_r(L+\vec u)} \sum_{\vec y \in L+\vec u}  \expo{-\pi (\|\vec y/r\|^2 + \|(\vec x - \vec y)/s\|^2)} \nonumber \\
          & = \frac{1}{s^n\rho_r(L+\vec u)} \sum_{\vec y \in L+\vec u}
               \expo{-\pi \Big(  \frac{r^2+s^2}{r^2 \cdot s^2} \cdot \Big\|\vec y - \frac{r^2}{r^2+s^2} \vec x\Big\|^2 +
               \frac{1}{r^2+s^2}\|\vec x\|^2\Big )} \nonumber \\
          & = \expo{- \frac{\pi}{r^2+s^2}\|\vec x\|^2} \frac{1}{s^n\rho_r(L+\vec u)} \sum_{\vec y \in L+\vec u}
               \expo{-\pi \Big(  \frac{r^2+s^2}{r^2 \cdot s^2}  \cdot \Big\|\vec y - \frac{r^2}{r^2+s^2} \vec x\Big\|^2\Big )}
               \nonumber \\
          & = \frac{1}{s^n} \rho_{t}(\vec x) \cdot \frac{\rho_{rs/t,(r/t)^2 \vec
          x - \vec u}(L)}{\rho_{r,-\vec u}(L)} \nonumber \\
          & = \frac{1}{s^n} \rho_t (\vec x) \cdot \frac{\widehat{\rho_{rs/t, (r/t)^2 \vec
          x - \vec u}}(L^*)}{\widehat{\rho_{r, - \vec u}}(L^*)} \nonumber \\
          & = \rho_t(\vec x) / t^n
             \cdot \frac{(t/rs)^n \widehat{\rho_{rs/t,(r/t)^2 \vec x - \vec u}}(L^*)}
                  {(1/r)^n \widehat{\rho_{r, - \vec u}}(L^*)}
          \label{eq:last_in_y}
\end{align}
where in the next-to-last equality we used Lemma~\ref{lem:psf}. Using Eq.~\eqref{eq:fourier_shift},
\begin{align*}
\widehat{\rho_{rs/t,(r/t)^2 \vec x - \vec u}(\vec w)} &=
  \expo{-2\pi i \ip{(r/t)^2 \vec x - \vec u, \vec w}} \cdot  (rs/t)^n \rho_{t/rs}(\vec w), \\
\widehat{\rho_{r,- \vec u}}(\vec w) &=
  \expo{2\pi i \ip{ \vec u, \vec w}} \cdot r^n \rho_{1/r}(\vec w).
\end{align*}
Hence,
\begin{align*}
\Big|1- (t/rs)^n \widehat{\rho_{rs/t,(r/t)^2 \vec x - \vec u}}(L^*) \Big| & \le
 \rho_{t/rs}(L^* \setminus \{\vec 0\}) \le \eps \\
\Big|1- (1/r)^n \widehat{\rho_{r, - \vec u}}(L^*) \Big| & \le
 \rho_{1/r}(L^* \setminus \{\vec 0\}) \le \eps
\end{align*}
where the last inequality follows from $1/r \le t/rs$.  Hence, the quotient in \eqref{eq:last_in_y} is between
$(1-\eps)/(1+\eps) \ge 1-2\eps$ and $(1+\eps)/(1-\eps) \le 1+4\eps$. This implies that,
\begin{align*}
| Y(\vec x) - \rho_{t}(\vec x) / t^n | \le
   \rho_t(\vec x) / t^n \cdot
   4\eps.
\end{align*}
We complete the proof by integrating over $\R^n$.
\end{proof}

\begin{corollary}\label{cor:gaussian_noise_one_dim}
Let $L$ be a lattice, let $\vec z, \vec u\in \R^n$ be vectors, and let $r,\alpha > 0$ be two reals.
Assume that $1/\sqrt{1/r^2+(\|\vec z\|/\alpha)^2} \ge \eta_\eps(L)$ for some $\eps < \frac12$.
Then the distribution of
$\ip{\vec z, \vec v} + e$
where $\vec v$ is distributed according to $D_{L+\vec u, r}$ and
$e$ is a normal variable with mean $0$ and standard deviation $\alpha/\sqrt{2 \pi}$,
is within statistical distance $4 \eps$ of a normal variable with mean $0$ and standard deviation
$\sqrt{(r\|\vec z\|)^2+\alpha^2}/\sqrt{2 \pi}$.
In particular, since statistical distance cannot increase by applying a function,
the distribution of
$ \ip{\vec z, \vec v} + e ~\mod~ 1$
is within statistical distance $4 \eps$ of $\Psi_{\sqrt{(r\|\vec z\|)^2+\alpha^2}}$.
\end{corollary}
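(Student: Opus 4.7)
The plan is to reduce the corollary directly to Claim~\ref{clm:gaussian_noise} by choosing the continuous-noise width so that its projection along $\vec z$ matches the scalar error $e$. Assuming without loss of generality that $\vec z \ne \vec 0$ (the case $\vec z = \vec 0$ is trivial), I would set $s := \alpha/\|\vec z\|$ and apply the claim to the lattice $L$, shift $\vec u$, and widths $r, s$. The hypothesis of the claim is exactly what the corollary assumes, since $1/\sqrt{1/r^2+1/s^2} = 1/\sqrt{1/r^2+(\|\vec z\|/\alpha)^2} \ge \eta_\eps(L)$, so the conclusion gives that the $n$-dimensional random vector $\vec v + \vec e'$, with $\vec v \sim D_{L+\vec u, r}$ and $\vec e' \sim \nu_s$ independent, is within statistical distance $4\eps$ of $\nu_t$ for $t := \sqrt{r^2 + s^2} = \sqrt{r^2 + \alpha^2/\|\vec z\|^2}$.

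Next I would push this bound through the linear functional $\vec x \mapsto \ip{\vec z, \vec x}$. Since statistical distance does not grow under (possibly randomized) functions, the law of $\ip{\vec z, \vec v} + \ip{\vec z, \vec e'}$ is within $4\eps$ of the law of $\ip{\vec z, \vec w}$ for $\vec w \sim \nu_t$. A direct one-line computation with the Gaussian density gives that $\ip{\vec z, \vec w}$ is a centered normal with variance $\|\vec z\|^2 t^2/(2\pi) = (\|\vec z\|^2 r^2 + \alpha^2)/(2\pi)$, matching the target standard deviation $\sqrt{(r\|\vec z\|)^2+\alpha^2}/\sqrt{2\pi}$. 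Similarly, $\ip{\vec z, \vec e'}$ is a centered normal with variance $\|\vec z\|^2 s^2/(2\pi) = \alpha^2/(2\pi)$, hence it has exactly the same law as the independent error $e$ in the corollary.

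To finish, I would observe that since $e$ and $\ip{\vec z, \vec e'}$ share a distribution and both are independent of $\vec v$, the random variables $\ip{\vec z, \vec v} + e$ and $\ip{\vec z, \vec v} + \ip{\vec z, \vec e'}$ have the same law. Combining this with the bound from the previous paragraph yields the first statement. The ``in particular'' clause is immediate by applying the function $x \mapsto x \bmod 1$, which cannot increase statistical distance; the image of the target Gaussian under reduction mod $1$ is, by definition \eqref{def:Psi}, the distribution $\Psi_{\sqrt{(r\|\vec z\|)^2+\alpha^2}}$.

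There is no real obstacle here: the only thing to check carefully is the bookkeeping of variances and the matching of the noise parameter $s = \alpha/\|\vec z\|$ so that both (i) the smoothing-parameter hypothesis of Claim~\ref{clm:gaussian_noise} transcribes into the stated hypothesis of the corollary, and (ii) the marginal of $\vec e'$ along $\vec z$ reproduces the prescribed noise $e$. Once those two identifications are made, the corollary drops out of the claim together with the contraction property of statistical distance.
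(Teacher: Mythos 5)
Your proof is correct and follows essentially the same route as the paper: set $s=\alpha/\|\vec z\|$, apply Claim~\ref{clm:gaussian_noise} to get $\vec v+\vec e'$ within $4\eps$ of $\nu_{\sqrt{r^2+s^2}}$, and push through the functional $\ip{\vec z,\cdot}$ using contraction of statistical distance. The only cosmetic differences are that you swap the order of the two observations (the paper first replaces $e$ by $\ip{\vec z,\vec h}$ and then applies the claim, whereas you apply the claim and only at the end identify $\ip{\vec z,\vec e'}$ with $e$), spell out the variance bookkeeping more explicitly, and note the trivial $\vec z=\vec 0$ case.
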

\begin{proof}
We first observe that the distribution of $\ip{\vec z, \vec v} + e$
is exactly the same as that of
$ \ip{\vec z, \vec v + \vec h} $
where $\vec h$ is distributed as the continuous Gaussian $\nu_{\alpha/\|\vec z\|}$.
Next, by Claim~\ref{clm:gaussian_noise}, we know that the distribution of $\vec v + \vec h$
is within statistical distance $4 \eps$ of the continuous Gaussian $\nu_{\sqrt{r^2+(\alpha/\|\vec z\|)^2}}$.
Taking the inner product of this continuous Gaussian with $\vec z$ leads to a normal
distribution with mean 0 and standard deviation $\sqrt{(r\|\vec z\|)^2+\alpha^2}/\sqrt{2 \pi}$,
and we complete the proof by using the fact that statistical distance cannot increase by
applying a function (inner product with $\vec z$ in this case).
\end{proof}

\begin{lemma}[Main procedure of the first part]\label{lem:find_coeff}
Let $\eps=\eps(n)$ be a negligible function, $p =p(n) \ge 2$ be an integer, and
$\alpha = \alpha(n) \in (0,1)$ be a real number.
Assume that we have access to an oracle $W$ that for all
$\beta \le \alpha$, finds $\vec{s}$ given a polynomial number of samples from $A_{\vec{s}, \Psi_\beta}$
(without knowing $\beta$).
Then, there exists an efficient algorithm that given an $n$-dimensional lattice $L$, a number
$r > \sqrt{2}p \eta_\eps(L)$, and a polynomial number of samples from $D_{L,r}$,
solves $\CVP_{L^*, \alpha p/(\sqrt{2}r)}^{(p)}$.
\end{lemma}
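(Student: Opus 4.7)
The plan is to turn each sample $\vec v \sim D_{L,r}$ into a single LWE sample whose secret encodes $\tau(\vec x) := (L^*)^{-1}\kappa_{L^*}(\vec x) \bmod p$, and then invoke $W$ to read off $\tau(\vec x)$. Given $\vec v$, set $\vec a := L^{-1}\vec v \bmod p \in \Z_p^n$ and $\vec y := \vec v / p$, so $\vec y$ lies in the coset $L + L\vec a/p$ of $L/p$, and $(\vec a,\vec y)$ inherits the joint distribution of $D_{L/p,\,r/p}$ labeled by its coset. Since $r/p > \sqrt{2}\,\eta_\eps(L)$, Claim~\ref{clm:gaussianweightshiftinvariant} applied to each coset shows that the marginal of $\vec a$ is within negligible statistical distance of uniform on $\Z_p^n$, and that conditional on $\vec a$, $\vec y \sim D_{L+L\vec a/p,\, r/p}$.

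Next I produce the second LWE coordinate. Draw $e$ from a centered real Gaussian with parameter $\alpha/\sqrt{2}$ independently of $\vec v$, and output $z := \ip{\vec x, \vec y} + e \bmod 1 \in \T$. To exhibit the ``signal'', decompose $\vec x = L^*\vec\tau + \vec\delta$ with $\vec\tau \in \Z^n$ (so $L^*\vec\tau = \kappa_{L^*}(\vec x)$) and $\|\vec\delta\| \le \alpha p/(\sqrt{2}r)$; writing $\vec y = L\vec b + L\vec a/p$ with $\vec b \in \Z^n$ and using $(L^*)^T L = I$, one computes
\[
\ip{L^*\vec\tau,\vec y} \;=\; \vec\tau^T \vec b \;+\; \vec\tau^T \vec a / p \;\equiv\; \ip{\tau(\vec x), \vec a}/p \pmod 1,
\]
because $\vec\tau^T \vec b \in \Z$ and $\vec\tau^T \vec a/p \bmod 1$ depends only on $\vec\tau \bmod p$. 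Hence $z \equiv \ip{\tau(\vec x),\vec a}/p + \bigl(\ip{\vec\delta,\vec y} + e\bigr) \pmod 1$, and it remains to control the noise term.

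Conditioning on $\vec a$ and applying Corollary~\ref{cor:gaussian_noise_one_dim} with lattice $L$, shift $\vec u = L\vec a/p$, width $r/p$, linear functional $\vec\delta$, and added-noise parameter $\alpha/\sqrt{2}$, the resulting distribution of $\ip{\vec\delta,\vec y}+e \bmod 1$ is within $4\eps$ of $\Psi_\beta$ for
\[
\beta \;:=\; \sqrt{(r\|\vec\delta\|/p)^2 + \alpha^2/2}.
\]
The bound $\|\vec\delta\| \le \alpha p/(\sqrt{2}r)$ gives $\beta \le \alpha$, and the same bound together with $r > \sqrt{2}\,p\,\eta_\eps(L)$ verifies the hypothesis $(p/r)^2 + 2\|\vec\delta\|^2/\alpha^2 \le 1/\eta_\eps(L)^2$ required by the corollary. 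Therefore each $(\vec a, z)$ is negligibly close to a sample from $A_{\tau(\vec x),\,\Psi_\beta}$.

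Finally, since the input samples $\vec v_i$ are independent, the produced pairs $(\vec a_i, z_i)_{i=1}^{n^c}$ are jointly within negligible statistical distance of $n^c$ independent samples from $A_{\tau(\vec x),\,\Psi_\beta}$ for a single (unknown) $\beta \le \alpha$. Feeding them into the oracle $W$ of the hypothesis outputs $\tau(\vec x) \in \Z_p^n$, which is exactly the output required by $\CVP_{L^*, \alpha p/(\sqrt{2}r)}^{(p)}$. I expect the main technical bookkeeping to lie in (i) accumulating the $\eps$-losses across conditioning and a union bound over the $n^c$ samples, and (ii) confirming that the slightly non-uniform marginal of $\vec a$ and the $\beta$-dependence of the noise are absorbed by the oracle's uniform handling of all $\beta\le\alpha$; both are routine given the statistical-distance bounds above.
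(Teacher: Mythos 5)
Your proof is correct and matches the paper's argument essentially line for line: you generate $(\vec a, \ip{\vec x,\vec v}/p + e \bmod 1)$, establish near-uniformity of $\vec a$ via Claim~\ref{clm:gaussianweightshiftinvariant}, peel off the signal $\ip{\tau(\vec x),\vec a}/p$ using duality of coefficient vectors, and control the residual noise by Corollary~\ref{cor:gaussian_noise_one_dim} with the same choice of Gaussian smoothing parameter $\alpha/\sqrt{2}$. The only cosmetic difference is that you work with $\vec y=\vec v/p$ and the coset $L+L\vec a/p$, whereas the paper keeps $\vec v$ and the coset $pL+L\vec a$; these are related by the scaling $\cdot/p$ and lead to identical parameter checks.
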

\begin{proof}
We describe a procedure that given $\vec{x}$ within distance $\alpha p / (\sqrt{2}r)$ of $L^*$,
outputs samples from the distribution $A_{\vec{s},\Psi_\beta}$
for some $\beta \le \alpha$ where $\vec{s} = (L^*)^{-1}\kappa_{L^*}(\vec{x}) ~\mod~ p$. By running this
procedure a polynomial number of times and then using $W$, we can find $\vec{s}$.

The procedure works as follows. We sample a vector $\vec{v} \in L$ from $D_{L,r}$, and let $\vec{a} = L^{-1}\vec{v}
~\mod~ p$. We then output
\begin{align}\label{eq:outputofprocedure}
(\vec{a},\ip{\vec{x},\vec{v}}/p+e ~\mod~1)
\end{align}
where $e \in \R$ is chosen according to a normal
distribution with standard deviation $\alpha /(2\sqrt{\pi})$.
We claim that the distribution given by this procedure is within negligible statistical distance of $A_{\vec{s},\Psi_\beta}$
for some $\beta \le \alpha$.

We first notice that the distribution of $\vec{a}$ is very close to uniform.
Indeed, the probability of obtaining each $\vec{a} \in \Z_p^n$ is proportional to
$\rho_r(pL+L\vec{a})$. Using $\eta_\epsilon(pL) = p \eta_\epsilon(L) < r$ and
Claim~\ref{clm:gaussianweightshiftinvariant},
the latter is $(r/p)^{n} \det(L^*) (1\pm \eps)$, which
implies that the statistical distance between the distribution of $\vec{a}$ and the uniform distribution is
negligible.

Next, we condition on any fixed value of $\vec a$ and
consider the distribution of the second element in \eqref{eq:outputofprocedure}. Define $\vec{x'} =
\vec{x}-\kappa_{L^*}(\vec{x})$ and note that $\|\vec{x'}\| \le \alpha p / (\sqrt{2}r)$. Then,
$$\ip{\vec{x},\vec{v}}/p+e ~\mod~1 = \ip{\vec{x'}/p,\vec{v}}+e + \ip{\kappa_{L^*}(\vec{x}),\vec{v}}/p ~\mod~1.$$
Now,
$$ \ip{\kappa_{L^*}(\vec{x}),\vec{v}} =
   \ip{(L^*)^{-1} \kappa_{L^*}(\vec{x}),L^{-1} \vec{v}}
   $$
since $L^{-1} = (L^*)^T$. In words, this says that the inner product between
$\kappa_{L^*}(\vec{x})$ and $\vec{v}$ (and in fact, between any vector in $L^*$ and any vector in $L$) is the same as the
inner product between the corresponding coefficient vectors. Since the coefficient vectors are integer,
$$ \ip{\kappa_{L^*}(\vec{x}),\vec{v}} ~\mod~ p = \ip{\vec{s},\vec{a}} ~\mod~ p$$
from which it follows that $\ip{\kappa_{L^*}(\vec{x}),\vec{v}}/p ~\mod~1$ is exactly $\ip{\vec{s},\vec{a}}/p ~\mod~ 1$.

We complete the proof by applying Corollary~\ref{cor:gaussian_noise_one_dim}, which shows that the distribution of the remaining part $\ip{\vec{x'}/p,\vec{v}}+e$
is within negligible statistical distance of $\Psi_\beta$ for $\beta=\sqrt{(r\|\vec x'\|/p)^2 + \alpha^2/2} \le \alpha$,
as required. Here we used that the distribution of $\vec{v}$ is $D_{pL+L\vec a, r}$ (since we are
conditioning on $\vec a$), the distribution of $e$ is normal with mean $0$ and standard deviation $(\alpha/\sqrt{2})/\sqrt{2 \pi}$,
and that
$$ 1/\sqrt{1/r^2+(\sqrt{2}\|\vec x'\|/p\alpha)^2} \ge r/\sqrt{2} > \eta_\eps(pL).$$
\end{proof}

\subsubsection{From $\CVP$ to samples}\label{ssec:from_cvp_to_samples}

In this subsection we describe a quantum procedure that uses a $\CVP$ oracle
in order to create samples from the discrete Gaussian distribution. We assume
familiarity with some basic notions of quantum computation, such as
(pure) states, measurements, and the quantum Fourier transform. See, e.g., \cite{nielsen&chuang:qc}
for a good introduction. For clarity, we often omit the normalization factors from
quantum states.

The following lemma shows that we can efficiently create a `discrete quantum Gaussian state' of
width $r$ as long as $r$ is large enough compared with $\lambda_n(L)$. It can be seen
as the quantum analogue of Lemma~\ref{lem:bootstrap}. The assumption that $L \subseteq \Z^n$
is essentially without loss of generality since a lattice with rational coordinates can
always be rescaled so that $L \subseteq \Z^n$.

\begin{lemma}\label{lem:quantumstate}
There exists an efficient quantum algorithm that, given an $n$-dimensional lattice $L \subseteq \Z^n$
and $r > 2^{2n} \lambda_n(L)$, outputs a state that is within $\ell_2$ distance $2^{-\Omega(n)}$ of
the normalized state corresponding to
\begin{align}\label{eq:requiredquantumstate}
 \sum_{\vec x \in L} \sqrt{\rho_r(\vec x)} \ket{\vec x} =  \sum_{\vec x \in L} \rho_{\sqrt{2}r}(\vec x) \ket{\vec x}.
\end{align}
\end{lemma}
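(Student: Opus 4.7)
The plan is to lift the classical procedure of Lemma~\ref{lem:bootstrap} to a coherent one. First, I would run LLL on the input to obtain a basis whose fundamental parallelepiped $\calP(L)$ has diameter $l \le n\cdot 2^n \lambda_n(L)$, which is exponentially smaller than $r$. Fix a super-polynomially fine grid $G = M^{-1}\Z^n$ containing $L$ (say $M = 2^{n^3}$). Because $L \subseteq \Z^n \subseteq G$, the map
$$ \vec y \;\mapsto\; \bigl(\vec y~\mod~\calP(L),\; \vec y - (\vec y~\mod~\calP(L))\bigr) $$
is an efficiently computable bijection from $G$ onto $(G\cap \calP(L)) \times L$, and hence extends to a unitary on the computational basis.

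Using standard preparation techniques for a product of one-dimensional Gaussians, I would prepare on the grid
$$ \ket{\psi_0} \propto \sum_{\vec y \in G \cap B} \sqrt{\nu_r(\vec y)}\,\ket{\vec y}, $$
where $B$ is a box large enough that truncation costs only $2^{-\Omega(n)}$ in $\ell_2$. Applying the decomposition unitary above yields
$$ \ket{\psi_1} \propto \sum_{\vec u \in G \cap \calP(L)}\sum_{\vec x \in L} \sqrt{\rho_r(\vec u + \vec x)}\,\ket{\vec u}\ket{\vec x}. $$
Because $r > 2^{2n}\lambda_n(L) \gg l$, Claim~\ref{clm:gaussianperturb} gives $\sqrt{\rho_r(\vec u + \vec x)} = \sqrt{\rho_r(\vec x)}\,(1 \pm 2^{-\Omega(n)})$ uniformly over $\vec u \in \calP(L)$ and all $\vec x$ with $\|\vec x\| \le \sqrt{n}\,r$; the tail $\|\vec x\| > \sqrt{n}\,r$ is controlled by Lemma~\ref{lem:bana2}, and Claim~\ref{clm:gaussianweightshiftinvariant} ensures that the marginal weight on each $\vec u$ register value is $(1\pm\eps)$-uniform. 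Consequently, $\ket{\psi_1}$ is within $\ell_2$ distance $2^{-\Omega(n)}$ of the normalized product state $\ket{\phi_{\vec u}}\otimes\ket{\phi_{\vec x}}$, where $\ket{\phi_{\vec u}}$ is the uniform superposition on $G \cap \calP(L)$ and $\ket{\phi_{\vec x}}$ is precisely the desired state $\sum_{\vec x \in L} \sqrt{\rho_r(\vec x)}\,\ket{\vec x}/\sqrt{\rho_r(L)}$.

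To finish, I would measure the first register in the computational basis. Since the joint state is $2^{-\Omega(n)}$-close to a product state, with probability $1 - 2^{-\Omega(n)}$ the post-measurement state on the second register is within $\ell_2$ distance $2^{-\Omega(n)}$ of $\ket{\phi_{\vec x}}$, independent of the outcome. The main obstacle will be coordinating the three sources of error — discretization of the continuous Gaussian onto $G$, truncation to the box $B$, and the Gaussian-perturbation shift by $\vec u \in \calP(L)$ — so that each contributes only $2^{-\Omega(n)}$, and then converting the pointwise multiplicative $(1\pm 2^{-\Omega(n)})$ bound on amplitudes, together with a normalization comparison, into the claimed $\ell_2$ distance bound. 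These estimates are routine but slightly fiddly; the exponential gap between $r$ and $\lambda_n(L)$ leaves ample slack.
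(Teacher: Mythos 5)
Your proposal follows essentially the same route as the paper: prepare a Gaussian superposition on a lattice containing $L$, reduce modulo an LLL-reduced fundamental parallelepiped $\calP(L)$ to isolate the coset, measure, and argue via Claim~\ref{clm:gaussianperturb} and Lemma~\ref{lem:bana2} that because $r$ is exponentially larger than $\diam(\calP(L))$ the resulting amplitudes are within a factor $1\pm 2^{-\Omega(n)}$ of the target. The paper works directly over $\Z^n$ rather than your super-polynomially fine grid $G$ (the hypothesis $L\subseteq\Z^n$ makes the extra refinement unnecessary), and compares the post-measurement amplitudes head-on using the Poisson summation formula rather than phrasing the argument as $\ell_2$-closeness to a product state, but these are differences of packaging, not of substance.
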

\begin{proof}
We start by creating the `one-dimensional Gaussian state'
\begin{align}\label{eq:onedimdgaussian}
 \sum_{x=-\sqrt{n}r}^{\sqrt{n}r} e^{-\pi (x/(\sqrt{2}r))^2} \ket{x}.
\end{align}
This state can be created efficiently using a technique by Grover and Rudolph~\cite{GroverCreatingSuper}
who show that in order to create such a state, it suffices to be able to
compute for any $a,b \in \{-\sqrt{n}r,\ldots,\sqrt{n}r\}$
the sum $\sum_{x=a}^{b} e^{-\pi (x/r)^2}$
to within good precision. This can be done using the same standard techniques
used in sampling from the normal distribution.

Repeating the procedure described above $n$ times, creates a system whose
state is the $n$-fold tensor product of the state in Eq.~\eqref{eq:onedimdgaussian},
which can be written as
\begin{align*}
 \sum_{\vec x \in \{-\sqrt{n}r,\ldots,\sqrt{n}r\}^n} \rho_{\sqrt{2} r}(\vec x) \ket{\vec x}.
\end{align*}
Since $\Z^n \cap \sqrt{n} r B_n \subseteq \{-\sqrt{n}r,\ldots,\sqrt{n}r\}^n$,
Lemma~\ref{lem:bana2} implies that this state is within $\ell_2$ distance $2^{-\Omega(n)}$ of
\begin{align}\label{eq:ndimdgaussian}
 \sum_{\vec x \in \Z^n} \rho_{\sqrt{2}r}(\vec x) \ket{\vec x}
\end{align}
and hence for our purposes we can assume that we have generated
the state in Eq.~\eqref{eq:ndimdgaussian}.

Next, using the LLL basis reduction algorithm \cite{LLL}, we obtain a basis for $L$ of length at most $2^n
\lambda_n(L)$ and let $\calP(L)$ be the parallelepiped generated by this
basis. We now compute in a new register $\vec x ~\mod~ \calP(L)$ and measure it.
Let $\vec y \in \calP(L)$ denote the result and note that $\|\vec y\| \le \diam(\calP(L)) \le n 2^n \lambda_n(L)$.
The state we obtain after the measurement is
$$
 \sum_{\vec x \in L+\vec y} \rho_{\sqrt{2} r}(\vec x) \ket{\vec x}.
$$
Finally, we subtract $\vec y$ from our register, and obtain
$$
 \sum_{\vec x \in L} \rho_{\sqrt{2} r}(\vec x +\vec y) \ket{\vec x}.
$$

Our goal is to show that this state is within $\ell_2$ distance $2^{-\Omega(n)}$ of the
one in Eq.~\eqref{eq:requiredquantumstate}. First, by Lemma~\ref{lem:bana2}, all but
an exponentially small part of the $\ell_2$ norm of the state in Eq.~\eqref{eq:requiredquantumstate}
is concentrated on points of norm at most $\sqrt{n} \cdot r$. So consider any $\vec x \in L$
with $\|\vec x\| \le \sqrt{n} \cdot r$. The amplitude squared given to it in Eq.~\eqref{eq:requiredquantumstate}
is $\rho_{r}(\vec x) / \rho_{r}(L)$.
By Lemma~\ref{lem:psf}, the denominator is
$\rho_{r}(L) = \det(L^*) \cdot r^n \rho_{1/r}(L^*) \ge \det(L^*) \cdot r^n$
and hence the amplitude squared is at most
$\rho_{r}(\vec x) / (\det(L^*) \cdot r^n) = \det(L) \nu_{r}(\vec x)$.

On the other hand, the amplitude squared given to $\vec x$
by our procedure is $\rho_{r}(\vec x +\vec y) / \rho_r(L+\vec y)$.
By Lemma~\ref{lem:psf}, the denominator is
$$\rho_{r}(L+\vec y) = \det(L^*) \cdot r^n \sum_{\vec z \in L^*} e^{2 \pi i \ip{\vec z, \vec y}} \rho_{1/r}(\vec z)
   \le (1+2^{-\Omega(n)}) \det(L^*) \cdot r^n.$$
To obtain this inequality, first note that by the easy part of Lemma~\ref{lem:transference},
$\lambda_1(L^*) \ge 1/ \lambda_n(L) > \sqrt{n}/r$,
and then apply Lemma~\ref{lem:bana2}.
Moreover, by Claim~\ref{clm:gaussianperturb}, the numerator is at least
$(1-2^{-\Omega(n)}) \rho_{r}(\vec x)$.
Hence, the amplitude squared given to $\vec x$ is at least
$(1-2^{-\Omega(n)}) \det(L) \nu_{r}(\vec x)$, as required.
\end{proof}

For a lattice $L$ and a positive integer $R$, we denote by $L/R$
the lattice obtained by scaling down $L$ by a factor of $R$.
The following technical claim follows from the fact that almost all the mass of $\rho$
is on points of norm at most $\sqrt{n}$.\footnote{The proof originally provided in the published journal version of this paper (J. ACM, 56(6), 34, 2009) contained a bug. The bug was found and fixed by the author in summer 2023, and also independently by Yilei Chen, Zihan Hu, Qipeng Liu, Han Luo, and Yaxin Tu in late 2023. The current statement of the claim is identical to the original one. However, the condition $\lambda_1(L) > 2\sqrt{n}$ is no longer used in the proof.}

\begin{claim}\label{clm:quantum_approx}
Let $R \ge 1$ be an integer and $L$ be an $n$-dimensional lattice satisfying $\lambda_1(L) > 2\sqrt{n}$.
Let $\calP(L)$ be some basic parallelepiped of $L$.
Then, the $\ell_2$ distance between the normalized quantum states corresponding to
\begin{align*}
 \ket{\vartheta_1} &= \sum_{\vec x \in L/R, \|\vec x\| < \sqrt{n}} \rho(\vec x) \ket{ \vec x ~\mod~ \calP(L)}, \text{\qquad and} \\
 \ket{\vartheta_2} &= \sum_{\vec x \in L/R} \rho(\vec x) \ket{ \vec x ~\mod~ \calP(L)} =
 \sum_{\vec x \in L/R \cap \calP(L)} \sum_{\vec y \in L} \rho(\vec x - \vec y) \ket{\vec x}
\end{align*}
is $2^{-\Omega(n)}$.
\end{claim}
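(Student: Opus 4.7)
The plan is to compare the unnormalized states coefficient-by-coefficient and then pass to normalized versions via the elementary estimate that for entrywise-nonnegative real vectors $u \le v$ one has $\|u/\|u\| - v/\|v\|\|_2 \le O(\|u-v\|_2/\|v\|_2)$ (valid once the ratio on the right is bounded by a constant). From the stated alternative form, the coefficient of $\ket{\vec x_0}$ in $\ket{\vartheta_2}$ is $c_{\vec x_0} := \rho(L+\vec x_0)$ and in $\ket{\vartheta_1}$ it is $c^{(1)}_{\vec x_0} := \rho((L+\vec x_0) \cap \sqrt{n}\, B_n)$; both are nonnegative reals with $c^{(1)}_{\vec x_0} \le c_{\vec x_0}$. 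Thus it suffices to show $\|\vartheta_2 - \vartheta_1\|_2 / \|\vartheta_2\|_2 = 2^{-\Omega(n)}$.

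For the numerator $\|\vartheta_2-\vartheta_1\|_2^2 = \sum_{\vec x_0}(c_{\vec x_0}-c^{(1)}_{\vec x_0})^2$, I would expand the square and change variables from $(\vec x_0, \vec y_1, \vec y_2) \in (L/R\cap \calP(L))\times L \times L$ to $\vec z_i := \vec x_0 - \vec y_i \in L/R$ (a bijection onto pairs $(\vec z_1,\vec z_2) \in (L/R)^2$ with $\vec z_1 \equiv \vec z_2 \pmod L$, since $\calP(L)$ is a fundamental domain for $L$). After fixing $\vec z := \vec z_1$ and summing over $\vec w := \vec z_1 - \vec z_2 \in L$, I obtain
\[ \|\vartheta_2-\vartheta_1\|_2^2 \;=\; \sum_{\vec z \in L/R,\; \|\vec z\|\ge\sqrt{n}} \rho(\vec z)\cdot \rho\bigl((L+\vec z)\setminus \sqrt{n}\, B_n\bigr). \]
Bounding the inner factor by $\rho(L+\vec z) \le \rho(L)$ (an immediate consequence of Poisson summation, Lemma~\ref{lem:psf}, since each Fourier coefficient is multiplied by a unit-modulus phase) and applying Lemma~\ref{lem:bana2} to the lattice $L/R$ on the outer sum yields $\|\vartheta_2-\vartheta_1\|_2^2 \le 2^{-2n}\rho(L)\rho(L/R)$.

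For the denominator, the same change of variables on $\|\vartheta_2\|_2^2 = \sum_{\vec x_0} c_{\vec x_0}^2$, together with the Gaussian product identity (obtained by completing the square) $\rho(\vec z)\rho(\vec z-\vec w)=\rho_{1/\sqrt{2}}(\vec z-\vec w/2)\,\rho_{\sqrt{2}}(\vec w)$, gives
\[ \|\vartheta_2\|_2^2 \;=\; \sum_{\vec w\in L}\rho_{\sqrt{2}}(\vec w)\,\rho_{1/\sqrt{2}}\!\bigl((L/R)+\vec w/2\bigr) \;\ge\; \rho_{1/\sqrt{2}}(L/R), \]
where the inequality keeps only the (nonnegative) $\vec w=\vec 0$ term. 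Applying Poisson to $\rho_{1/\sqrt{2}}(L/R)$ and using $\rho_{\sqrt{2}}\ge\rho$ pointwise on $(L/R)^* = RL^*$ then yields $\rho_{1/\sqrt{2}}(L/R) \ge 2^{-n/2}\rho(L/R)$, so $\|\vartheta_2\|_2^2 \ge 2^{-n/2}\rho(L/R)$.

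Dividing, $\|\vartheta_2-\vartheta_1\|_2^2/\|\vartheta_2\|_2^2 \le 2^{-3n/2}\rho(L)$. To close the loop, the hypothesis $\lambda_1(L)>2\sqrt{n}$ forces $L\cap \sqrt{n}\,B_n=\{\vec 0\}$; Lemma~\ref{lem:bana2} then gives $\rho(L) = 1 + \rho(L\setminus \sqrt{n}\,B_n) \le 1 + 2^{-2n}\rho(L)$, so $\rho(L)\le 1+2^{-\Omega(n)}$. The squared ratio is therefore $2^{-\Omega(n)}$, and so is the $\ell_2$ distance between the normalized states. The main obstacle is the denominator bound: since $\vec x\mapsto \vec x\bmod \calP(L)$ is many-to-one, the bucketing could in principle cause heavy destructive interference and a naive $\ell_2$ argument fails; the Gaussian product identity circumvents this by rewriting $\|\vartheta_2\|_2^2$ as a manifestly nonnegative sum whose $\vec w=\vec 0$ term already dominates by $2^{n/2}\rho(L/R)$.
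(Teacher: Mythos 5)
Your proof is correct, and it takes a genuinely different (though related) route from the paper's. Both arguments reduce to bounding the ratio $\|\vartheta_2-\vartheta_1\|_2/\|\vartheta_2\|_2$, and your change of variables is implicitly indexing the same object as the paper: the set of pairs $(\vec z_1,\vec z_2)\in(L/R)^2$ with $\vec z_1\equiv\vec z_2\pmod L$ is exactly the $2n$-dimensional lattice $M=\bigcup_{\vec x\in L/R\cap\calP(L)}(\vec x+L)\times(\vec x+L)$ used in the paper. The paper, however, simply observes that $\|\vartheta_2\|_2^2=\rho(M)$ and $\|\vartheta_2-\vartheta_1\|_2^2=\rho(M')$ with $M'\subset M\setminus\sqrt{2n}B_{2n}$, and closes in one line by applying Lemma~\ref{lem:bana2} once to the $2n$-dimensional lattice $M$ with $r=1$, obtaining $\rho(M')<2^{-4n}\rho(M)$. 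You instead bound numerator and denominator separately: the numerator via $\rho(L+\vec z)\le\rho(L)$ (Poisson) and Lemma~\ref{lem:bana2} on $L/R$, the denominator via the Gaussian product identity, Poisson, and retention of only the $\vec w=\vec 0$ term. This decoupling is correct but costs you an uncanceled factor of $\rho(L)$, which you then control with the hypothesis $\lambda_1(L)>2\sqrt{n}$. The paper's one-shot Banaszczyk argument avoids that factor entirely, which is precisely why, as the footnote notes, the corrected proof no longer uses the $\lambda_1(L)>2\sqrt{n}$ hypothesis at all, whereas your argument does (only $\lambda_1(L)>\sqrt n$ is actually needed for your last step, but some such bound is essential to your route). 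Both approaches are valid; the paper's is cleaner and strictly more general, yours is more hands-on and also correctly identifies the potential obstacle (destructive interference in the bucketing) and why the product identity sidesteps it.
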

\begin{proof}
Let $Z$ be the $\ell_2$ norm of $\ket{\vartheta_2}$. 
In the following we show that the $\ell_2$ distance between $\ket{\vartheta_1}$ and
$\ket{\vartheta_2}$ is at most $2^{-\Omega(n)} Z$. This is enough to establish that the $\ell_2$ distance between the
normalized quantum states corresponding to $\ket{\vartheta_1}$ and $\ket{\vartheta_2}$ is exponentially small.

Define the $2n$-dimensional lattice 
\begin{align*}
M &= \{ (\vec x_1, \vec x_2) \in L/R \times L/R ~|~  \vec x_1 = \vec x_2 ~\mod~ \calP(L) \} \\
  &= \bigcup_{\vec x \in L/R \cap \calP(L)} (\vec x+L) \times (\vec x+L) \; ,
\end{align*}
where the union is of disjoint sets. It follows that $\rho(M) = \sum_{\vec x \in L/R \cap \calP(L)} \rho(\vec x + L)^2 = Z^2$.

Next, notice that 
\[
\ket{\vartheta_2} - \ket{\vartheta_1} = 
\sum_{\vec x \in L/R \cap \calP(L)} \rho((\vec x + L) \setminus \sqrt{n} B_n) \ket{\vec x} \; .
\]
Therefore, $\|\ket{\vartheta_2} - \ket{\vartheta_1}\|^2 = \rho(M')$ where $M'$ is the subset of $M$ containing all pairs $(\vec x_1, \vec x_2)$ such that both $\|\vec x_1\| \ge \sqrt{n}$ and $\|\vec x_2\| \ge \sqrt{n}$. 
Since $M' \subset M \setminus \sqrt{2n} B_n$, we can apply Lemma~\ref{lem:bana2} (with $r=1$) and obtain that $\rho(M') < 2^{-4n} \rho(M) = 2^{-4n} Z^2$, as desired. 
\end{proof}

We now prove the main lemma of this subsection.

\begin{lemma}[Second part of iterative step]\label{lem:quantum_part}
There exists an efficient quantum algorithm that, given any $n$-dimensional lattice $L$, a number
$d < \lambda_1(L^*)/2$, and an oracle that solves $\CVP_{L^*,d}$, outputs a sample
from $D_{L,\sqrt{n}/(\sqrt{2}d)}$.
\end{lemma}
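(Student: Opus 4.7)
Set $r = \sqrt{n}/d$, so the target distribution is $D_{L,r/\sqrt{2}} = D_{L,\sqrt{n}/(\sqrt{2}d)}$. Following the outline in the overview, the plan is to quantumly construct, on a fine discretization of $\calP(L^*)$, a state whose coefficients are a periodization on $L^*$ of the Gaussian $\rho_{1/r}$, and then apply the QFT; by Poisson summation this transports the state to one on $L$ with amplitudes $\propto \rho_r$, whose measurement gives a sample from $D_{L,r/\sqrt 2}$.

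\textbf{Preparation and uncomputation.} Pick a large integer $R = 2^{\Theta(n)}$ and let $G := (L^*/R) \cap \calP(L^*)$. Using a basis of $L^*$, identify $G$ with $\Z_R^n$; via the dual basis of $L$, identify the Pontryagin dual of $G$ with a set of representatives of $L$ modulo $R L$. First prepare, up to $\ell_2$-error $2^{-\Omega(n)}$, the Gaussian state $\sum_{\vec z \in L^*/R} \rho_{1/r}(\vec z)\,\ket{\vec z}$ on the fine grid: this is a tensor product of one-dimensional Gaussians, exactly as in the proof of Lemma~\ref{lem:quantumstate}, and its squared amplitudes $\rho_{1/(\sqrt 2 r)}(\vec z)$ are, by Lemma~\ref{lem:bana2}, concentrated on $\|\vec z\| \le \sqrt{n}/(\sqrt 2 r) = d/\sqrt 2 < d$. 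Compute $\vec u := \vec z \bmod \calP(L^*) \in G$ into a fresh register. The crucial observation is that $\vec u$ lies within distance $\|\vec z\| < d < \lambda_1(L^*)/2$ of the lattice point $\vec y_0 := \vec u - \vec z \in L^*$, which is therefore the unique closest vector. Running the $\CVP_{L^*,d}$ oracle on $\vec u$ returns $\vec y_0$ into a third register; from $\vec u$ and $\vec y_0$ we recover $\vec z = \vec u - \vec y_0$ and XOR it into the original $\vec z$-register to zero it, then rerun the oracle to uncompute $\vec y_0$. The resulting state, up to $\ell_2$-error $2^{-\Omega(n)}$, is
\[
\ket{\phi} \;=\; \sum_{\vec u \in G}\Big(\sum_{\vec x \in L^*} \rho_{1/r}(\vec u - \vec x)\Big)\ket{\vec u},
\]
since for each $\vec u$ only one $\vec x$ contributes appreciably and extending to all of $L^*$ only adds a negligible tail.

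\textbf{QFT and measurement.} Apply the standard QFT on $G \cong \Z_R^n$. A direct computation (essentially Poisson summation, Lemma~\ref{lem:psf}, applied to the periodization) shows that the coefficient at the dual index $\vec y$ equals, up to a global scalar, $\sum_{\vec y' \in L,\;\vec y' \equiv \vec y \pmod{R L}} \rho_r(\vec y')$. For $R$ chosen so that $R\lambda_1(L) \gg \sqrt{n}\,r$ (using Lemma~\ref{lem:transference} and Lemma~\ref{lem:bana2}), only the coset representative of minimum norm contributes non-negligibly, so the amplitude is $\propto \rho_r(\vec y)$. Measurement therefore yields $\vec y \in L$ with probability $\propto \rho_r(\vec y)^2 = \rho_{r/\sqrt 2}(\vec y)$, i.e.\ a sample from $D_{L,r/\sqrt 2} = D_{L,\sqrt{n}/(\sqrt{2}d)}$.

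\textbf{Main difficulty.} The hard part is not the conceptual construction but the careful bookkeeping of the many $\ell_2$-small approximations: truncation of the prepared Gaussian to a finite box on $L^*/R$, the two Gaussian-tail estimates (one on the $\vec z$ side to guarantee applicability of $\CVP_{L^*,d}$, one on the $\vec y$ side for the $RL$-cosets), and rigorously setting up the identification $G \cong \Z_R^n$ together with its dual so that the textbook QFT on $\Z_R^n$ realizes the claimed character sum between $G$ and $\hat G$.
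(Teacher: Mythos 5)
Your proposal follows the paper's proof essentially verbatim: prepare a Gaussian state on $L^*/R$ (as in Lemma~\ref{lem:quantumstate}), fold into $\calP(L^*)$ and use the $\CVP_{L^*,d}$ oracle to uncompute the preimage, apply the QFT on $\Z_R^n$, and read off a sample proportional to $\rho_r(\cdot)^2 = \rho_{r/\sqrt2}(\cdot)$. The ``bookkeeping'' you defer is exactly what the paper does rigorously via Claim~\ref{clm:quantum_approx} (applied once on each side of the QFT) and a final Babai recovery of $\vec y$ from the measured coset $\vec y \bmod RL$; be aware that the $\ell_2$-distance estimate in Claim~\ref{clm:quantum_approx} is subtle enough that the originally-published version of it contained an error, so the heuristic ``only one coset representative contributes'' does need the careful argument given there.
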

\begin{proof}
By scaling, we can assume without loss of generality that $d=\sqrt{n}$.
Let $R \ge 2^{3n}\lambda_n(L^*)$ be a large enough integer.
We can assume that $\log R$ is polynomial in the input size (since such an $R$ can be computed in polynomial
time given the lattice $L$). Our first step is to create a state exponentially close to
\begin{equation}\label{eq:first_quantum_state}
\sum_{\vec x \in L^*/R \cap \calP(L^*)} \sum_{\vec y \in L^*} \rho(\vec x - \vec y) \ket{\vec x}.
\end{equation}
This is a state on $n \log R$ qubits, a number that is polynomial in the input size. To do so,
we first use Lemma~\ref{lem:quantumstate} with $r=1/\sqrt{2}$ and the lattice $L^*/R$ to create the state
$$ \sum_{\vec x \in L^*/R} \rho(\vec x) \ket{\vec x}.$$
By Lemma~\ref{lem:bana2}, this is exponentially close to
$$ \sum_{\vec x \in L^*/R, \|\vec x\| < \sqrt{n}} \rho(\vec x) \ket{\vec x}.$$
Next, we compute $\vec x ~\mod~ \calP(L^*)$ in a new register and obtain
$$ \sum_{\vec x \in L^*/R, \|\vec x\| < \sqrt{n}} \rho(\vec x) \ket{\vec x, \vec x ~\mod~ \calP(L^*)}.$$
Using the $\CVP$ oracle, we can recover $\vec x$ from $\vec x ~\mod~ \calP(L^*)$. This allows us to uncompute the first
register and obtain
$$ \sum_{\vec x \in L^*/R, \|\vec x\| < \sqrt{n}} \rho(\vec x) \ket{\vec x ~\mod~ \calP(L^*)}.$$
Using Claim~\ref{clm:quantum_approx}, this state is exponentially close to the required state
\eqref{eq:first_quantum_state}.

In the second step, we apply the quantum Fourier transform. First, using the natural mapping between
$L^*/R \cap \calP(L^*)$ and $\Z_{R}^n$, we can rewrite \eqref{eq:first_quantum_state} as
$$ \sum_{\vec s \in \Z_{R}^n} \sum_{\vec r \in \Z^n} \rho(L^* \vec s /R - L^* \vec r) \ket{\vec s} .$$
We now apply the quantum Fourier transform on $\Z_{R}^n$. We obtain a state in which the amplitude of $\ket{\vec t}$
for $\vec t \in \Z_{R}^n$ is proportional to
\begin{align*}
   & \sum_{\vec s\in \Z^n_{R}} \sum_{\vec r \in \Z^n} \rho(L^* \vec s / R - L^* \vec r)
   \exp(2\pi i \ip{\vec s,\vec t}/R) \\
   & \qquad = \sum_{\vec s\in \Z^n} \rho(L^* \vec s / R ) \exp(2\pi i \ip{\vec s,\vec t}/R) \\
   &\qquad = \sum_{\vec x\in L^*/R} \rho(\vec x) \exp(2\pi i \ip{(L^*)^{-1}\vec x,\vec t}) \\
   &\qquad = \sum_{\vec x\in L^*/R} \rho(\vec x) \exp(2\pi i \ip{\vec x,L \vec t}) \\
   &\qquad = \det(R L) \sum_{\vec y\in R L} \rho(\vec y - L \vec t)
\end{align*}
where the last equality follows from Lemma~\ref{lem:psf} and Eq.~\eqref{eq:fourier_shift_2}. Hence, the resulting state can
be equivalently written as
$$ \sum_{\vec x \in \calP(RL) \cap L} \sum_{\vec y \in RL} \rho(\vec y - \vec x) \ket{\vec x}.$$
Notice that $\lambda_1(RL) = R \lambda_1(L) \ge R / \lambda_n(L^*) \ge 2^{3n}$. Hence, we can apply Claim
\ref{clm:quantum_approx} to the lattice $RL$, and obtain that this state is exponentially close to
$$ \sum_{\vec x \in L, \|\vec x\| < \sqrt{n}} \rho(\vec x) \ket{\vec x ~\mod~ \calP(RL)}.$$
We measure this state and obtain $\vec x ~\mod~ \calP(RL)$ for some vector $\vec x$ with $\|\vec x\| < \sqrt{n}$. Since
$\vec x ~\mod~ \calP(RL)$ is within $\sqrt{n}$ of the lattice $RL$, and $\lambda_1(RL) \ge 2^{3n}$, we can recover $\vec x$
by using, say, Babai's nearest plane algorithm~\cite{Babai86}. The output of the algorithm is $\vec x$.

We claim that the distribution of $\vec x$ is exponentially close to $D_{L,1/\sqrt{2}}$. Indeed, the probability of
obtaining any $\vec x \in L, \| \vec x\| < \sqrt{n}$ is proportional to $\rho(\vec x)^2 = \rho_{1/\sqrt{2}}(\vec x)$.
It remains to notice that by Lemma~\ref{lem:bana2}, all but an exponentially small fraction of the probability
distribution $D_{L,1/\sqrt{2}}$ is on points of norm less than $\sqrt{n}$.
\end{proof}

\subsection{Standard lattice problems}\label{ssec:standardlatticeproblems}

We now complete the proof of the main theorem by reducing the standard lattice problems $\GapSVP$ and $\SIVP$
to $\DGS$. We start with $\SIVP$. The basic idea of the reduction is simple: we call the $\DGS$ oracle
enough times. We show that with high probability, there are $n$ short linearly independent
vectors among the returned vectors. We prove this by using the following
lemma, which appeared in the preliminary version of \cite{MicciancioR04}. We include the
proof since only a proof sketch was given there.

\begin{lemma}\label{lem:hyperplane}
Let $L$ be an $n$-dimensional lattice and let $r$ be such that $r \ge \sqrt{2} \eta_\epsilon(L)$ where $\epsilon \le
\frac{1}{10}$. Then for any subspace $H$ of dimension at most $n-1$ the probability that $\vec{x}\notin H$ where $\vec{x}$ is
chosen from $D_{L,r}$ is at least $\frac{1}{10}$.
\end{lemma}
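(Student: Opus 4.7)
The plan is to decompose $L$ into cosets of $L_1 := L \cap H$ and apply the shift-invariance of the discrete Gaussian mass above the smoothing parameter (Claim~\ref{clm:gaussianweightshiftinvariant}), combined with induction on the ambient dimension $n$.

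First, since $V := \mathrm{span}(L_1) \subseteq H$ satisfies $L \cap V = L \cap H = L_1$, we have $\Pr[\vec x \in H] = \Pr[\vec x \in V]$; replacing $H$ by $V$, I may assume $\dim H = k = \mathrm{rank}(L_1)$ with $L_1$ spanning $H$. Let $\pi \colon \R^n \to H^\perp$ denote orthogonal projection; a Hermite-normal-form argument on a basis of $L$ extending a basis of $L_1$ shows $M := \pi(L)$ is a rank-$(n-k)$ lattice in $H^\perp$. The cosets of $L_1$ in $L$ are indexed by $\vec w \in M$: each equals $L_1 + \vec v_{\vec w}$ for some lift $\vec v_{\vec w} \in L$ with $\pi(\vec v_{\vec w}) = \vec w$, and writing $\vec v_{\vec w} = \vec v_{\vec w}^\parallel + \vec w$ with $\vec v_{\vec w}^\parallel \in H$, the Pythagorean split gives
\[
 \rho_r(L_1 + \vec v_{\vec w}) = e^{-\pi\|\vec w\|^2/r^2}\, \rho_r(L_1 + \vec v_{\vec w}^\parallel),
\]
so $\rho_r(L) = \sum_{\vec w \in M} e^{-\pi\|\vec w\|^2/r^2}\,\rho_r(L_1+\vec v_{\vec w}^\parallel)$.

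Next, applying Claim~\ref{clm:gaussianweightshiftinvariant} to $L_1$ (with a parameter I will verify is available from the hypothesis $r \ge \sqrt{2}\eta_\epsilon(L)$) makes each $\rho_r(L_1 + \vec c)$ agree with $\rho_r(L_1)$ up to a $(1 \pm \epsilon')$ factor; the $\vec w = \vec 0$ term contributes exactly $\rho_r(L_1)$, so
\[
 \frac{\rho_r(L)}{\rho_r(L_1)} \;\ge\; 1 + \frac{1-\epsilon'}{1+\epsilon'}\,\rho_r(M \setminus \{\vec 0\}).
\]
Thus it suffices to lower-bound $\rho_r(M \setminus \{\vec 0\})$ by a constant slightly greater than $1/9$. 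For this, observe $M^* = L^* \cap H^\perp \subseteq L^*$, so $\rho_{1/r}(M^* \setminus \{\vec 0\}) \le \rho_{1/r}(L^* \setminus \{\vec 0\}) \le \epsilon^2$, the last bound coming from the squaring identity $\rho_{1/r}(\vec y) \le \rho_{1/\eta_\epsilon(L)}(\vec y)^2$ applied to $r \ge \sqrt{2}\eta_\epsilon(L)$. Hence with $\epsilon \le 1/10$ we have $r \ge \sqrt{2}\eta_{1/10}(M)$, and I apply the lemma inductively to $M$ (of rank $n-k < n$) with the trivial subspace $\{\vec 0\}$, yielding $\rho_r(M \setminus \{\vec 0\})/\rho_r(M) \ge 1/10$, hence $\rho_r(M) \ge 10/9$ and $\rho_r(M \setminus \{\vec 0\}) \ge 1/9$.

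The base case $n = 1$ is a direct computation: $L = t\Z$ and $r \ge \sqrt{2}\eta_\epsilon(L)$ forces $r/t \ge \sqrt{\ln(2/\epsilon^2)/\pi}$, which for $\epsilon = 1/10$ is roughly $1.3$ and easily gives $\rho_r(L) \ge 10/9$. The main technical obstacle is verifying that Claim~\ref{clm:gaussianweightshiftinvariant} can be applied to the sublattice $L_1$ at a parameter $\epsilon'$ small enough that the $(1-\epsilon')/(1+\epsilon')$ slack, together with the inductive bound $\rho_r(M\setminus\{\vec 0\}) \ge 1/9$, actually beats the $1/10$ threshold; this requires controlling the dual $L_1^* = \sigma(L^*)$, obtained by orthogonally projecting $L^*$ onto $H$, which can contain vectors strictly shorter than those of $L^*$ itself.
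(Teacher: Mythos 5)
Your proposal takes a genuinely different route from the paper, and the route has a real gap that you yourself correctly flag as the ``main technical obstacle'' but do not resolve; in fact it cannot be resolved in the form you propose.

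The step that fails is the application of Claim~\ref{clm:gaussianweightshiftinvariant} to the sublattice $L_1 = L \cap H$. That claim requires $r \ge \eta_{\eps'}(L_1)$, and this is \emph{not} implied by $r \ge \sqrt{2}\eta_\eps(L)$: the sublattice $L_1$ can be arbitrarily sparse even when $L$ is well below $r$ in scale. Concretely, take $L = \Z^2$, $\eps = 1/10$, $r \approx 1.5$ (so $r \ge \sqrt 2\,\eta_\eps(\Z^2)$ holds), and let $H$ be the line through $(N,1)$ for a large integer $N$. Then $L_1 = \Z\cdot(N,1)$ has $\lambda_1(L_1) \approx N \gg r$, so $\rho_r(L_1) \approx 1$ while $\rho_r(L_1 + \vec c) \approx 0$ for almost all shifts $\vec c \in H$ — the shift-invariance estimate fails catastrophically. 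Meanwhile $M = \pi(L)$ has spacing $\approx 1/N$, so $\rho_r(M\setminus\{\vec 0\})$ is on the order of $rN$, and your claimed inequality $\rho_r(L)/\rho_r(L_1) \ge 1 + \frac{1-\eps'}{1+\eps'}\rho_r(M\setminus\{\vec 0\})$ would force $\rho_r(\Z^2) \gtrsim rN/9$, which is false (the left side is a constant $\approx 2.3$). So the decomposition inequality itself is wrong, not just hard to tune. A secondary problem is arithmetic: even granting the shift-invariance with some $\eps' > 0$, the chain $\rho_r(M\setminus\{\vec 0\}) \ge 1/9 \Rightarrow \rho_r(L)/\rho_r(L_1) \ge 1 + \frac{1-\eps'}{1+\eps'}\cdot\frac19 \Rightarrow \Pr[\vec x\notin H] \ge \frac1{10}$ has zero slack at $\eps'=0$ and fails for any $\eps' > 0$, so you would need to strengthen the inductive hypothesis.

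The paper's proof avoids all of this with a one-step Poisson-summation argument: pass WLOG to a coordinate hyperplane $H = \{x_1 = 0\}$ and bound $\Exp_{\vec x\sim D_{L,r}}[\expo{-\pi(x_1/r)^2}]$, which upper-bounds $\Pr[\vec x \in H]$ since the Gaussian factor equals $1$ exactly on $H$. Multiplying $\rho_r(\vec x)$ by $\expo{-\pi(x_1/r)^2}$ squeezes the first coordinate by $\sqrt 2$; Poisson summation and the smoothing hypothesis then give the bound $(1+\eps)/\sqrt 2 < 0.9$ directly, with no induction, no coset decomposition, and crucially no need to control the smoothing parameter of the sublattice $L\cap H$. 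That single maneuver is what makes the paper's proof work in situations where $L\cap H$ is pathologically sparse, which is exactly where your decomposition breaks.
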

\begin{proof}
Assume without loss of generality that the vector $(1,0,\ldots,0)$ is orthogonal to $H$.
Using Lemma~\ref{lem:psf},
\begin{align*}
\Exp_{\vec x \sim D_{L,r}}[ \expo{-\pi (x_1 /r )^2} ] &=
\frac{1}{\rho_r(L)} \sum_{\vec x \in L} \expo{ - \pi (\sqrt{2}x_1/r)^2}
  \expo{ - \pi (x_2/r)^2} \cdots \expo{ - \pi (x_n/r)^2} \\
  &= \frac{\det(L^*) \, r^n}{\sqrt{2} \rho_r(L)} \, \sum_{\vec y \in L^*}
       \expo{-\pi (r y_1 / \sqrt{2})^2} \expo{-\pi (r y_2)^2} \cdots \expo{-\pi (r y_n)^2} \\
  &\le \frac{\det(L^*) \, r^n}{\sqrt{2} \rho_r(L)} \, \rho_{\sqrt{2}/r} (L^*) \\
  &\le \frac{\det(L^*) \, r^n}{\sqrt{2} \rho_r(L)} \, (1+\eps).
\end{align*}
By using Lemma~\ref{lem:psf} again, we see that $\rho_r(L) = \det(L^*) \, r^n \rho_{1/r}(L^*) \ge \det(L^*) \, r^n$.
Therefore, the expectation above is at most
$\frac{1}{\sqrt{2}} \,(1+\eps) < 0.9$ and the lemma follows.
\end{proof}

\begin{corollary}\label{cor:nindependentvectors}
Let $L$ be an $n$-dimensional lattice and let $r$ be such that $r \ge \sqrt{2} \eta_\epsilon(L)$ where $\eps \le \frac{1}{10}$. Then, the
probability that a set of $n^2$ vectors chosen independently from $D_{L,r}$ contains no $n$ linearly independent vectors is
exponentially small.
\end{corollary}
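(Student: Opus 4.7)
The plan is to partition the $n^2$ samples into $n$ consecutive blocks of $n$ samples each and argue that, with high probability, every block contributes at least one vector lying outside the span of all previously drawn vectors. Since each such contribution increases the dimension of the span by at least one, $n$ successive contributions must yield $n$ linearly independent vectors.

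First, I would establish a single-block estimate. Fix an arbitrary subspace $H \subseteq \R^n$ of dimension at most $n-1$. By Lemma~\ref{lem:hyperplane}, for a single sample $\vec x$ drawn from $D_{L,r}$ we have $\Pr[\vec x \notin H] \ge \frac{1}{10}$. Since the $n$ samples inside a block are mutually independent, the probability that all $n$ of them lie inside $H$ is at most $(9/10)^n = 2^{-\Omega(n)}$.

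Next, I would apply this estimate inductively over the $n$ blocks. For $k = 1,\ldots,n$, let $H_k$ denote the linear span of all samples drawn in blocks $1,\ldots,k-1$ (so $H_1 = \{\vec 0\}$), and let $\calF_{k-1}$ be the $\sigma$-algebra generated by those samples. Conditioned on $\calF_{k-1}$, the subspace $H_k$ is fixed, and as long as $\dim H_k \le n-1$, the single-block estimate applies with $H = H_k$. Consequently, the conditional probability that every sample in block $k$ lies in $H_k$ is at most $(9/10)^n$.

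Finally, I would close the argument by a union bound. Observe that if the $n^2$ sampled vectors fail to contain $n$ linearly independent ones, then the dimension of $H_{n+1}$ is strictly less than $n$, which forces at least one block $k \in \{1,\ldots,n\}$ to satisfy $\dim H_{k+1} = \dim H_k$ while $\dim H_k < n$ (otherwise each block would strictly increase the running dimension by at least one, yielding dimension $n$ after $n$ blocks). By the previous step, each such event has probability at most $(9/10)^n$, and a union bound over the $n$ blocks gives a total failure probability of at most $n \cdot (9/10)^n = 2^{-\Omega(n)}$. The only mild subtlety is the conditioning: $H_k$ is a random subspace, but it is $\calF_{k-1}$-measurable, and the samples in block $k$ are independent of $\calF_{k-1}$, so Lemma~\ref{lem:hyperplane} can be invoked for each realization of $H_k$ and then averaged.
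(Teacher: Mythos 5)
Your proposal is correct and follows essentially the same route as the paper: partition the $n^2$ samples into $n$ blocks of $n$, define for each block the bad event that the span does not grow, bound each such event by $(9/10)^n$ via Lemma~\ref{lem:hyperplane} after conditioning on the earlier samples, and finish with a union bound. The paper phrases the conditioning a bit less formally (it conditions directly on a fixed prefix rather than invoking $\sigma$-algebras), but the decomposition, the key lemma, and the bound are identical.
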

\begin{proof}
Let $\vec x_1,\ldots,\vec x_{n^2}$ be $n^2$ vectors chosen independently from $D_{L,r}$. For $i=1,\ldots,n$,
let $B_i$ be the event that
$$\dim {\rm span}(\vec x_1,\ldots,\vec x_{(i-1)n}) = \dim  {\rm span}(\vec x_1,\ldots,\vec x_{in}) < n.$$
Clearly, if none of the $B_i$'s happens,
then $\dim {\rm span}(\vec x_1,\ldots,\vec x_{n^2})=n$. Hence, in order to complete the proof it suffices to
show that for all $i$, $\Pr[B_i] \le 2^{-\Omega(n)}$. So fix some $i$, and let us condition on some fixed
choice of $\vec x_1,\ldots,\vec x_{(i-1)n}$ such that $\dim {\rm span}(\vec x_1,\ldots,\vec x_{(i-1)n}) < n$.
By Lemma~\ref{lem:hyperplane}, the probability that
$$\vec x_{(i-1)n+1},\ldots,\vec x_{in} \in \dim {\rm span}(\vec x_1,\ldots,\vec x_{(i-1)n})$$
is at most $(9/10)^n = 2^{-\Omega(n)}$.
This implies that $\Pr[B_i] \le 2^{-\Omega(n)}$, as required.
\end{proof}

In the following lemma we give the reduction from $\SIVP$ (in fact, $\GIVP$) to $\DGS$.
It shows that under the assumptions of Theorem~\ref{thm:mainthmmm},
there exists an efficient quantum algorithm for $\GIVP_{2\sqrt{2}n \eta_\eps(L)/\alpha}$.
By Lemma~\ref{lem:lambdan}, this algorithm also solves $\SIVP_{\tilde{O}(n / \alpha)}$.

\begin{lemma}\label{lem:givptodgs}
For any $\eps=\eps(n) \le \frac{1}{10}$ and any $\varphi(L) \ge \sqrt{2} \eta_\epsilon(L)$,
there is a polynomial time reduction from $\GIVP_{2 \sqrt{n} \varphi}$ to $\DGS_\varphi$.
\end{lemma}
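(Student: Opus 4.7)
The plan is to simply invoke the $\DGS_\varphi$ oracle enough times at a single width $r$ chosen just above $\varphi(L)$, and to argue that the returned vectors automatically satisfy both the length bound and the linear-independence requirement. Concretely, I would pick some $r$ with $\varphi(L) < r \le 2\varphi(L)$ (for instance by guessing/binary-searching over an exponentially small grid between an easily computable upper bound and lower bound on $\varphi(L)$, and outputting the best result across all attempts), and then call the $\DGS_\varphi$ oracle on $(L, r)$ a total of $n^{2}$ times to obtain independent samples $\vec{x}_{1},\ldots,\vec{x}_{n^{2}}$ from $D_{L,r}$.

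The length bound is immediate from Lemma~\ref{lem:bana2}: since the total mass of $D_{L,r}$ outside the ball of radius $\sqrt{n}\, r$ is at most $2^{-2n}$, a union bound over the $n^{2}$ samples shows that every $\vec{x}_{i}$ satisfies $\|\vec{x}_{i}\| \le \sqrt{n}\, r \le 2\sqrt{n}\, \varphi(L)$ except with probability $2^{-\Omega(n)}$. The linear-independence claim is exactly Corollary~\ref{cor:nindependentvectors}, which applies because $r > \varphi(L) \ge \sqrt{2}\,\eta_{\epsilon}(L)$ with $\epsilon \le \tfrac{1}{10}$: it tells us that with probability exponentially close to $1$ the collection $\{\vec{x}_{1},\ldots,\vec{x}_{n^{2}}\}$ already contains $n$ linearly independent vectors. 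Combining these two high-probability events via a union bound, we can extract $n$ linearly independent lattice vectors each of norm at most $2\sqrt{n}\,\varphi(L)$, which is precisely a solution to $\GIVP_{2\sqrt{n}\varphi}(L)$.

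The only real subtlety is the choice of $r$, since $\varphi(L)$ itself need not be efficiently computable and we need $r$ strictly above $\varphi(L)$ but not so far above that $\sqrt{n}\, r$ exceeds $2\sqrt{n}\,\varphi(L)$. I would handle this by running the procedure on a polynomial-length sequence of candidate values of $r$ spanning an interval that is guaranteed to contain a good choice, and returning the shortest set of $n$ linearly independent vectors produced across all trials; the verification that the output length is at most $2\sqrt{n}\,\varphi(L)$ comes for free from the fact that some $r$ in the range $(\varphi(L), 2\varphi(L)]$ was tried. Beyond this bookkeeping step, which is essentially routine, there is no real obstacle: all the analytic work has already been done in Lemma~\ref{lem:bana2} and Corollary~\ref{cor:nindependentvectors}, and the reduction is a one-shot use of the oracle rather than an iterative construction.
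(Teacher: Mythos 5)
Your proposal matches the paper's proof in its essentials: $n^2$ calls to the $\DGS$ oracle at a width $r\in(\varphi(L),2\varphi(L)]$, Lemma~\ref{lem:bana2} for the $\sqrt{n}\,r$ length bound, Corollary~\ref{cor:nindependentvectors} for linear independence, and a sweep over candidate widths to cope with $\varphi(L)$ not being efficiently computable. The one piece you relegate to ``routine bookkeeping'' is, however, exactly where the paper does nontrivial work: exhibiting an explicit \emph{polynomial-size} candidate set. The paper runs LLL to get $\tilde{\lambda}_n\in[\lambda_n(L),\,2^n\lambda_n(L)]$, then invokes Claim~\ref{clm:lowerboundsmoothing} to lower-bound $\varphi(L)\ge\sqrt{2}\,\eta_\eps(L)$ by roughly $\lambda_n(L)/n$, hence $\varphi(L)>\tilde{\lambda}_n 2^{-2n}$; so the $2n+1$ widths $r_i=\tilde{\lambda}_n 2^{-i}$, $i\in\{0,\dots,2n\}$, are guaranteed to hit the target interval when $\varphi(L)<\tilde{\lambda}_n$. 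The complementary case $\varphi(L)\ge\tilde{\lambda}_n$ is handled by keeping the LLL basis itself as a fallback candidate, a provision your write-up omits. Finally, your passing mention of ``binary-searching'' over $r$ cannot work --- the oracle gives no comparison signal against $\varphi(L)$ --- but you correctly revert at the end to trying all candidates and returning the shortest valid set, which is what the paper does.
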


\begin{proof}
As mentioned above, the idea of the reduction is to simply call the $\DGS$ oracle in an attempt to find
$n$ short linearly independent vectors. One technical complication is that the function $\varphi$ is
not necessarily efficiently computable, and hence we do not know which parameter $r$ to give the $\DGS$ oracle.
The solution is easy: we just try many values of $r$ and take the shortest set of $n$
linearly independent vectors found.

We now present the reduction in detail. The input to the reduction is a lattice $L$.
We first apply the LLL algorithm \cite{LLL} to obtain $n$ linearly independent vectors of length at most $2^n
\lambda_n(L)$. Let $S$ denote the resulting set, and let $\tilde{\lambda_n}$ be the length of the longest
vector in $S$. By construction we have $\lambda_n(L) \le \tilde{\lambda_n} \le 2^n \lambda_n(L)$.
For each $i \in \{0,\ldots,2n\}$
call the $\DGS$ oracle $n^2$ times with the pair $(L,r_i)$ where $r_i = \tilde{\lambda_n} 2^{-i}$,
and let $S_i$ be the resulting set of vectors.
At the end, look for a set of $n$ linearly independent vectors in each of $S,S_0,S_1,\ldots,S_{2n}$,
and output the shortest set found.

We now prove correctness. If $\varphi(L) \ge \tilde{\lambda_n}$ then $S$ is already
shorter than $2 \sqrt{n} \varphi(L)$ and so we are done. Otherwise, let $i \in \{0,\ldots,2n\}$ be such that
$\varphi(L) < r_i \le 2 \varphi(L)$. Such an $i$ must exist by Claim~\ref{clm:lowerboundsmoothing}.
By Corollary~\ref{cor:nindependentvectors}, $S_i$ contains $n$ linearly independent vectors with
probability exponentially close to $1$. Moreover, by Lemma~\ref{lem:bana2}, all vectors in $S_i$ are of
length at most $r_i \sqrt{n} \le 2 \sqrt{n} \varphi(L)$ with probability exponentially close to $1$. Hence, our reduction
outputs a set of $n$ linearly independent vectors of length at most $2 \sqrt{n} \varphi(L)$, as required.
\end{proof}

We now present the reduction from $\GapSVP$ to $\DGS$. We first define the decision version of the closest vector problem ($\GapCVP$) and a
slight variant of it.

\begin{definition}
An instance of $\GapCVP_\gamma$ is given by an $n$-dimensional lattice $L$, a vector $\vec t$, and a number $d>0$.
In $\YES$ instances, $\dist(\vec t,L) \leq d$, whereas
in $\NO$ instances, $\dist(\vec t,L) > \gamma(n) \cdot d$.
\end{definition}

\begin{definition}
An instance of $\GapCVP'_\gamma$ is given by an $n$-dimensional lattice $L$, a vector $\vec t$, and a number $d>0$.
In $\YES$ instances, $\dist(\vec t,L) \leq d$.
In $\NO$ instances, $\lambda_1(L) > \gamma(n) \cdot d$ and $\dist(\vec t,L) > \gamma(n) \cdot d$.
\end{definition}

\noindent
In~\cite{GMSS99} it is shown that for any $\gamma=\gamma(n) \ge 1$, there is a polynomial time reduction
from $\GapSVP_\gamma$ to $\GapCVP'_\gamma$ (see also Lemma 5.22 in \cite{MicciancioR04}). Hence, it suffices
to show a reduction from $\GapCVP'$ to $\DGS$. This reduction is given in the following lemma.
By using Lemma~\ref{lem:lambda1}, we obtain that under the assumptions of Theorem~\ref{thm:mainthmmm}
there exists an efficient quantum algorithm for $\GapCVP'_{O(n/\alpha)}$ (and hence also for
$\GapSVP_{O(n/\alpha)}$).

\begin{lemma}\label{lem:svptodgs}
For any $\gamma=\gamma(n) \ge 1$,
there is a polynomial time reduction from $\GapCVP'_{100\sqrt{n} \cdot \gamma(n)}$ to
$\DGS_{\sqrt{n} \gamma(n) / \lambda_1(L^*)}$.
\end{lemma}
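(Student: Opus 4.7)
My plan is to reduce $\GapCVP'_\gamma$ with $\gamma = 100\sqrt{n}\gamma(n)$ to $\DGS_{\sqrt{n}\gamma(n)/\lambda_1(L^*)}$ via a Fourier distinguishing test on samples drawn from the \emph{dual} discrete Gaussian. Given an instance $(L,\vec t,d)$, I set $r^* := 1/(100d)$ and call the $\DGS$ oracle on $L^*$ at parameter $r^*$ to obtain $N = \poly(n)$ samples $\vec v_1,\ldots,\vec v_N \sim D_{L^*,r^*}$. In a \NO{} instance the promise $\lambda_1(L)>\gamma d$ gives $\varphi(L^*) = \sqrt n\gamma(n)/\lambda_1(L) < 1/(100d) = r^*$, so the oracle call meets its promise. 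I would then output \YES{} iff $\hat z := \frac{1}{N}\sum_i \cos(2\pi\langle\vec v_i,\vec t\rangle) > \tfrac12$; by Hoeffding's inequality, $\hat z$ is within $1/\poly(n)$ of $\tilde z := \Re \hat D_{L^*,r^*}(\vec t)$ except with probability $2^{-\Omega(n)}$.

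To analyze $\tilde z$, Poisson summation (Lemma~\ref{lem:psf}) combined with~\eqref{eq:fourier_shift_2} and $\widehat{\rho_{r^*}} = (r^*)^n \rho_{1/r^*}$ gives
\[
  \hat D_{L^*,r^*}(\vec t) = \frac{\det(L)\,(r^*)^n\,\rho_{1/r^*}(L+\vec t)}{\rho_{r^*}(L^*)}.
\]
By Lemma~\ref{lem:lambda1} applied to $L^*$ we have $\eta_{2^{-n}}(L^*) \le \sqrt n/\lambda_1(L) < r^*$, so Claim~\ref{clm:gaussianweightshiftinvariant} gives $\rho_{r^*}(L^*) \in (r^*)^n\det(L)(1\pm 2^{-n})$ and hence $\tilde z \approx \rho_{1/r^*}(L+\vec t)$. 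In a \YES{} instance, the summand in $\rho_{1/r^*}(L+\vec t)$ corresponding to the negation of the closest point of $L$ to $\vec t$ contributes at least $e^{-\pi(r^*d)^2} = e^{-\pi/10^4} > \tfrac34$, so $\tilde z > \tfrac34$. In a \NO{} instance every $\vec u \in L$ satisfies $\|\vec u+\vec t\|>\gamma d$, so each summand is at most $e^{-\pi(r^*\gamma d)^2}=e^{-\pi n\gamma(n)^2}$; the pairwise distance $\lambda_1(L)>\gamma d$ bounds the number of lattice points in the ball of radius $R=\sqrt n/r^*=100\sqrt n\,d$ around $-\vec t$ by $(1+2R/\lambda_1(L))^n \le 3^n$, and summing over annuli of increasing radius yields an exponentially small tail. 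Altogether $\tilde z \le (3/e^\pi)^n + 2^{-\Omega(n)} = 2^{-\Omega(n)} < \tfrac12$.

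The main obstacle is a technicality in the \YES{} case: when $\lambda_1(L) \le \gamma d$, the call to $\DGS$ on $L^*$ at $r^* = 1/(100d)$ can violate the promise $r^*>\varphi(L^*)$, so the returned samples need not come from $D_{L^*,r^*}$ and $\hat z$ can be meaningless. Such instances are necessarily \YES{} (the \NO{} promise forces $\lambda_1(L)>\gamma d$), so correctness reduces to detecting this sub-case and outputting \YES{} without trusting the potentially invalid oracle output. I would handle this by a preliminary short-vector search on $L$ itself---for example, invoking Lemma~\ref{lem:givptodgs} on $L$ (and on its sublattices, to squeeze the guarantee from $\lambda_n$ down towards $\lambda_1$)---and outputting \YES{} immediately whenever a lattice vector of length $\le \gamma d$ is uncovered; otherwise we proceed to the Fourier test above, confident that $\lambda_1(L)>\gamma d$ and hence that the oracle call is valid. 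Making this certification tight enough to catch every \YES{} instance with small $\lambda_1(L)$ is the most delicate piece of the argument.
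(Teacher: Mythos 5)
Your Fourier test and its analysis on \NO{} instances (and on \YES{} instances with large $\lambda_1(L)$) are essentially correct and are close in spirit to what the machinery inside the Aharonov--Regev verifier does. But the ``technicality'' you flag in the \YES{} case is exactly the crux of the problem, and the fix you sketch does not work. A preliminary short-vector search via Lemma~\ref{lem:givptodgs} gives vectors of length only $\tilde O(\sqrt n)\lambda_n(L)$, and by the transference bound (Lemma~\ref{lem:transference}) this can exceed $\lambda_1(L)$ by a factor of up to roughly $n$; so this procedure cannot certify that $\lambda_1(L) > 100\sqrt n\,\gamma(n)\,d$ with the precision needed. In fact any use of your $\DGS$ or $\GIVP$ oracles to estimate $\lambda_1(L)$ reproduces, in miniature, the very $\GapSVP$ problem you are trying to reduce, and loses polynomial approximation factors along the way. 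The concrete failure mode is a \NO{} instance where $\lambda_1(L)$ lies just above $\gamma d$ but below the threshold at which your short-vector search triggers a default \YES: your reduction would then answer \YES{} and be wrong. There is also a secondary issue in the \YES{} case that you never address: you use $\rho_{r^*}(L^*)\approx (r^*)^n\det(L)$ via Claim~\ref{clm:gaussianweightshiftinvariant}, but that requires $r^*\ge\eta_\eps(L^*)$, which is only guaranteed when $\lambda_1(L)$ is large.

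The paper avoids all of this by invoking the \NP{} verifier $\calV$ for $\coGapCVP$ from \cite{AharonovR04} as a black box. Its crucial property is one-sidedness: on any instance with $\dist(\vec t,L)\le d$, $\calV$ \emph{rejects regardless of the witness vectors $\vec w_1,\ldots,\vec w_N$.} This is precisely what your raw Fourier estimate lacks---the AR04 verifier effectively first sanity-checks that the purported witness vectors behave like genuine Gaussian samples from $L^*$ before trusting the Fourier statistic, so on a \YES{} instance it can never be misled, even if the $\DGS$ oracle (whose promise may be violated) returns garbage. With that one-sided guarantee, the paper's reduction simply calls $\DGS$ on $L^*$ at $1/(100d)$, runs $\calV$, and inverts its answer; correctness on \YES{} instances is automatic, and correctness on \NO{} instances follows because there $\lambda_1(L)>\gamma d$ makes the $\DGS$ promise hold and $\calV$ accepts with overwhelming probability. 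So the right route is not to detect the promise violation but to use a test that is immune to it; if you want to avoid citing AR04, you would need to build that robustness (e.g., a witness-validity check) into your Fourier test yourself.
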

\begin{proof}
The main component in our reduction is the $\NP$ verifier for $\coGapCVP$ shown in \cite{AharonovR04}.
In more detail, \cite{AharonovR04} present an efficient algorithm, call it $\calV$,
whose input consists of an $n$-dimensional lattice $L$, a vector $\vec t$, a number $d>0$, and a sequence of vectors
$\vec w_1,\ldots, \vec w_N$ in $L^*$ for some $N=\poly(n)$. When $\dist(\vec t, L) \leq d$, the
algorithm is guaranteed to reject. When $\dist(\vec t, L) > 100 \sqrt{n} d$, and
$\vec w_1,\ldots, \vec w_N$ are chosen from the distribution $D_{L^*,1/(100d)}$, then the algorithm
accepts with probability exponentially close to $1$.

The input to the reduction is an $n$-dimensional lattice $L$, a vector $\vec t$, and a number $d>0$.
We call the $\DGS$ oracle $N$ times with the lattice $L^*$ and the value $\frac{1}{100 d}$ to obtain vectors
$\vec w_1,\ldots, \vec w_N \in L^*$. We then apply $\calV$ with $L$, $\vec t$, $d$, and the
vectors $\vec w_1,\ldots, \vec w_N$. We accept if and only if $\calV$ rejects.

To prove correctness, notice first that in the case of a $\YES$ instance,
$\dist(\vec t,L) \le d$, and hence $\calV$ must reject (irrespective of the $\vec w$'s).
In the case of a $\NO$ instance we have that
$\frac{1}{100 d} > \sqrt{n} \gamma(n) / \lambda_1(L)$, and hence
$\vec w_1,\ldots, \vec w_N$ are guaranteed to be valid samples from $D_{L^*, 1/(100d)}$.
Moreover, $\dist(\vec t,L) > 100\sqrt{n} \gamma(n) d \ge 100 \sqrt{n} d$, and hence $\calV$ accepts with probability
exponentially close to $1$.
\end{proof}

\section{Variants of the $\LWE$ problem}\label{sec:reductions}

In this section, we consider several variants of the $\LWE$ problem. Through a sequence of elementary reductions, we
prove that all problems are as hard as $\LWE$. The results of this section are summarized in Lemma
\ref{lem:summary_reductions}.

\begin{lemma}[Average-case to Worst-case]
Let $n,p \ge 1$ be some integers and $\chi$ be some distribution on $\Z_p$.
Assume that we have access to a distinguisher $W$ that distinguishes $A_{\vec{s},\chi}$ from $U$ for a non-negligible fraction of
all possible $\vec{s}$. Then there exists an efficient algorithm $W'$ that for {\em all} $\vec{s}$ accepts with probability
exponentially close to $1$ on inputs from $A_{\vec{s},\chi}$ and rejects with probability exponentially close to $1$ on
inputs from $U$.
\end{lemma}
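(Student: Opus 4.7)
The central tool is a simple randomization: given a pair $(\vec{a}, b) \in \Z_p^n \times \Z_p$, the map $(\vec{a}, b) \mapsto (\vec{a}, b + \langle \vec{a}, \vec{t} \rangle \bmod p)$ sends the distribution $A_{\vec{s},\chi}$ to $A_{\vec{s}+\vec{t},\chi}$ and leaves $U$ unchanged. So if we pick $\vec{t} \in \Z_p^n$ uniformly at random and apply this transformation to every sample we receive, then input $U$ remains $U$, while input $A_{\vec{s},\chi}$ becomes $A_{\vec{s}',\chi}$ for a uniformly random $\vec{s}' = \vec{s}+\vec{t}$. By assumption, $W$ has distinguishing advantage at least $\eps := 1/n^b$ against $U$ for at least a $\delta := 1/n^a$ fraction of $\vec{s}'$, so a single random shift gives us (with probability $\geq \delta$) an instance on which $W$ behaves detectably differently from its behavior on $U$.

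The algorithm $W'$ proceeds as follows. First, compute an estimate $\hat{p}_U$ of $p_U := \Pr[W \text{ accepts } U]$ by running $W$ on $M = \Theta(n/\eps^2)$ independent batches of freshly generated $U$-samples (we can sample $U$ ourselves) and averaging the accept/reject bits. Next, for $i = 1,\dots,N$ with $N = \Theta(n/\delta)$, pick an independent uniform $\vec{t}_i \in \Z_p^n$, and for each $i$ run $W$ on $M$ independent batches of transformed input samples using shift $\vec{t}_i$, obtaining an empirical acceptance rate $\hat{p}_i$. Finally, accept iff $|\hat{p}_i - \hat{p}_U| > \eps/2$ for some $i$.

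For the analysis, a Chernoff bound gives that each empirical rate is within $\eps/4$ of its true mean with probability $1 - 2^{-\Omega(n)}$; a union bound over the $N+1$ estimates keeps the total deviation event exponentially small. If the input is from $U$, every $\hat{p}_i$ and $\hat{p}_U$ estimate the same value $p_U$, so all differences are at most $\eps/2$ and $W'$ rejects. If the input is from $A_{\vec{s},\chi}$, then each $\vec{t}_i$ independently produces a uniformly random $\vec{s}' = \vec{s}+\vec{t}_i$, so the probability that none of the $N$ shifts hits a ``good'' $\vec{s}'$ is at most $(1-\delta)^N \leq e^{-\delta N} = 2^{-\Omega(n)}$; on a good shift, $|p_W(A_{\vec{s}',\chi}) - p_U| \geq \eps$ and our estimates are accurate to $\eps/4$, so $|\hat{p}_i - \hat{p}_U| \geq \eps - \eps/2 = \eps/2$ and $W'$ accepts.

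The only subtle point is that the distinguisher's advantage may point in either direction for different $\vec{s}$, so we cannot simply average over random shifts and compare to $p_U$ (the positive and negative biases could cancel in the average). The fix is exactly what the algorithm does: estimate the acceptance probability separately for each shift $\vec{t}_i$, then test the absolute deviation $|\hat{p}_i - \hat{p}_U|$ rather than a signed one. This is what forces us to amplify per-shift with $M$ runs of $W$, rather than relying on a single run per shift. Since $\delta, \eps \geq 1/\poly(n)$, all of $M$, $N$, and their product remain polynomial, and the total failure probability on either input side is $2^{-\Omega(n)}$, yielding the required exponentially close to $1$ correctness.
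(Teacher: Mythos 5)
Your proposal is correct and takes essentially the same approach as the paper: randomize the secret via the shift $(\vec{a},b)\mapsto(\vec{a},b+\ip{\vec a,\vec t})$, estimate $W$'s acceptance rate per shift against a fresh estimate of its rate on $U$ (so that biases of opposite sign for different $\vec s$ cannot cancel), and accept if any shift shows a noticeable gap; the paper uses the identical transformation and the same per-iteration estimate-and-compare structure with Chernoff and union bounds. The only cosmetic quibble is that your threshold $|\hat p_i-\hat p_U|>\eps/2$ sits exactly at the boundary your $\eps/4$-accuracy guarantee yields in the good case, so you should tighten the accuracy (say to $\eps/8$, as the paper does) to leave slack on both sides.
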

\begin{proof}
The proof is based on the following transformation. For any $\vec{t} \in \Z_p^n$ consider the function
$f_{\vec{t}}:\Z_p^n \times \Z_p \rightarrow \Z_p^n \times \Z_p$ defined by
$$f_{\vec{t}}(\vec{a},b)=(\vec{a}, b+ \ip{\vec{a},\vec{t}}).$$
It is easy to see that this function transforms the distribution $A_{\vec{s},\chi}$ into $A_{\vec{s}+\vec{t},\chi}$.
Moreover, it transforms the uniform distribution $U$ into itself.

Assume that for $n^{-c_1}$ of all possible $\vec{s}$, the acceptance probability of $W$ on inputs from
$A_{\vec{s},\chi}$ and on inputs from $U$ differ by at least $n^{-c_2}$. We construct $W'$ as follows. Let $R$ denote
the unknown input distribution. Repeat the following $n^{c_1+1}$ times. Choose a vector $\vec{t} \in \Z_p^n$ uniformly
at random. Then estimate the acceptance probability of $W$ on $U$ and on $f_{\vec{t}}(R)$ by calling $W$ $O(n^{2c_2+1})$
times on each of the input distributions. By the Chernoff bound, this allows us to obtain an estimate that with probability exponentially
close to $1$ is within $\pm n^{-c_2}/8$ of the true acceptance probabilities. If the two estimates differ by more than
$n^{-c_2}/2$ then we stop and decide to accept. Otherwise we continue. If the procedure ends without accepting, we
reject.

We now prove that $W'$ distinguishes $A_{\vec{s},\chi}$ from $U$ for all $\vec{s}$. First, we claim that when $R$ is
$U$, the acceptance probability of $W'$ is exponentially close to $0$. Indeed, in this case, $f_{\vec{t}}(U)=U$ and
therefore the two estimates that $W'$ performs are of the same distribution. The probability that the estimates differ
by more than $n^{-c_2}/2 > 2 \cdot n^{-c_2}/8$ is exponentially small. Next, consider the case that $R$ is
$A_{\vec{s},\chi}$ for some $\vec{s}$. In each of the $n^{c_1+1}$ iterations, we are considering the distribution
$f_{\vec{t}}(A_{\vec{s},\chi})=A_{\vec{s}+\vec{t},\chi}$ for some uniformly chosen $\vec{t}$. Notice that the
distribution of $\vec{s}+\vec{t}$ is uniform on $\Z_p^n$. Hence, with probability exponentially close to $1$, in one of
the $n^{c_1+1}$ iterations, $\vec{t}$ is such that the acceptance probability of $W$ on inputs from
$A_{\vec{s}+\vec{t},\chi}$ and on inputs from $U$ differ by at least $n^{-c_2}$. Since our estimates are within $\pm
n^{-c_2}/8$, $W'$ accepts with probability exponentially to $1$.
\end{proof}

\begin{lemma}[Decision to Search]\label{lem:decision_search}
Let $n \ge 1$ be some integer, $2 \le p \le \poly(n)$ be a prime, and $\chi$ be some distribution on $\Z_p$. Assume that we have
access to procedure $W$ that for all $\vec{s}$ accepts with probability exponentially close to $1$ on inputs from
$A_{\vec{s},\chi}$ and rejects with probability exponentially close to $1$ on inputs from $U$. Then, there exists an
efficient algorithm $W'$ that, given samples from $A_{\vec{s},\chi}$ for some $\vec{s}$, outputs $\vec{s}$ with probability
exponentially close to $1$.
\end{lemma}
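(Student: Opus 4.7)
The plan is to recover $\vec{s}$ one coordinate at a time, using the primality of $p$ to ensure that a wrong guess for a coordinate turns the input distribution into the uniform distribution, while a correct guess preserves it. Then $W$ can be invoked to distinguish the two cases.

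Fix a coordinate $i \in \{1,\ldots,n\}$ and a candidate value $k \in \Z_p$. I would construct an oracle $O_{i,k}$ that on demand produces samples as follows: draw a fresh sample $(\vec{a}, b)$ from $A_{\vec{s},\chi}$, pick $r \in \Z_p$ uniformly at random, and output
$$ (\vec{a}', b') \;=\; (\vec{a} + r\vec{e}_i,\; b + rk), $$
where $\vec{e}_i$ is the $i$-th standard basis vector. Since $r$ is uniform and independent, $\vec{a}'$ is uniform on $\Z_p^n$. Writing $b = \ip{\vec{a},\vec{s}} + e$ with $e \sim \chi$, we have
$$ b' \;=\; \ip{\vec{a}',\vec{s}} + e + r(k - s_i) \pmod{p}. $$
If $k = s_i$, then $b' = \ip{\vec{a}',\vec{s}} + e$, so $O_{i,k}$ outputs the distribution $A_{\vec{s},\chi}$. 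If $k \neq s_i$, then $k - s_i$ is a nonzero element of $\Z_p$, and because $p$ is prime, multiplication by $k - s_i$ is a bijection on $\Z_p$; hence $r(k - s_i)$ is uniform on $\Z_p$ and independent of everything else, so $b'$ is uniform on $\Z_p$ independent of $\vec{a}'$. Thus $O_{i,k}$ outputs $U$.

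The algorithm $W'$ is now immediate: for each $i \in \{1,\ldots,n\}$ and each $k \in \Z_p$, run $W$ on samples produced by $O_{i,k}$; by hypothesis $W$ accepts (with probability exponentially close to $1$) iff $k = s_i$, so the unique $k$ on which $W$ accepts is declared to be $s_i$. Since $p \le \poly(n)$, this requires only $np \le \poly(n)$ invocations of $W$, each of which uses at most polynomially many transformed samples. A standard union bound over the $np$ invocations shows that the overall error probability remains exponentially small.

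The only real subtlety is the use of primality of $p$ to guarantee that $r(k - s_i)$ is uniform when $k \neq s_i$; without this, the reduction would output samples from some non-uniform distribution (supported on a proper subgroup of $\Z_p$), and we would not be able to apply the distinguishing hypothesis. This explains why primality is listed as a hypothesis of the lemma, and it is the only step in the argument where primality is invoked.
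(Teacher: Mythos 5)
Your proof is correct and is essentially identical to the paper's: the paper also finds $\vec s$ coordinate by coordinate via the transformation $(\vec a, b) \mapsto (\vec a + l\vec e_i, b + lk)$ for a uniformly random $l$, uses primality of $p$ in exactly the same way to argue the transformed distribution is uniform when $k \neq s_i$, and enumerates all $p \le \poly(n)$ candidates per coordinate. There is no substantive difference.
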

\begin{proof}
Let us show how $W'$ finds $s_1 \in \Z_p$, the first coordinate of $\vec{s}$. Finding the other coordinates is similar.
For any $k \in \Z_p$, consider the following transformation. Given a pair $(\vec{a},b)$ we output the pair
$(\vec{a}+(l,0,\ldots,0), b+ l\cdot k)$ where $l \in \Z_p$ is chosen uniformly at random. It is easy to see that this
transformation takes the uniform distribution into itself. Moreover, if $k=s_1$ then this transformation also takes
$A_{\vec{s},\chi}$ to itself. Finally, if $k\neq s_1$ then it takes $A_{\vec{s},\chi}$ to the uniform distribution
(note that this requires $p$ to be prime). Hence, using $W$, we can test whether $k=s_1$. Since there are only $p <
\poly(n)$ possibilities for $s_1$ we can try all of them.
\end{proof}

\begin{lemma}[Discrete to Continuous]\label{lem:learning_discretize}
Let $n,p \ge 1$ be some integers, let $\phi$ be some probability density function on $\T$,
and let $\bar{\phi}$ be its discretization to $\Z_p$. Assume
that we have access to an algorithm $W$ that solves $\LWE_{p,\bar{\phi}}$. Then, there exists an efficient algorithm $W'$ that solves
$\LWE_{p,\phi}$.
\end{lemma}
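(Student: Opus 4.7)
The plan is to reduce $\LWE_{p,\phi}$ to $\LWE_{p,\bar{\phi}}$ by a simple sample-by-sample transformation: given a sample $(\vec a, y) \in \Z_p^n \times \T$ drawn from $A_{\vec s, \phi}$, the algorithm $W'$ will output $(\vec a, \round{p y} ~\mathrm{mod}~ p) \in \Z_p^n \times \Z_p$. I will argue that this deterministic map sends $A_{\vec s, \phi}$ \emph{exactly} to $A_{\vec s, \bar{\phi}}$, so $W'$ can run $W$ on the transformed samples and return whatever $W$ outputs.

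First I would verify the marginal on $\vec a$: it is unchanged, so it remains uniform on $\Z_p^n$ as required. Next, conditioning on any fixed $\vec a$, a sample from $A_{\vec s,\phi}$ has $y = \ip{\vec a,\vec s}/p + e \bmod 1$ with $e \sim \phi$. Multiplying by $p$ gives $p y \equiv \ip{\vec a,\vec s} + p e \pmod{p}$, and since $\ip{\vec a,\vec s}$ is an integer the rounding commutes with the integer shift, so
\begin{equation*}
\round{p y} \equiv \ip{\vec a,\vec s} + \round{p e} \pmod{p}.
\end{equation*}
Thus the second coordinate of the transformed sample equals $\ip{\vec a,\vec s} + e'$ in $\Z_p$, where $e' := \round{p e} \bmod p$ depends only on $e \sim \phi$.

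It then remains to check that $e'$ is distributed according to $\bar{\phi}$. By the definition of rounding, $\round{p e} \equiv i \pmod p$ precisely when $e \bmod 1 \in [(i-1/2)/p,(i+1/2)/p)$, and this event has probability $\int_{(i-1/2)/p}^{(i+1/2)/p} \phi(x)\,dx = \bar{\phi}(i)$ by the definition of the discretization in Eq.~\eqref{def:discretiz}. Hence the transformed pair is exactly distributed as $A_{\vec s,\bar\phi}$, and independence across samples is preserved because the transformation acts on each sample separately.

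There is essentially no obstacle here; the only mild point to mention is the measure-zero ambiguity at half-integer values of $p y$ (where two rounded values are equally close), which is resolved consistently by the tie-breaking rule fixed in the preliminaries ($\round{\cdot}$ picks the smaller integer). Since this case has probability zero for any absolutely continuous $\phi$, it does not affect the distribution. Therefore $W'$ simulates samples from $A_{\vec s,\bar\phi}$ perfectly, runs $W$, and succeeds with probability exponentially close to $1$ by the assumption on $W$, which completes the reduction.
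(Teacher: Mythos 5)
Your proposal is correct and takes exactly the same approach as the paper: transform each sample $(\vec a, y)$ from $A_{\vec s,\phi}$ to $(\vec a, \round{py} \bmod p)$ and feed the result to $W$. The paper's proof is a one-liner stating that $W'$ "discretizes the second element"; you have simply written out the verification that this discretization commutes with the integer shift $\ip{\vec a,\vec s}$ and that the resulting error term is distributed as $\bar\phi$, which is the content implicitly relied upon.
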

\begin{proof}
Algorithm $W'$ simply takes samples from $A_{\vec{s},\phi}$ and discretizes the second element to obtain samples from
$A_{\vec{s},\bar{\phi}}$. It then applies $W$ with these samples in order to find $\vec{s}$.
\end{proof}

By combining the three lemmas above, we obtain

\begin{lemma}\label{lem:summary_reductions}
Let $n \ge 1$ be an integer and $2 \le p \le \poly(n)$ be a prime. Let $\phi$ be some probability density function on $\T$ and let
$\bar{\phi}$ be its discretization to $\Z_p$. Assume that we have access to a distinguisher that distinguishes
$A_{\vec{s},\bar{\phi}}$ from $U$ for a non-negligible fraction of all possible $\vec{s}$. Then, there exists an
efficient algorithm that solves $\LWE_{p,\phi}$.
\end{lemma}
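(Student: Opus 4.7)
The plan is simply to chain the three preceding lemmas in the order in which they were proved, observing that the hypotheses of each lemma match the conclusions of the previous one. Concretely, let $W_0$ be the given distinguisher that succeeds on a non-negligible fraction of all $\vec s \in \Z_p^n$ in telling $A_{\vec s,\bar\phi}$ apart from $U$.

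First, I would feed $W_0$ into the average-case to worst-case reduction (the first unnumbered lemma of Section~\ref{sec:reductions}). This lemma applies directly to the distribution $\chi = \bar\phi$ on $\Z_p$, so we obtain an efficient algorithm $W_1$ that accepts inputs from $A_{\vec s,\bar\phi}$ with probability exponentially close to $1$ and rejects inputs from $U$ with probability exponentially close to $1$, \emph{for every} $\vec s$. Next, since $p$ is assumed prime and $p \le \poly(n)$, Lemma~\ref{lem:decision_search} applies with $\chi = \bar\phi$, and from $W_1$ we obtain an efficient search algorithm $W_2$ that on input samples from $A_{\vec s,\bar\phi}$ recovers $\vec s$ with probability exponentially close to $1$. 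In other words, $W_2$ solves $\LWE_{p,\bar\phi}$.

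Finally, I would invoke Lemma~\ref{lem:learning_discretize} with $\phi$ and its discretization $\bar\phi$: feeding $W_2$ into that reduction yields an efficient algorithm $W_3$ that solves $\LWE_{p,\phi}$, which is what we wanted. Each step is a single application of a previously established lemma, and the hypotheses line up because the intermediate distribution at each stage is exactly $\bar\phi$ (or $U$) on $\Z_p$.

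There is no real obstacle here; the work was done in the three separate lemmas. The only thing worth double-checking is that the primality hypothesis on $p$ and the bound $p \le \poly(n)$ (needed only by Lemma~\ref{lem:decision_search}) are part of the hypothesis of the combined statement, which they are. Thus the proof is essentially a one-line composition $W_0 \mapsto W_1 \mapsto W_2 \mapsto W_3$.
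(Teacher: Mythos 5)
Your proof is correct and is exactly what the paper does: the paper gives no separate argument for this lemma beyond the phrase ``by combining the three lemmas above,'' and your chain $W_0 \mapsto W_1 \mapsto W_2 \mapsto W_3$ (average-case to worst-case, then decision to search using primality and $p\le\poly(n)$, then discrete to continuous) is precisely that composition with the hypotheses lining up at each step.
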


\section{Public Key Cryptosystem}\label{sec:cryptosystem}

We let $n$ be the security parameter of the cryptosystem. Our cryptosystem is parameterized by two integers $m,p$ and a
probability distribution $\chi$ on $\Z_p$. A setting of these parameters that guarantees both security and correctness
is the following. Choose $p \ge 2$ to be some prime number between $n^2$ and $2n^2$ and let $m=(1+\eps)(n+1)\log p$ for some
arbitrary constant  $\eps>0$. The probability
distribution $\chi$ is taken to be $\bar{\Psi}_{\alpha(n)}$ for $\alpha(n) = o(1/(\sqrt{n} \log n))$, i.e., $\alpha(n)$
is such that $\lim_{n \to \infty} \alpha(n) \cdot \sqrt{n} \log n = 0$. For example, we can choose $\alpha(n) = 1/(\sqrt{n}
\log^2 n)$. In the following description, all additions are performed in $\Z_p$, i.e., modulo $p$.

 \begin{itemize}
 \item
{\bf Private key:} Choose $\vec{s} \in \Z_p^n$ uniformly at random. The private key is $\vec{s}$.
 \item
{\bf Public Key:} For $i=1,\ldots,m$, choose $m$ vectors $\vec{a}_1,\ldots,\vec{a}_m \in \Z_p^n$ independently from the
uniform distribution. Also choose elements $e_1,\ldots,e_m \in \Z_p$ independently according to $\chi$. The public key
is given by $(\vec{a}_i,b_i)_{i=1}^m$ where $b_i=\ip{\vec{a}_i,\vec{s}}+e_i$.
 \item
{\bf Encryption:} In order to encrypt a bit we choose a random set $S$ uniformly among all $2^m$ subsets of $[m]$. The encryption is $(\sum_{i\in S}
\vec{a}_i, \sum_{i\in S} b_i)$ if the bit is 0 and $(\sum_{i\in S} \vec{a}_i, \floor{\frac{p}{2}} + \sum_{i\in S} b_i)$
if the bit is 1.
 \item
{\bf Decryption:} The decryption of a pair $(\vec{a},b)$ is $0$ if $b-\ip{\vec{a},\vec{s}}$ is closer to $0$ than to
$\floor{\frac{p}{2}}$ modulo $p$. Otherwise, the decryption is $1$.
\end{itemize}

Notice that with our choice of parameters, the public key size is $O(mn \log p)=\tilde{O}(n^2)$ and the encryption
process increases the size of a message by a factor of $O(n \log p)=\tilde{O}(n)$. In fact, it is possible to reduce
the size of the public key to $O(m \log p) = \tilde{O}(n)$ by the following idea of Ajtai \cite{AjtaiHardLattices}.
Assume all users of the cryptosystem share some fixed (and trusted) random choice of $\vec{a}_1,\ldots,\vec{a}_m$. This can be
achieved by, say, distributing these vectors as part of the encryption and decryption software. Then the public key
need only consist of $b_1,\ldots,b_m$. This modification does not affect the security of the cryptosystem.

We next prove that under a certain condition on $\chi$, $m$, and $p$, the probability of decryption error is small. We
later show that our choice of parameters satisfies this condition. For the following two lemmas we need to introduce
some additional notation. For a distribution $\chi$ on $\Z_p$ and an integer $k \ge 0$, we define $\chi^{\star k}$ as
the distribution obtained by summing together $k$ independent samples from $\chi$, where addition is performed in
$\Z_p$ (for $k=0$ we define $\chi^{\star 0}$ as the distribution that is constantly $0$). For a probability
distribution $\phi$ on $\T$ we define $\phi^{\star k}$ similarly. For an element $a \in \Z_p$ we define $|a|$ as the
integer $a$ if $a \in \{0,1,\ldots, \floor{\frac{p}{2}} \}$ and as the integer $p-a$ otherwise. In other words, $|a|$
represents the distance of $a$ from $0$. Similarly, for $x \in \T$, we define $|x|$ as $x$ for $x\in [0,\frac{1}{2}]$
and as $1-x$ otherwise.

\begin{lemma}[Correctness]
Let $\delta > 0$. Assume that for any $k \in \{0,1,\ldots,m\}$, $\chi^{\star k}$ satisfies that
$$\Pr_{e\sim \chi^{\star k}}\Big[|e| < \Big\lfloor \frac{p}{2} \Big\rfloor/2 \Big] > 1 - \delta.$$
Then, the probability of decryption error is at most $\delta$. That is, for any bit $c \in \{0,1\}$,
if we use the protocol above to choose private and public keys, encrypt $c$, and then decrypt the result,
then the outcome is $c$ with probability at least $1-\delta$.
\end{lemma}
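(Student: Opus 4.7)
The plan is to reduce the correctness analysis to a direct calculation of $b - \langle \vec a, \vec s\rangle$ and then invoke the hypothesis on $\chi^{\star k}$.

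First I would unroll the definitions. Given keys $(\vec s, (\vec a_i, b_i)_{i=1}^m)$ with $b_i = \langle \vec a_i,\vec s\rangle + e_i$, and ciphertext $(\vec a, b) = (\sum_{i\in S}\vec a_i,\; c\lfloor p/2\rfloor + \sum_{i\in S}b_i)$ for a bit $c$, I would compute
\begin{align*}
 b - \langle \vec a,\vec s\rangle
   &= c\lfloor p/2\rfloor + \sum_{i\in S}\bigl(b_i - \langle \vec a_i,\vec s\rangle\bigr)
    = c\lfloor p/2\rfloor + \sum_{i\in S}e_i \pmod{p}.
\end{align*}
So the decryption outcome depends only on the magnitude of $E_S := \sum_{i\in S}e_i \bmod p$.

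Next I would observe that, conditioned on $|S|=k$, the random variable $E_S$ has distribution exactly $\chi^{\star k}$ (since the $e_i$ are i.i.d.\ from $\chi$ and $S$ is chosen independently of them). By the hypothesis, for every $k\in\{0,1,\dots,m\}$ we have $\Pr[|E_S|<\lfloor p/2\rfloor/2 \mid |S|=k] > 1-\delta$. Averaging over the (binomial) distribution of $|S|$ via the law of total probability then yields $\Pr[|E_S|<\lfloor p/2\rfloor/2] > 1-\delta$ unconditionally.

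Finally I would check that whenever $|E_S|<\lfloor p/2\rfloor/2$, decryption returns $c$: if $c=0$ then $b-\langle\vec a,\vec s\rangle = E_S$ lies within $\lfloor p/2\rfloor/2$ of $0$, hence is closer to $0$ than to $\lfloor p/2\rfloor$; if $c=1$ then $b-\langle\vec a,\vec s\rangle = \lfloor p/2\rfloor + E_S$ lies within $\lfloor p/2\rfloor/2$ of $\lfloor p/2\rfloor$, hence is closer to $\lfloor p/2\rfloor$ than to $0$. Combining with the previous step gives decryption error at most $\delta$, as claimed. There is no real obstacle here; the only mild subtlety is being careful that $S$ is chosen independently of the $e_i$, so that conditioning on $|S|=k$ preserves the distribution of $\sum_{i\in S}e_i$ as $\chi^{\star k}$.
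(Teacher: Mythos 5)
Your proof is correct and follows essentially the same route as the paper's: compute $b - \langle\vec a,\vec s\rangle = c\lfloor p/2\rfloor + \sum_{i\in S} e_i$, note that conditioned on $|S|=k$ the error term has distribution $\chi^{\star k}$, invoke the hypothesis for each $k$, and observe that $|\sum_{i\in S} e_i| < \lfloor p/2\rfloor/2$ forces correct rounding. The only difference is that you spell out the conditioning on $|S|=k$ and the law of total probability, which the paper leaves implicit.
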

\begin{proof}
Consider first an encryption of $0$. It is given by $(\vec{a},b)$ for $\vec{a}=\sum_{i\in S} \vec{a}_i$ and
$$b= \sum_{i\in S} b_i = \sum_{i\in S} \ip{\vec{a}_i,\vec{s}}+e_i = \ip{\vec{a},\vec{s}} + \sum_{i\in S} e_i.$$
Hence, $b -\ip{\vec{a},\vec{s}}$ is exactly $\sum_{i\in S} e_i$. The distribution of the latter is $\chi^{\star |S|}$.
According to our assumption, $|\sum_{i\in S} e_i|$ is less than $\lfloor \frac{p}{2} \rfloor/2$ with probability at
least $1-\delta$. In this case, it is closer to $0$ than to $\lfloor \frac{p}{2} \rfloor$ and therefore the decryption
is correct. The proof for an encryption of $1$ is similar.
\end{proof}

\begin{claim}
For our choice of parameters it holds that for any $k\in \{0,1,\ldots, m\}$,
$$ \Pr_{e \sim \bar{\Psi}_\alpha^{\star k}} \Big[|e| < \Big\lfloor \frac{p}{2} \Big\rfloor/2 \Big] > 1 - \delta(n)$$
for some negligible function $\delta(n)$.
\end{claim}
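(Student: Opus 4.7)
The plan is to first reduce the claim about $\bar{\Psi}_\alpha^{\star k}$ to a tail bound for the continuous periodic Gaussian $\Psi_{\alpha\sqrt{k}}$. I would start by observing that a sample from $\bar{\Psi}_\alpha$ can be generated by drawing $y \sim \Psi_\alpha$ and outputting $\lfloor p y \rceil \bmod p$. Hence, if $y_1,\dots,y_k$ are i.i.d.\ from $\Psi_\alpha$, then $\sum_{i=1}^k \lfloor p y_i \rceil \bmod p$ has exactly the distribution $\bar{\Psi}_\alpha^{\star k}$, and this sum differs from $p \cdot \bigl(\sum_i y_i \bmod 1\bigr) \bmod p$ by an integer rounding error of absolute value at most $k/2 \le m/2$.

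Next, since adding two independent normal variables with variances $\sigma_1^2,\sigma_2^2$ produces one of variance $\sigma_1^2+\sigma_2^2$, and this property survives reduction modulo $1$, the random variable $\sum_{i=1}^k y_i \bmod 1$ is distributed exactly as $\Psi_{\alpha\sqrt{k}}$. Therefore it suffices to establish that with probability $1-\delta(n)$,
\[
 p \cdot \bigl|\Psi_{\alpha\sqrt{k}}\text{-sample}\bigr| \;<\; \lfloor p/2\rfloor/2 - m/2,
\]
where $|\cdot|$ denotes the natural distance to $0$ on $\mathbb T$. Since $p \ge n^2$ and $m = \tilde O(n)$, the right hand side is at least $p/5$ for large $n$, so it is enough to show that $\Pr_{y\sim \Psi_{\alpha\sqrt{k}}}[|y| \ge 1/5]$ is negligible.

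Now I would use the explicit definition of $\Psi_\beta$ with $\beta=\alpha\sqrt{k}$: since $\beta = o(1/(\sqrt{n}\log n))\cdot O(\sqrt{n\log n}) = o(1/\sqrt{\log n})$, the periodization in the definition of $\Psi_\beta$ is dominated by the $k=0$ term up to an exponentially small error, and the tail probability satisfies
\[
 \Pr_{y\sim \Psi_\beta}[|y|\ge 1/5] \;\le\; 2\int_{1/5}^{\infty} \tfrac{1}{\beta}\expo{-\pi (x/\beta)^2}\,dx + 2^{-\Omega(n)}
  \;\le\; \expo{-\Omega(1/\beta^2)} + 2^{-\Omega(n)}.
\]
With $\beta^2 \le \alpha^2 m = o(1/(n\log^2 n))\cdot O(n\log n) = o(1/\log n)$, the exponent $1/\beta^2$ grows faster than any power of $\log n$, so the right hand side is negligible in $n$. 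Taking a union bound over the (at most polynomial) cumulative rounding error gives the claim. The only delicate point is carrying the $k/2$ rounding slack through the inequality, but this is swamped by the gap $p/2 - \Theta(1) \cdot p$ coming from the choice $p \ge n^2$, so no real obstacle arises.
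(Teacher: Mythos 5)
Your proof is correct and follows essentially the same route as the paper's: express a sample of $\bar{\Psi}_\alpha^{\star k}$ as $\sum_i \lfloor p x_i\rceil \bmod p$ for $x_i \sim \Psi_\alpha$, bound the cumulative rounding error by $O(m) \ll p$, observe that $\sum_i x_i \bmod 1 \sim \Psi_{\alpha\sqrt{k}}$, and then conclude from $\alpha\sqrt{k} = o(1/\sqrt{\log n})$ that the tail probability is negligible. The paper uses the slightly looser rounding bound $k$ (versus your $k/2$) and the threshold $1/16$ rather than $1/5$, but these are immaterial constants.
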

\begin{proof}
A sample from $\bar{\Psi}_\alpha^{\star k}$ can be obtained by sampling $x_1,\ldots,x_k$ from $\Psi_\alpha$ and
outputting $\sum_{i=1}^k \round{p x_i} ~\mod ~ p$. Notice that this value is at most $k \le m < p/32$ away from
$\sum_{i=1}^k p x_i ~\mod~ p$. Hence, it is enough to show that $|\sum_{i=1}^k p x_i ~\mod~ p| < p/16$ with high
probability. This condition is equivalent to the condition that $|\sum_{i=1}^k x_i ~\mod~ 1| < 1/16$. Since $\sum_{i=1}^k x_i ~\mod~ 1$ is distributed as $\Psi_{\sqrt{k} \cdot \alpha}$, and $\sqrt{k} \cdot \alpha =
o(1/\sqrt{\log n})$, the probability that $|\sum_{i=1}^k x_i ~\mod~ 1| < 1/16$ is $1-\delta(n)$ for some negligible function
$\delta(n)$.
\end{proof}

In order to prove the security of the system, we need the following
special case of the leftover hash lemma that appears in \cite{ImpagliazzoZ89}.
We include a proof for completeness.
\begin{claim}\label{subset_sum_uniform}
Let $G$ be some finite Abelian group and let $l$ be some integer. For any $l$ elements
$g_1,\ldots,g_l \in G$ consider the statistical distance between the uniform distribution on
$G$ and the distribution given by the sum of a random subset of $g_1,\ldots,g_l$.
Then the expectation of this statistical distance over a uniform choice of $g_1,\ldots,g_l \in G$
is at most $\sqrt{|G|/2^l}$. In particular, the probability that this statistical distance
is more than $\sqrt[4]{|G|/2^l}$ is at most $\sqrt[4]{|G|/2^l}$.
\end{claim}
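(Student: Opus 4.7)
The plan is to control the statistical distance via the collision probability, exchange the order of expectations, and conclude with Jensen's inequality for the expectation bound and Markov's inequality for the tail bound.

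First I would fix $g_1,\ldots,g_l \in G$ and let $p_x := \Pr_S[\sum_{i\in S} g_i = x]$ for $x \in G$, so the statistical distance in question (using this paper's normalization, ranging in $[0,2]$) is $\Delta = \sum_x |p_x - 1/|G||$. By Cauchy--Schwarz,
\[
\Delta \;\le\; \sqrt{|G|}\cdot \sqrt{\sum_x \bigl(p_x - 1/|G|\bigr)^2} \;=\; \sqrt{|G|\bigl(C - 1/|G|\bigr)},
\]
where $C := \sum_x p_x^2$ is the collision probability, i.e.\ $C = \Pr_{S,S'}[\sum_{i \in S} g_i = \sum_{i \in S'} g_i]$ for independent random subsets $S, S'$ of $[l]$.

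Next I would take expectations over $g_1,\ldots,g_l$ uniform in $G$ and swap the order of the two probabilities. For any fixed pair $(S,S')$, if $S = S'$ the equation $\sum_{i\in S}g_i = \sum_{i\in S'}g_i$ holds with probability $1$, while if $S \ne S'$, pick any $j$ in the symmetric difference; conditioning on the other $g_i$'s, the equation fixes $g_j$ to a single value, so it holds with probability $1/|G|$. Hence
\[
\mathbb{E}_{g}[C] \;=\; \Pr_{S=S'} + \Pr_{S\ne S'}\cdot \tfrac{1}{|G|} \;\le\; 2^{-l} + \tfrac{1}{|G|},
\]
so $\mathbb{E}_g\bigl[|G|\,C - 1\bigr] \le |G|\cdot 2^{-l}$.

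Now I would combine these: by Jensen's inequality applied to the concave function $\sqrt{\cdot}$,
\[
\mathbb{E}_g[\Delta] \;\le\; \mathbb{E}_g\Bigl[\sqrt{|G|\,C - 1}\Bigr] \;\le\; \sqrt{\mathbb{E}_g[|G|\,C - 1]} \;\le\; \sqrt{|G|/2^l},
\]
which is the claimed bound. The ``in particular'' part then follows immediately from Markov's inequality: with $t := \sqrt[4]{|G|/2^l}$, the probability that $\Delta > t$ is at most $\mathbb{E}[\Delta]/t \le \sqrt{|G|/2^l}/t = \sqrt[4]{|G|/2^l}$.

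There is no real obstacle here; the only delicate point is ensuring that the computation of $\mathbb{E}_g[C]$ works in a general finite Abelian group (no reliance on being able to ``divide by 2'' or on characters), which is handled by the symmetric-difference argument above that only uses the fact that fixing all $g_i$ except one and requiring a given sum pins down the last $g_j$ uniquely.
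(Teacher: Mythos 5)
Your proof is correct and follows essentially the same route as the paper: Cauchy--Schwarz to pass from the $\ell_1$ distance to the $\ell_2$ distance (collision probability), bound the expected collision probability over a random choice of $g_1,\ldots,g_l$, and finish with Jensen's inequality, with Markov giving the tail bound. Your spelled-out symmetric-difference argument for why $\Pr_g[\sum_{i\in S}g_i=\sum_{i\in S'}g_i]=1/|G|$ when $S\neq S'$ is just a slight elaboration of a step the paper leaves implicit.
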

\begin{proof}
For a choice $\vec{g}=(g_1,\ldots,g_l)$ of $l$ elements from $G$, let $P_{\vec{g}}$ be the
distribution of the sum of a random subsets of $g_1,\ldots,g_l$, i.e.,
$$ P_{\vec{g}}(h) = \frac{1}{2^l} \left| \set{ \vec{b} \in \{0,1\}^l}{\textstyle{\sum_i} b_i g_i = h} \right|.$$
In order to show that this distribution is close to uniform, we compute its $\ell_2$ norm, and
note that it is very close to $1/|G|$. From this it will follow that the distribution must
be close to the uniform distribution.
The $\ell_2$ norm of $P_{\vec{g}}$ is given by
\begin{align*}
\sum_{h\in G} P_{\vec{g}}(h)^2 &= \Pr_{\vec{b},\vec{b}'} \left[ \sum b_i g_i = \sum b_i' g_i \right] \\
 &\le \frac{1}{2^l} + \Pr_{\vec{b},\vec{b}'}\left[ \left. \sum b_i g_i = \sum b_i' g_i \right| \vec{b} \neq \vec{b}'\right].
\end{align*}
Taking expectation over $\vec{g}$, and using the fact that for any $\vec{b} \neq \vec{b}'$,
$\Pr_{\vec{g}}[ \sum b_i g_i = \sum b_i' g_i] = 1/|G|$, we obtain that
\begin{align*}
\Exp_{\vec{g}}\left[\textstyle{\sum_h} P_{\vec{g}}(h)^2 \right] \le
   \frac{1}{2^l} + \frac{1}{|G|}.
\end{align*}
Finally, the expected distance from the uniform distribution is
\begin{align*}
\Exp_{\vec{g}}\left[\textstyle{\sum_h} \left| P_{\vec{g}}(h) - 1/|G| \right|\right] &\le
   \Exp_{\vec{g}}\left[|G|^{1/2} \left( \textstyle{\sum_h} (P_{\vec{g}}(h)-1/|G|)^2 \right)^{1/2}\right] \\
   &= \sqrt{|G|} \Exp_{\vec{g}}\left[ \left( \textstyle{\sum_h} P_{\vec{g}}(h)^2- 1/|G| \right)^{1/2} \right] \\
   &\le \sqrt{|G|} \left( \Exp_{\vec{g}}\left[\textstyle{\sum_h} P_{\vec{g}}(h)^2 \right] - 1/|G| \right)^{1/2} \\
  &\le  \sqrt{\frac{|G|}{2^l}}.
\end{align*}
\end{proof}

We now prove that our cryptosystem is semantically secure, i.e., that it is hard
to distinguish between encryptions of $0$ and encryptions of $1$. More precisely,
we show that if such a distinguisher exists, then there exists a distinguisher
that distinguishes between $A_{\vec{s},\chi}$ and $U$ for a non-negligible
fraction of all $\vec{s}$. If $\chi = \bar{\Psi}_{\alpha}$ and $p \le \poly(n)$
is a prime, then by Lemma~\ref{lem:summary_reductions}, this also implies an efficient (classical) algorithm
that solves $\LWE_{p,\Psi_\alpha}$. This in turn implies, by Theorem~\ref{thm:mainthmmm},
an efficient quantum algorithm for $\DGS_{\sqrt{2n} \cdot \eta_\eps(L) /\alpha}$.
Finally, by Lemma~\ref{lem:givptodgs} we also obtain an efficient quantum algorithm
for $\SIVP_{\tilde{O}(n / \alpha)}$ and by Lemma~\ref{lem:svptodgs} we obtain an
efficient quantum algorithm for $\GapSVP_{O(n/\alpha)}$.

\begin{lemma}[Security]
For any $\eps>0$ and $m \ge (1+\eps)(n+1)\log p$, if there exists a polynomial time algorithm $W$ that distinguishes between encryptions of $0$ and $1$
then there exists a distinguisher $Z$ that distinguishes between $A_{\vec{s},\chi}$ and $U$ for a non-negligible
fraction of all possible $\vec{s}$.
\end{lemma}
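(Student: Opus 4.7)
The plan is a standard hybrid / LHL argument. Given the $m$ input pairs $(\vec a_i, b_i)_{i=1}^m$, the distinguisher $Z$ uses them directly as a public key for the cryptosystem and invokes $W$ on (many) encryptions of $0$ and (many) encryptions of $1$ produced from this public key. Let $p_0$ and $p_1$ be the acceptance probabilities of $W$ on these two types of ciphertexts. $Z$ estimates $p_0$ and $p_1$ to within $\pm 1/\poly(n)$ by a Chernoff bound, and accepts iff $|p_0 - p_1|$ exceeds some non-negligible threshold.

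The analysis splits into two cases. \emph{Real case:} if the inputs come from $A_{\vec s,\chi}$ for a uniformly random $\vec s$, then the constructed public key is exactly a legitimate public key of the cryptosystem, and the produced ciphertexts are honestly distributed encryptions of $0$ and $1$. Since $W$ distinguishes encryptions of $0$ from encryptions of $1$ with non-negligible advantage (averaged over keys), there is a non-negligible fraction of $\vec s$ for which $|p_0 - p_1|$ is non-negligible.

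\emph{Uniform case:} if the inputs come from $U$, I apply Claim~\ref{subset_sum_uniform} to the group $G = \Z_p^n \times \Z_p$, whose size is $p^{n+1}$. Since $m \ge (1+\eps)(n+1)\log p$, we have $\sqrt{|G|/2^m} \le 2^{-\eps(n+1)(\log p)/2}$, which is negligible. Hence with probability at least $1 - 2^{-\Omega(n)}$ over the choice of $(\vec a_i, b_i)_{i=1}^m$, the distribution of $\bigl(\sum_{i\in S}\vec a_i,\;\sum_{i\in S} b_i\bigr)$ over a uniformly random subset $S \subseteq [m]$ is within negligible statistical distance of the uniform distribution on $\Z_p^n \times \Z_p$. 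The encryption of $1$ differs only by adding the constant $(\vec 0, \lfloor p/2\rfloor)$, which leaves the uniform distribution invariant. Therefore both $p_0$ and $p_1$ are negligibly close to $W$'s acceptance probability on a uniformly random pair from $\Z_p^n \times \Z_p$, so $|p_0 - p_1|$ is negligible.

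Combining the two cases, $Z$'s threshold test distinguishes the two input distributions whenever the original $\vec s$ lies in the non-negligible good set identified in the real case. The main technical point is to make sure the polynomial number of calls to $W$ by $Z$ is enough so that the Chernoff estimation error is much smaller than the non-negligible gap $|p_0 - p_1|$ in the real case and is still smaller than the negligible quantity it needs to beat in the uniform case; this is routine since both sides of the gap are polynomially bounded away from each other on the good set of $\vec s$.
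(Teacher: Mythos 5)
Your proof is correct, and it takes a mildly different (and arguably cleaner) decomposition than the paper. The paper first replaces $W$ by a modified algorithm $W'$ which, by a triangle inequality, has a non-negligible gap between its acceptance probability on encryptions of $0$ and its acceptance probability on a \emph{uniformly random} pair $(\vec a, b)$; the distinguisher $Z$ then estimates $p_0$ against this "uniform reference" $p_u$. You instead keep $W$ as is and have $Z$ estimate $p_0$ against $p_1$ directly. Both approaches invoke Claim~\ref{subset_sum_uniform} in the same way in the uniform case (you correctly observe that the encryption of $1$ is just a shift of the encryption of $0$, so both collapse to near-uniform), and both use Markov-type averaging to pass from the hypothesis on $W$ to a good set of secret keys. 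What the paper's $W'$ construction buys is that $p_u$ is a fixed quantity depending only on the public key and not on the subset-sum encryption process, making the uniform-case analysis one-sided rather than two-sided; what your route buys is that you avoid the extra case split and auxiliary algorithm. One small imprecision worth flagging: in your "real case" you write that there is a non-negligible fraction of $\vec s$ for which $|p_0 - p_1|$ is non-negligible, but $p_0$ and $p_1$ as $Z$ estimates them depend on the realized public key $(\vec a_i, b_i)_{i=1}^m$, not just on $\vec s$. So you need two layers of averaging: first over $\vec s$ to get a good set $Y$, and then, for $\vec s \in Y$, over the public key to conclude that with probability $\ge 1/\poly(n)$ the realized public key has a non-negligible $|p_0 - p_1|$ gap. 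You gesture at this in the last paragraph ("this is routine"), and it indeed is, but the paper spells out this second averaging explicitly, and a complete write-up should too.
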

\begin{proof}
Let $p_0(W)$ be the acceptance probability of $W$ on input $((\vec{a}_i,b_i)_{i=1}^m,(\vec{a},b))$ where $(\vec{a},b)$
is an encryption of $0$ with the public key $(\vec{a}_i,b_i)_{i=1}^m$ and the probability is taken over the randomness
in the choice of the private and public keys and over the randomness in the encryption algorithm. We define $p_1(W)$
similarly for encryptions of $1$ and let $p_u(W)$ be the acceptance probability of $W$ on inputs
$((\vec{a}_i,b_i)_{i=1}^m,(\vec{a},b))$ where $(\vec{a}_i,b_i)_{i=1}^m$ are again chosen according to the private and
public keys distribution but $(\vec{a},b)$ is chosen uniformly from $\Z_p^n \times \Z_p$. With this notation, our
hypothesis says that $\abs{p_0(W)-p_1(W)} \ge \frac{1}{n^c}$ for some $c>0$.

We now construct a $W'$ for which $\abs{p_0(W')-p_u(W')} \ge \frac{1}{2n^c}$. By our hypothesis, either
$\abs{p_0(W)-p_u(W)} \ge \frac{1}{2n^c}$ or $\abs{p_1(W)-p_u(W)} \ge \frac{1}{2n^c}$. In the former case we take $W'$
to be the same as $W$. In the latter case, we construct $W'$ as follows. On input
$((\vec{a}_i,b_i)_{i=1}^m,(\vec{a},b))$, $W'$ calls $W$ with $((\vec{a}_i,b_i)_{i=1}^m,(\vec{a},\frac{p-1}{2}+b))$.
Notice that this maps the distribution on encryptions of $0$ to the distribution on encryptions of $1$ and the uniform
distribution to itself. Therefore, $W'$ is the required distinguisher.

For $\vec s \in \Z_p^n$, let $p_0(\vec{s})$ be the probability that $W'$ accepts on input $((\vec{a}_i,b_i)_{i=1}^m,(\vec{a},b))$ where
$(\vec{a}_i,b_i)_{i=1}^m$ are chosen from $A_{\vec{s},\chi}$, and $(\vec{a},b)$ is an encryption of $0$ with the public
key $(\vec{a}_i,b_i)_{i=1}^m$. Similarly, define $p_u(\vec{s})$ to be the acceptance probability of $W'$ where
$(\vec{a}_i,b_i)_{i=1}^m$ are chosen from $A_{\vec{s},\chi}$, and $(\vec{a},b)$ is now chosen uniformly at random from
$\Z_p^n \times \Z_p$. Our assumption on $W'$ says that $|\Exp_{\vec s}[p_0(\vec s)] - \Exp_{\vec s}[p_u(\vec s)]| \ge \frac{1}{2n^c}$.
Define
$$Y = \left\{ \vec{s} ~\left|~ |p_0(\vec{s}) - p_u(\vec{s})| \ge \frac{1}{4n^c} \right. \right\}.$$
By an averaging argument we get that a fraction of at least $\frac{1}{4n^c}$ of the $\vec{s}$ are in $Y$.
Hence, it is enough to show a distinguisher $Z$ that
distinguishes between $U$ and $A_{\vec{s},\chi}$ for any $\vec{s} \in Y$.

In the following we describe the distinguisher $Z$. We are given a distribution $R$ that is either $U$ or
$A_{\vec{s},\chi}$ for some $\vec{s} \in Y$. We take $m$ samples $(\vec{a}_i,b_i)_{i=1}^m$ from $R$. Let
$p_0((\vec{a}_i,b_i)_{i=1}^m)$ be the probability that $W'$ accepts on input $((\vec{a}_i,b_i)_{i=1}^m,(\vec{a},b))$
where the probability is taken on the choice of $(\vec{a},b)$ as an encryption of the bit 0 with the public key
$(\vec{a}_i,b_i)_{i=1}^m$. Similarly, let $p_u((\vec{a}_i,b_i)_{i=1}^m)$ be the probability that $W'$ accepts on input
$((\vec{a}_i,b_i)_{i=1}^m,(\vec{a},b))$ where the probability is taken over the choice of $(\vec{a},b)$ as a uniform
element of $\Z_p^n \times \Z_p$. By applying $W'$ a polynomial number of times, the distinguisher $Z$ estimates both $p_0((\vec{a}_i,b_i)_{i=1}^m)$ and
$p_u((\vec{a}_i,b_i)_{i=1}^m)$ up to an additive error of $\frac{1}{64n^c}$. If the two estimates differ by more than
$\frac{1}{16n^c}$, $Z$ accepts. Otherwise, $Z$ rejects.

We first claim that when $R$ is the uniform distribution, $Z$ rejects with high probability. In this case,
$(\vec{a}_i,b_i)_{i=1}^m$ are chosen uniformly from $\Z_p^n \times \Z_p$. Using Claim~\ref{subset_sum_uniform} with the
group $G=\Z_p^n \times \Z_p$, we obtain that with probability exponentially close to $1$, the distribution on
$(\vec{a},b)$ obtained by encryptions of 0 is exponentially close to the uniform distribution on $\Z_p^n \times \Z_p$.
Therefore, except with exponentially small probability,
 $$|p_0((\vec{a}_i,b_i)_{i=1}^m) - p_u((\vec{a}_i,b_i)_{i=1}^m)| \le 2^{-\Omega(n)}.$$
Hence, our two estimates differ by at most $\frac{1}{32n^c} + 2^{-\Omega(n)}$, and $Z$ rejects.

Next, we show that if $R$ is $A_{\vec{s},\chi}$ for $\vec{s} \in Y$ then $Z$ accepts with probability $1/\poly(n)$.
Notice that $p_0(\vec{s})$ (respectively, $p_u(\vec{s})$) is the average of $p_0((\vec{a}_i,b_i)_{i=1}^m)$
(respectively, $p_u((\vec{a}_i,b_i)_{i=1}^m)$) taken over the choice of $(\vec{a}_i,b_i)_{i=1}^m$ from
$A_{\vec{s},\chi}$. From $|p_0(\vec{s}) - p_u(\vec{s})| \ge \frac{1}{4n^c}$ we obtain by an averaging argument that
 $$ |p_0((\vec{a}_i,b_i)_{i=1}^m) - p_u((\vec{a}_i,b_i)_{i=1}^m)| \ge \frac{1}{8n^c}$$
with probability at least $\frac{1}{8n^c}$ over the choice of $(\vec{a}_i,b_i)_{i=1}^m$ from $A_{\vec{s},\chi}$. Hence,
with probability at least $\frac{1}{8n^c}$, $Z$ chooses such a $(\vec{a}_i,b_i)_{i=1}^m$ and since our estimates are
accurate to within $\frac{1}{64n^c}$, the difference between them is more than $\frac{1}{16n^c}$ and $Z$ accepts.
\end{proof}

\subsection*{Acknowledgments}

I would like to thank Michael Langberg, Vadim Lyubashevsky, Daniele Micciancio, Chris Peikert, Miklos Santha,
Madhu Sudan, and an anonymous referee for useful comments.

\bibliographystyle{abbrv}
\bibliography{qcrypto}

\end{document}